\documentclass[10pt]{article}
\usepackage{graphicx} %
\usepackage{fullpage}
\usepackage{titling}
\usepackage{natbib}
\usepackage{amssymb}
\usepackage{xfrac}
\let\nicefrac\sfrac

\usepackage{bm, bbm, xspace, complexity, amsmath, amssymb, physics2, multirow, hhline, makecell, mathtools, nicematrix, mleftright}
\usepackage[table]{xcolor}

\mleftright

\usephysicsmodule{qtext.legacy,ab}

\newcommand{\delimit}[3]{
    \NewDocumentCommand#1{sO{}m}{
    \IfBooleanTF##1
    {\left#2 ##3 \right#3}
    {\mathopen{##2 #2} ##3 \mathclose{##2 #3}}%
    }
}

\delimit\ceil\lceil\rceil
\delimit\floor\lfloor\rfloor
\delimit\ip\langle\rangle
\delimit\norm\lVert\rVert
\delimit\abs\lvert\rvert

\def\va{{\bm{a}}}
\def\vb{{\bm{b}}}

\def\vm{{\bm{m}}}

\def\vu{{\bm{u}}}
\def\vv{{\bm{v}}}

\def\vx{{\bm{x}}}
\def\vy{{\bm{y}}}

\renewcommand\vec\bm

\newcommand{\mat}{\mathbf}

\def\mA{{\mathbf{A}}}
\def\mB{{\mathbf{B}}}

\def\mI{{\mathbf{I}}}

\def\mK{{\mathbf{K}}}

\def\mU{{\mathbf{U}}}

\def\cA{{\mathcal{A}}}
\def\cB{{\mathcal{B}}}

\def\cD{{\mathcal{D}}}
\def\cE{{\mathcal{E}}}
\def\cF{{\mathcal{F}}}

\def\cM{{\mathcal{M}}}

\def\cQ{{\mathcal{Q}}}

\def\cT{{\mathcal{T}}}

\def\cX{{\mathcal{X}}}
\def\cY{{\mathcal{Y}}}

\DeclareMathOperator*{\argmax}{argmax}
\let\E\relax
\DeclareMathOperator*{\E}{\mathbb E}

\renewcommand{\R}{\mathbb R}
\newcommand{\N}{\mathbb N}

\renewcommand{\S}{\mathbb S}

\let\ds\displaystyle
\let\op\operatorname
\let\eps\epsilon

\newcommand{\unif}{\op{unif}}
\newcommand{\supp}{\op{supp}}
\newcommand{\ind}{\mathbbm{1}}
\newcommand{\Id}{\op{Id}}
\newcommand{\defeq}{:=}

\newcommand{\ie}{{\em i.e.}\xspace}
\newcommand{\eg}{{\em e.g.}\xspace}
\newcommand{\conv}{\op{conv}}

\let\grad\nabla

\newcommand{\shellelips}{\textsf{ShellEllipsoid}}
\newcommand{\shellgd}{\textsf{ShellGD}}
\newcommand{\shellproj}{\textsf{ShellProject}}
\newcommand{\enfuns}{\Phi_{\vec{m}}}
\newcommand{\tilY}{\tilde{\cY}}
\newcommand{\tilPhi}{\tilde{\Phi}}
\newcommand{\tilphi}{\tilde{\phi}}

\DeclareMathOperator*{\vol}{vol}

\usepackage{tikz-cd}
\usepackage{forest}
\usetikzlibrary{calc, positioning, external, arrows.meta}

\usepackage[colorlinks, linkcolor=magenta, citecolor=teal]{hyperref}
\usepackage{amsthm}

\usepackage[vlined,ruled]{algorithm2e} %

\SetAlFnt{\small}
\SetAlCapFnt{\small}
\SetAlCapNameFnt{\small}
\SetAlCapHSkip{0pt}
\IncMargin{-\parindent}

\usepackage[capitalize, nameinlink]{cleveref}
\usepackage{autonum}

\usepackage{lineno}

\usepackage{enumitem}

\theoremstyle{plainnat}
\newtheorem{theorem}{Theorem}[section]
\newtheorem*{theorem*}{Theorem}

\newtheorem{lemma}[theorem]{Lemma}

\newtheorem{proposition}[theorem]{Proposition}
\newtheorem{corollary}[theorem]{Corollary}
\newtheorem{question}{Question}
\makeatletter\let\c@question\c@theorem\makeatother
\theoremstyle{definition}
\newtheorem{definition}[theorem]{Definition}
\newtheorem{remark}[theorem]{Remark}

\definecolor{briancolor}{rgb}{.2, 0, 1}

\newcommand{\Reg}{\mathsf{Reg}}
\newclass{\UEOPL}{UEOPL}
\newclass{\SSG}{SSG}
\newclass{\CLS}{CLS}
\newclass{\Contr}{Contr}
\newclass{\UnkContr}{UnkContr}

\allowdisplaybreaks

\title{On the Complexity of Correlated Equilibria \\Beyond Normal-Form Games}

\usepackage{authblk}

\author[1]{Ioannis Anagnostides}
\author[2]{Constantinos Daskalakis}
\author[2]{Gabriele Farina}
\author[3,4]{\authorcr Noah Golowich}
\author[1,5]{Tuomas Sandholm}
\author[2]{Brian Hu Zhang}

\affil[1]{Carnegie Mellon University}
\affil[2]{Massachusetts Institute of Technology}
\affil[3]{Microsoft Research}
\affil[4]{University of Texas, Austin}
\affil[5]{Additional affiliations: Strategy Robot, Inc., Strategic Machine, Inc., Optimized Markets, Inc.}
\affil[ ]{}
\affil[ ]{\texttt{\{ianagnos,sandholm\}}\texttt{@cs.cmu.edu}, \texttt{\{costis,gfarina,zhangbh\}}\texttt{@mit.edu}, \texttt{nzg@cs.utexas.edu}}

\date\today

\begin{document}

\pagestyle{empty}

\begin{titlepage}
\maketitle

\thispagestyle{empty}

\begin{abstract}
    Correlated equilibria are a fundamental solution concept in game theory, introduced in seminal work by Robert Aumann in the 1970s. A textbook fact is that correlated equilibria enjoy more desirable computational properties \emph{vis-\`a-vis} Nash equilibria: they can be computed in polynomial time in normal-form games via linear programming, and are also amenable to learning via decentralized \emph{no-swap-regret} dynamics. However, despite decades of research, the complexity beyond games of polynomial type---such as extensive-form games, congestion or routing games, and more broadly concave games---has remained a major open problem, first highlighted by Papadimitriou and Roughgarden (JACM '08).
    
    In this paper, we resolve several long-standing questions concerning the complexity of correlated equilibria and swap regret minimization. First, we show that computing a correlated equilibrium in concave quadratic games is as hard as computing the fixed point of a contraction mapping (\Contr), providing the first strong evidence of intractability. Moreover, we establish an unconditional, information-theoretic lower bound ruling out the existence of a strongly sublinear swap regret minimizer: any online learning algorithm requires exponentially many iterations in the dimension $d$ to guarantee at most $1/\poly(d)$ (average) swap regret. 
    
    To circumvent these hardness results, we examine the complexity of $\Phi$-equilibria---tractable relaxations of correlated equilibria. We obtain a fully polynomial-time approximation scheme (FPTAS) for computing poly-dimensional $\Phi$-equilibria in general concave games. We complement this by showing that \Contr-hardness persists even under poly-dimensional swap deviations in the regime where the precision $\epsilon$ is exponentially small. Finally, we show that \Contr-hardness can be bypassed in the canonical setting of concave \emph{quadratic games}, for which we provide a $\poly(d, \log(1/\epsilon))$-time algorithm for computing poly-dimensional $\Phi$-equilibria. As a byproduct, we obtain an algorithm for computing fixed points of a mapping that is contracting with respect to an \emph{unknown Mahalanobis norm}, which could be of independent interest.
\end{abstract}
\newpage
\thispagestyle{empty}
\tableofcontents
\end{titlepage}

\pagestyle{plain}

\section{Introduction}

\emph{Correlated equilibria} were famously put forward in a pathbreaking work of~\citet{Aumann74:Subjectivity} as a manifestation of Bayesian rationality in games. Unlike \emph{Nash equilibria}~\citep{Nash50:Equilibrium}, the predominant solution concept in noncooperative games, correlated equilibria make allowance for \emph{correlated randomization} between players, typically modeled via a mediator---a trusted third party. As such, correlated equilibria can unlock more desirable outcomes~\citep{Rudov25:Extreme,Ashlagi08:Value}. Moreover, the set of correlated equilibria can be captured via linear constraints, and therefore form a convex polytope. In particular, a correlated equilibrium can be computed in polynomial time in explicitly represented normal-form games, and even in many succinct multi-player games~\citep{Papadimitriou08:Computing}; this stands in stark contrast to Nash equilibria, which are \PPAD-hard to compute even in two-player games~\citep{Daskalakis08:Complexity,Chen09:Settling}. Adding to its plausibility, a correlated equilibrium can be obtained in a decentralized way through the repeated interaction of \emph{no-swap-regret} players~\citep{Blum07:From,Stoltz07:Learning}.

While correlated equilibria can be phrased as solutions to a linear program, the complexity of computing one turns out to be significantly nuanced. Existing algorithms---most notably swap regret minimization~\citep{Blum07:From} and ellipsoid against hope of~\citet{Papadimitriou08:Computing}---crucially hinge on the game being of \emph{polynomial type}, and in particular that the number of pure strategies of each player is polynomial in the description of the game. This assumption, however, is often violated. Many practical scenarios feature either sequential moves and/or acting upon observing the transitions in the state of the environment, leading to a combinatorial growth in each player's strategy set. Canonical examples include \emph{extensive-form games}~\citep{Shoham08:Multiagent} and \emph{congestion games}~\citep{Rosenthal73:Class}---such as networking routing games~\citep{Roughgarden07:Routing}, \emph{stochastic games}~\citep{Shapley53:Value,Condon92:Complexity}, and \emph{Blotto games}~\citep{Borel21:Theorie}.
Existing linear programming formulations for normal-form correlated equilibrium computation in such settings have at least exponentially many variables \emph{and} exponentially many constraints, presenting a daunting algorithmic challenge.\footnote{It is common to refer to correlated equilibria in such settings as {\em normal-form correlated equilibria (NFCEs)} to disambiguate from other relaxed correlated equilibrium notions; we follow this convention henceforth.}

As a result, computing NFCEs beyond games of polynomial type has remained a major open question in the field, as articulated by~\citet{Papadimitriou08:Computing} and~\citet{Stengel08:Extensive}. Meanwhile, the problem is not amenable to the standard toolkit of the $\TFNP$ family of problems for the purposes of establishing computational intractability results. {A key obstacle for establishing intractability results is that the set of NFCEs is convex; for example, to our knowledge, this immediately rules out all known techniques for showing \PPAD-hardness, as these techniques do not operate with convex solution sets.} All in all, it seems that significantly new techniques should be developed to capture the complexity of the problem. This is the main motivating question for this work.

\begin{question}
    \label{question:NFCE}
    What is the complexity of computing normal-form correlated equilibria beyond games of polynomial type, for example in extensive-form games?
\end{question}

We are interested in studying this question for games of exponential type and beyond. A convenient abstraction for this class is to consider {\em concave games}, which have been of considerable interest since their introduction in the seminal work of~\citet{Rosen65:Existence}. These model settings in which each player's strategy set is a convex compact subset of Euclidean space, and the utility of each player is concave in that player's strategy when the other players remain fixed (precise definitions are provided in~\Cref{sec:prels}). In particular, in a sharp contrast to the standard case of normal-form games, the strategy set here is infinite (though Nash equilibria still exist~\citep{Rosen65:Existence}). Normal-form games and extensive-form games, for example, are special cases of concave games in which the utilities are  linear, and the strategy sets are simplices and specially-structured polytopes (``tree-form strategy sets'') respectively.

Recent breakthrough results~\citep{Peng24:Fast,Dagan24:From} provided a polynomial-time approximation scheme ($\PTAS$) for computing normal-form correlated equilibria in concave games, but the complexity remains wide open beyond the regime where the precision is an absolute constant. Specifically, these prior algorithms are based on no-regret learning algorithms that are \emph{weakly sublinear}, in the sense that their (time-average) regret is just barely vanishing, at a rate of $1/\log T$ for any time horizon $T$. As such, they require $\exp(\Omega(1/\epsilon))$ rounds to guarantee that the (time-average) swap regret of the learner is at most $\epsilon$. An important open question is whether there exists a \emph{strongly sublinear} online algorithm for minimizing swap regret while at the same time incurring a polynomial dependence on the dimension.

\begin{question}
    \label{question:online}
    Is there a strongly sublinear adversarial online learning algorithm, \ie, one with regret $\poly(d) \cdot T^{-c}$ where $c > 0$ is a constant, for minimizing swap regret over  an arbitrary convex compact set $\cX \subset \R^d$, beyond the case of simplices and linear utilities?
\end{question}

A negative answer to this question would imply that normal-form correlated equilibria elude computation via no-regret learning, which is a mainstay algorithmic paradigm for equilibrium computation, providing strong evidence for intractability beyond games of polynomial type. On top of the computational complexity implications, \Cref{question:online} is also of interest from an information-theoretic standpoint, and ties to many other problems beyond games in which swap regret plays a crucial role, such as calibration (\Cref{sec:related} elaborates on such connections).

Beyond online learning, we will also be interested in the complexity of computing an equilibrium (by any method, learning or otherwise). 

\begin{question}
    \label{question:concave}
    What is the complexity of computing normal-form correlated equilibria in concave games? Does the complexity landscape change under concave quadratic utilities?
\end{question}

The apparent inability to compute normal-form correlated equilibria beyond games of polynomial type has shifted the research focus to tractable relaxations thereof, primarily stemming from the framework of \emph{$\Phi$-equilibria}~\citep{Greenwald03:General,Stoltz07:Learning}. The set $\Phi$ here comprises strategy deviations; the richer the set $\Phi$, the stronger the induced solution concept. Normal-form correlated equilibria correspond to $\Phi$ containing all possible strategy deviations. At the other end of the spectrum, (normal-form) \emph{coarse} correlated equilibria~\citep{Moulin78:Strategically} correspond to $\Phi$ consisting only of constant transformations, and are amenable to usual regret minimization techniques~\citep{Hazan16:Introduction}---even in exponential-type games. The line of work on $\Phi$-regret minimization has flourished in recent years, especially in the context of extensive-form games~\citep{Farina22:Simple,Bai22:Efficient}.

One natural relaxation that lies in between those two extremes is the \emph{linear correlated equilibrium (LCE)}. LCEs are based on the notion of \emph{linear swap regret}, wherein $\Phi$ comprises all linear transformations~\citep{Gordon08:No}, and they form a subset of many other common notions including the {\em extensive-form correlated equilibrium (EFCE)}~\citep{Stengel08:Extensive}. Recent work by~\citet{Daskalakis25:Efficient}  provided the first efficient algorithms for computing linear correlated equilibria, both in the regret minimization setting and the equilibrium computation setting, positing only oracle access to the underlying strategy sets. In fact, these positive results of~\citet{Daskalakis25:Efficient} directly apply even in concave games. Approaching closer to swap regret, one can consider a set $\Phi$ with \emph{polynomial dimension}~\citep{Gordon08:No,Zhang25:Learning}; low degree polynomials constitute a canonical such example~\citep{Zhang24:Efficient}. \citet{Zhang25:Learning} provided both an efficient online learning algorithm and a $\polylog(1/\epsilon)$ equilibrium computation procedure for this broader set of deviations \emph{under (multi)linear utilities}. A key question is whether these positive results can be extended in concave games. We summarize this final research question below.

\begin{question}
    \label{question:polydim}
    When the set $\Phi$ has polynomial dimension, is there a polynomial-time algorithm for computing a $\Phi$-equilibrium in concave games? Is there an efficient learning algorithm for minimizing $\Phi$-regret under concave quadratic utilities?
\end{question}

\subsection{Our results}

We resolve several outstanding open questions concerning the complexity of computing normal-form correlated equilibria and swap regret minimization. We gather our main contributions in~\Cref{tab:summary}.

\begin{table}[t]
\definecolor{newcolor}{RGB}{192,255,192}
    \centering
\small
    \renewcommand{\arraystretch}{1.4}

\definecolor{goodcolor}{RGB}{192,255,192}
\definecolor{badcolor}{RGB}{255,192,192}
    \begin{NiceTabular}{c|c|c|c|c}[hvlines-except-borders]
    $\Phi$ $\downarrow$ & \bf Utilities $\rightarrow$ & Multilinear & Concave quadr. & Concave \\
    \Block{2-1}{Swap (NFCE)} & EqComp & {\em Open} & \Block[fill=badcolor]{1-2}{\UnkContr-hard [Thm~\ref{th:concave quadratic nfce hard}]}  \\
    & RegMin & \Block[fill=badcolor]{1-3}{Impossible [Thm~\ref{th:efg-lower}]} \\
    \Block{2-1}{Poly-dimensional} & EqComp & \Block[]{2-1}{$\P$ \citep{Zhang25:Learning}}& \cellcolor{goodcolor} $\P$ [Thm~\ref{th:concave quadratic eqcomp}] & \cellcolor{badcolor} \Contr-hard [Thm~\ref{th:concave efg hard}]\\
    &  RegMin &  &  \Block[fill=goodcolor]{1-2}{$\P$ [Thm~\ref{th:rm concave}]} \\
    Linear & EqComp or RegMin & \Block[]{1-3}{$\P$ \citep{Daskalakis25:Efficient}*} \\
    \end{NiceTabular}
    \caption{Summary of results. Shaded table cells are new results for this paper; the colors distinguish positive results from negative ones. ``EqComp'' and ``RegMin'' stand for equilibrium computation and adversarial regret minimization respectively. The desired dependence on the dimension $d$ of the strategy sets is polynomial. The desired dependence on the precision $\eps$ is $\polylog(1/\eps)$ for EqComp, and $\poly(1/\eps)$ for RegMin. *: While they only stated their results for linear utilities, the algorithms of \citet{Daskalakis25:Efficient} generalize immediately to the concave setting, because they completely circumvent the issues of \Cref{sec:cefp} by computing actual fixed points of $\phi$ rather than expected fixed points, which is easy when $\phi$ is linear.}
    \label{tab:summary}
\end{table}

\paragraph{Hardness for concave games.}
We first show that computing a normal-form correlated equilibrium in concave games is \Contr-hard, that is, as hard as finding the (unique) fixed point of a contraction mapping; in turn, this means that computing a normal-form correlated equilibrium is at least as hard as solving \emph{simple stochastic games} ($\SSG$)~\citep{Condon92:Complexity}, and more broadly, finding a min-max equilibrium for Shapley's (zero-sum) stochastic games~\citep{Shapley53:Stochastic,Etessami07:Complexity}. In fact, we show a stronger hardness result, reducing from an even broader class of problems in which the underlying norm under which contraction holds is unknown (abbreviated as~\UnkContr, introduced in~\Cref{def:UnkContr}).

\begin{theorem}[Informal version of \Cref{th:concave quadratic nfce hard}]
    \label{theorem:informal-contrahard}
    Computing a normal-form correlated equilibrium of a concave game is \UnkContr-hard. This holds even in two-player games and quadratic (concave) utilities.
\end{theorem}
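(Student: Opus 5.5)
The plan is to reduce from \UnkContr: we are given (a circuit for) a map $F:\cX\to\cX$ on a convex compact set that is a contraction with respect to some unknown norm, and we must output an approximate fixed point. We build a two-player concave quadratic game. Player 1 picks $\vx\in\cX$, Player 2 picks $\vy\in\cX$. Their utilities are
\[
u_1(\vx,\vy) = -\norm{\vx - F(\vy)}_2^2 \quad\text{(or its linearization in $\vx$)}, \qquad u_2(\vx,\vy) = -\norm{\vy-\vx}_2^2 .
\]
Player 1's utility must be quadratic in the joint profile, so we cannot compose with a general $F$ directly. The intended fix is to use the \emph{affine} contraction instances that \UnkContr\ allows; these are exactly the case that motivates Mahalanobis norms in the abstract. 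If needed, we introduce extra players or coordinates to simulate the nonlinear gates of $F$ with quadratic penalties. Each utility is concave in the player's own strategy.

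The key step is to analyze an arbitrary (approximate) NFCE $\mu$, a distribution over pairs $(\vx,\vy)$. We use swap deviations $\phi$, which may be arbitrary functions of the recommended strategy. Since utilities are strictly concave quadratics in the own strategy, conditional on the recommendation $\vx$ the best deviation is the conditional mean target. Concretely, no-swap-regret for Player 2 forces $\vy\approx \vx$ almost surely under $\mu$, since deviating to $\phi(\vy)=\E[\vx\mid \vy]$ gains the conditional variance. Similarly, Player 1's incentive constraint forces $\vx\approx \E[F(\vy)\mid \vx]$. Combining these, $\mu$ is concentrated near pairs with $\vx\approx\vy$ and $\vx\approx\E[F(\vx')\mid\vx]$, i.e.\ an \emph{expected} fixed point of $F$ in the sense discussed in \Cref{sec:cefp}.

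The main obstacle is passing from expected fixed points to an actual fixed point. Here contraction in the unknown norm $\norm{\cdot}$ is used. Let $\vx^*$ be the fixed point. Take the random variable $\vx\sim\mu$ and set $D=\E\norm{\vx-\vx^*}$. Jensen's inequality and contraction give $\norm{\E[F(\vx')\mid\vx]-\vx^*}\le \E[\norm{F(\vx')-\vx^*}\mid\vx]\le\gamma\,\E[\norm{\vx'-\vx^*}\mid\vx]$. Combined with the approximate equalities, this yields $D\le\gamma D+O(\eps)$, so $D=O(\eps/(1-\gamma))$. Therefore any sample from $\mu$, or its mean, is an approximate fixed point. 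Norm equivalence constants between $\norm{\cdot}$ and $\ell_2$ enter polynomially; we choose $\eps$ exponentially small relative to these constants. Care is needed because the variance bounds hold in $\ell_2$ while contraction holds in the unknown norm. We will handle this by requiring the \UnkContr\ promise to bound the distortion. We also need the equilibrium's support to be representable, so we assume the NFCE is output as a finite-support distribution of polynomial size. Finally we check polynomial-time computability of the reduction, which completes the \UnkContr-hardness.
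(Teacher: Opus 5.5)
There are genuine gaps, beginning with the construction. You have misread what ``quadratic game'' means here. The paper only requires $u_i(\vx_i,\vx_{-i}) = \vb_i(\vx_{-i})^\top\vx_i - \vx_i^\top\mA_i(\vx_{-i})\vx_i$, with $\vb_i,\mA_i$ \emph{arbitrary} functions of the opponent's strategy. So $-\norm{\vx-F(\vy)}_2^2$ is admissible as is: $F$ enters only through the linear coefficient $2F(\vy)$ and a term independent of $\vx$. No workaround is needed, and the ones you propose would destroy the result. \UnkContr\ does not restrict $f$ to be affine. For affine $f$, a fixed point (which automatically has $Q=0$, since $Q(\vx)=Q(f(\vx))\le(1-\gamma)Q(\vx)$) can be found in polynomial time: recover $f$ from $d+1$ queries, then solve the linear fixed-point equation over $\cX$. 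A reduction from affine instances therefore proves no hardness at all. Simulating nonlinear gates with extra players is unspecified and would contradict the two-player claim. Once the misreading is fixed, your game is exactly the paper's game ($u_1=-\norm{\vx-\vy}_2^2$, $u_2=-\norm{\vy-f(\vx)}_2^2$) with the players relabeled.

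The analysis also contains a false step. Player 2's constraint does not force $\vy\approx\vx$ almost surely. The deviation $\phi(\vy)=\E[\vx\mid\vy]$ gains exactly $\E\norm{\vy-\E[\vx\mid\vy]}_2^2$, not the conditional variance, which no swap deviation can reduce. You therefore only have $\vy\approx\E[\vx\mid\vy]$ and $\vx\approx\E[F(\vy)\mid\vx]$ in mean square. Because $F$ is nonlinear, these do not give $\vx\approx\E[F(\vx')\mid\vx]$ with $\vx'$ distributed like $\vx$, which is what your chain uses.

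There are two correct routes. The paper's route analyzes the copying player $\vy$ (in your labeling). Its relations compose linearly into $\vy\approx\E_{\vx\mid\vy,\,\vy'\mid\vx}F(\vy')$, with $\vy'$ distributed like $\vy$, and its lemma then gives $\E[Q(\vy)-Q(F(\vy))]\le 2\sqrt\eps$. Alternatively, keep $\vx$ and push the potential through each conditional mean separately using convexity and $1$-Lipschitzness: $Q(\vx)\le(1-\gamma)\E[Q(\vy)\mid\vx]+\norm{\vx-\E[F(\vy)\mid\vx]}_2$ and $Q(\vy)\le\E[Q(\vx)\mid\vy]+\norm{\vy-\E[\vx\mid\vy]}_2$, which together give $\gamma\E Q(\vx)\le 2\sqrt\eps$.

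Either route needs only convexity, $1$-Lipschitzness in $\ell_2$, and $Q\circ f\le(1-\gamma)Q$; it needs no fixed point $\vx^*$ and no norm. Your argument as written treats only $Q=\norm{\cdot-\vx^*}$, which is not the general convex $Q$ of \UnkContr. ``Requiring the promise to bound the distortion'' changes the source problem, which a hardness reduction cannot do. It is also unnecessary, because only the Lipschitz direction is used and that is already part of the promise. Finally, the error terms are $O(\sqrt\eps)$, not $O(\eps)$, so one takes $\eps=(\gamma\delta)^2/4$.
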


This provides the first evidence that computing normal-form correlated equilibria in games is intractable, corroborating a long-standing belief in the literature~\citep{Stengel08:Extensive}. Barring a major algorithmic breakthrough, \Cref{theorem:informal-contrahard} suggests that there is no polynomial-time algorithm for computing a correlated equilibrium beyond polynomial-type games.

As discussed earlier, from a technical standpoint, a key challenge with proving hardness results for normal-form correlated equilibria is that the set of solutions is convex. For example, we are not aware of any \PPAD-hardness reductions that operate under a convex set of solutions. {As such, our approach is to connect NFCEs to other challenging problems with convex solutions, namely finding fixed points of norm contractions and min-max equilibria of stochastic games.} The closest result to our setting is the \SSG-hardness shown by~\citet{Mehta14:Constant} concerning the complexity of Nash equilibria under the promise that the set of Nash equilibria is convex.  Our key observation is that, starting from the usual \emph{imitation game}~\citep{Babichenko21:Settling,Babichenko16:Query,Roughgarden16:Communication,Babichenko17:Communication,Babichenko20:Communication,Goos23:Near}, if the underlying mapping $f$ is assumed to be a contraction, any point in the support of an NFCE must contain a fixed point of $f$, leading to~\Cref{theorem:informal-contrahard}.

\paragraph{Swap regret minimization.} Next, we turn to the online learning setting. We show that there is no strongly sublinear learner for minimizing swap regret, resolving~\Cref{question:online}.

\begin{theorem}[Informal version of \Cref{th:efg-lower}]
    \label{theorem:swaponline}
    In extensive-form games, for $\epsilon = \poly(1/d)$, any learner that guarantees at most $\epsilon$ expected swap regret requires at least $\exp(\poly(d))$ iterations. This holds even under a linear sequence of utilities and an oblivious adversary.
\end{theorem}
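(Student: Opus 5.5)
The plan is to reduce from the known lower bounds for swap regret over the simplex with many actions, and to embed a simplex with exponentially many vertices into a tree-form (extensive-form) strategy set of dimension $d$. The starting point is the lower bound for swap regret minimization over $N$ actions with linear (bounded) utilities under an oblivious adversary: to reach average swap regret $\epsilon$, any learner needs roughly $T \ge \min\{\exp(\Omega(\epsilon^{-1/6})),\ N/\poly(\epsilon)\}$ rounds. Recent work by Dagan et al.\ and Peng--Rubinstein shows that their upper bounds are essentially tight. I will use the branch that gives $T \gtrsim N$ once $\epsilon$ is polynomially small. Concretely, I would take $N = 2^{\Theta(d^{c})}$ pure strategies and choose $\epsilon = 1/\poly(d)$ small enough that $\exp(\epsilon^{-1/6})$ and $N$ both exceed $\exp(\poly(d))$. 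The conclusion then follows provided the hard instance transfers to an extensive-form strategy set with polynomially many sequences.

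The embedding step goes as follows. I would consider a tree-form decision problem consisting of $k = \poly(d)$ sequential binary decisions with perfect recall but no information revealed, for example a player who chooses bits $b_1,\dots,b_k$. The sequence-form polytope then has dimension $O(k)$ if the information structure is designed appropriately, for instance by making each decision a separate infoset that does not observe earlier choices. Its vertices are the $2^k$ pure plans. A normal-form mixed strategy is a distribution over these pure plans. Swap deviations in the extensive-form game act on pure plans, that is, on vertices, and the utilities are linear in the sequence form. The subtlety is that linear utilities over the sequence-form polytope only access a low-dimensional projection of the distribution over vertices, so one cannot directly copy an arbitrary utility vector over the $N$ actions. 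I would therefore reproduce the structure of the simplex hard instance within what linear-in-$\R^d$ utilities can express. I would try to show that the adversary in the known lower bound can be implemented, or suitably modified, using utility vectors that depend on only polynomially many coordinates of each action, for example via a hypercube $\{0,1\}^k$ where utilities are linear functions of the bits. The adaptive/hierarchical construction of those lower bounds uses nested groups of actions with ``good'' sets shrinking over time, and this maps naturally onto fixing prefixes of bits.

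The key steps, in order, are the following:
\begin{enumerate}
\item Fix the tree-form set, say the hypercube $[0,1]^k$ of bit-marginals, and note that it is realizable as an extensive-form strategy set.
\item Design an oblivious random sequence of linear utilities $u_t(\vx)=\langle \vec{g}_t,\vx\rangle$. Over epochs, each epoch rewards some coordinate direction, and the sign is re-randomized, so that any learner's distribution over vertices can be exploited by a swap function that maps each vertex to a vertex flipping bits in a history-consistent way.
\item Show the following: unless the learner has spent exponentially many rounds, there is a swap $\phi:\{0,1\}^k\to\{0,1\}^k$ that gains $\epsilon$ on average. This comes from a counting/entropy argument: the learner cannot ``know'' the hidden random target among $2^k$ vertices without exploring, while the swap deviation, which is allowed to depend on the played vertex, can condition on the full vertex played.
\item Use Yao's principle to pass from the random oblivious adversary to the deterministic statement about expected swap regret.
\end{enumerate}

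The main obstacle is step 3. Linear utilities on a $k$-dimensional polytope let external and linear-swap regret be minimized efficiently, so any hardness must come from the genuinely nonlinear power of swap deviations, namely maps of the vertices that are not affine. I would first try to adapt the Peng--Rubinstein/Dagan et al.\ multi-scale construction, interpreting their action index as a path in a binary tree and checking that their per-round utility vectors, which are supported on a structured set of actions, can be written as linear functions of the sequence-form vector. The point here is that utilities equal to the indicator of a subtree are linear in the sequence form, because the realization weight of a node is a sequence-form coordinate. If that works, the known $\Omega(N)$-type lower bound with $N=2^k$ transfers directly.
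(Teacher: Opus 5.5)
Your overall framing matches the paper's: reduce from the Dagan et al.\ lower bound (\Cref{th:normalform}), embed exponentially many actions into a poly-dimensional tree-form set, and get hardness from non-affine swap deviations. But there is a genuine gap at your step 3, and the mechanism you propose for it fails. The fact that ``indicators of subtrees are linear in sequence form'' holds only when every prefix is its own sequence, i.e., in a sequential tree where later decisions see earlier ones. There the number of terminal nodes, and hence $d$, is $2^k = N$. In the $O(k)$-dimensional set you fix in step 1 (parallel binary decisions, a hypercube), write $x_{j,b}$ for the sequence-form coordinate of choosing $b$ at decision $j$. Linear utilities are then only affine functions of the bit marginals. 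The adversary of \Cref{th:normalform}, however, plays the indicator of a single action each round, which on the hypercube is the degree-$k$ monomial $\prod_j x_{j,v_j}$; a prefix indicator has degree equal to the prefix length. So no exact rewriting exists, and the $\Omega(N)$-type bound does not ``transfer directly.'' Your fallback (epochs rewarding single coordinates with re-randomized signs, plus a counting/entropy argument) is not specified. It also uses only $2k$ distinct utility vectors, and it is unclear how that could mimic the normal-form hardness, which relies on repeatedly rewarding fresh, unseen actions ($T < N/4$).

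What is missing is an approximate embedding together with a simulation argument. The paper lets the learner first pick $i \in [k]$, with a hypercube below each choice. Each action of group $\cA_i$ is mapped to an independent uniformly random vertex $\tilde\va \in \{\pm 1/\sqrt n\}^n$ of that hypercube. By Hoeffding plus a union bound, $|\langle \tilde\va,\tilde\va'\rangle - \ind\{a=a'\}| \le \eps$ for all pairs once $n \gtrsim \log N/\eps^2$, and the utilities are the linear functions $\langle\cdot,\tilde\vu^{(t)}\rangle$.

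Because these are only approximate indicators, the learner may play vertices partially correlated with one or several revealed actions, or with actions not yet revealed. You must still convert such a learner into a normal-form learner with comparable swap regret. The paper does this in three steps:
\begin{enumerate}
\item Each $\vx$ is rounded to $p_\vx$, which puts mass $\beta = \langle\vx,\tilde\va_{ij}\rangle$ on the largest-index $a_{ij}$ with $\langle\vx,\tilde\va_{ij}\rangle > \eps$.
\item The increasing-order property of the adversary within each $\cA_i$ ensures that once $\tilde\va_{ij}$ has appeared, only $a_{ij}$ or actions whose utility at $\vx$ is at most $\eps$ are rewarded (\Cref{lem:epsilon-bound}). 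This is why the tree needs the top-level choice of $i$.
\item Independence of the unrevealed $\tilde\va_{ij}$ from the learner's past play bounds by $\delta$ the mass placed on correlated vertices beforehand (\Cref{lem:delta-bound}).
\end{enumerate}
The deviation $\phi(\vx) = \E_{a\sim p_\vx}\psi(\bar\phi(a))$ then recovers the normal-form gain up to $O(\eps+\delta)$. The parameter choice $d = kn$, $n \approx \log N/\eps^2$, $k = O(\log T)$ yields the $\exp(\poly(d))$ bound. None of these ingredients appears in your proposal, so step 3 remains unproved.
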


This lower bound is unconditional and information-theoretic: it applies even if the learner is computationally unbounded. The first implication of~\Cref{theorem:swaponline} is that independent no-regret learning, a standard paradigm for computing equilibria in games, inherently fails to converge to normal-form correlated equilibria in polynomial time, providing further evidence for intractability even under (multi)linear utilities. But~\Cref{theorem:swaponline} has implications beyond the computation of normal-form correlated equilibria, in light of the many applications of swap regret (\Cref{sec:related}). For example, \citet{Fishelson25:High} recently used~\Cref{theorem:swaponline},\footnote{A short manuscript containing \Cref{theorem:swaponline} has appeared on arXiv as a preliminary unpublished preprint titled ``A Lower Bound on Swap Regret in Extensive-Form Games.''} in conjunction with the well-known reduction of~\citet{Foster97:Calibrated}, to directly obtain an exponential lower bound on high-dimensional calibration, improving over a quasipolynomial lower bound of~\citet{Peng25:High}.

\paragraph{Hardness of $\Phi$-equilibria in concave games.} 
Our remaining results are about $\Phi$-equilibria. First, we show that the hardness of computing NFCEs in concave games extends to $\Phi$-equilibria.
\begin{theorem}[Informal version of \Cref{th:concave efg hard}]
    \label{cor:contr-polydim}
    Computing a $\Phi$-equilibrium of a two-player concave game with respect to poly-dimensional swap deviations is \Contr-hard.
\end{theorem}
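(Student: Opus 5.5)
The plan is a reduction from \Contr: given a mapping $f:[0,1]^n\to[0,1]^n$ that is $\lambda$-contracting in some fixed norm (say $\ell_\infty$, as for simple stochastic games), we build a two-player concave game whose $\Phi$-equilibria, for a poly-dimensional family $\Phi$, concentrate near the fixed point $x^\star$ of $f$.

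\emph{The game.} We use the imitation-game template. Player~1 chooses $x\in\cX=[0,1]^n$ with utility $u_1(x,y)=-\norm{x-y}_2^2$, so she wants to copy player~2. Player~2 chooses $y\in\cY=[0,1]^n$ with a concave utility that rewards moving toward $f(x)$, e.g.\ $u_2(x,y)=-\norm{y-f(x)}_2^2$. This is concave in $y$ for fixed $x$; since concavity is only needed in each player's own strategy, $f$ may be nonlinear (given as a circuit), which is exactly why this is the general-concave cell rather than the quadratic one.

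\emph{The deviation family.} For player~1 the constant deviations suffice. For player~2 we need a $\Phi$ of polynomial dimension rich enough to express ``replace $y$ by $f(x)$.'' A deviation applied to $y$ does not see $x$. However, at an approximate equilibrium player~1's constant-deviation condition forces $x\approx y$ in expectation, so we would like $y\mapsto f(y)$ to be available. We therefore let $\Phi_2$ be the span of the coordinate functions of $f$ together with the identity and constants; this has dimension $O(n)$. If $f$ itself is not a legitimate "swap" map from $\cY$ to $\cY$, we compose with a projection, which preserves the contraction up to constants.

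\emph{The argument.} Take a distribution $\mu$ over $(x,y)$ that is an $\epsilon$-$\Phi$-equilibrium. The key steps, in order:
\begin{enumerate}
\item Player~1's constant deviations give $\E\norm{x-y}^2 \le \Var(y)+\epsilon$-type control. More usefully, the deviation $y\mapsto \E[x]$ for player~2 together with the deviation $y\mapsto f(y)$ lets us compare payoffs.
\item Player~2's no-gain condition against $\phi(y)=f(y)$ yields $\E\norm{y-f(x)}^2 \le \E\norm{f(y)-f(x)}^2+\epsilon \le \lambda^2 C\,\E\norm{y-x}^2+\epsilon$.
\item Combining these, we aim to show $\E\norm{y-f(y)}$ is small. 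The contraction then converts this residual into $\E\norm{y-x^\star}\le \E\norm{y-f(y)}/(1-\lambda)$. Hence a sample, or the mean, of $\mu$ is an approximate fixed point, and for exponentially small $\epsilon$ we can round it to the exact rational fixed point as in \Contr.
\end{enumerate}

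\emph{Main obstacle.} The main obstacle is step~1. Player~1's equilibrium condition only controls deviations to constants, i.e.\ correlation with the \emph{mean}, so it does not force $x\approx y$ pointwise under correlation. This is the issue of expected versus actual fixed points flagged earlier. I would first try giving player~1 the same poly-dimensional family, $\phi(x)\in\mathrm{span}\{f_i(x),x,1\}$. That lets player~1 deviate to $f(x)$, and we can chain the two inequalities: $y$ tracks $f(x)$, and $x$ tracks $y$, so through player~1's deviation to $f(x)$, $x$ tracks $f(x)$. Summing the two players' conditions should give $(1-C\lambda^2)\E\norm{x-f(x)}^2\le O(\epsilon)$. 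One more point needs attention: contraction is in $\ell_\infty$ while the utilities use $\ell_2$, which loses a factor of $n$. To keep $C\lambda^2<1$ we would first reduce to the case where $\lambda$ is small relative to $1/\sqrt n$, by iterating $f$ a polynomial number of times inside the circuit.
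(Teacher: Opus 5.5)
There is a genuine gap, and it lies in the choice of utilities rather than the deviations. Once both players have the deviation $f$ (the paper uses only $\Phi_1=\Phi_2=\{f\}$), summing the two gains cancels the cross terms $\norm{f(\vx)-\vy}_2^2$. What remains is $\E[\norm{\vx-\vy}_2^2-\norm{f(\vx)-f(\vy)}_2^2]\le 2\eps$, which is useful only if $f$ contracts Euclidean distance. For an $\ell_\infty$-contraction you only get $\norm{f(\vx)-f(\vy)}_2^2\le n\lambda^2\norm{\vx-\vy}_2^2$. Your repair, iterating $f$ until $\lambda<1/\sqrt n$, needs about $\log n/\gamma$ compositions. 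In \Contr, $\gamma$ is given in binary and may be exponentially small, so this is not a polynomial-size circuit. When $\gamma\ge 1/\poly$, \Contr{} is already solved by iterating $f$ directly, so the reduction only covers easy instances.

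There is also a structural obstruction. Your remark that a nonlinear $f$ takes you out of the quadratic cell is mistaken. In the paper, a game is quadratic if each $u_i$ is a concave quadratic in the player's own strategy, with arbitrary dependence $\vb_i(\vx_{-i}),\mA_i(\vx_{-i})$ on the opponents. Your game is exactly the concave quadratic game of \Cref{th:concave quadratic nfce hard}. With $\vm(\vx)=(f(\vx),\vx,1)$ it falls under \Cref{th:concave quadratic eqcomp}, which computes its $\Phi_\vm$-equilibria in $\poly(d,k,\log(R/\eps))$ time. A correct reduction of this form from general \Contr{} would therefore put \Contr{} in $\P$.

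The missing idea is to let the utility play the role of the contraction norm. Take $u_1(\vx,\vy)=-\norm{\vx-\vy}$ and $u_2(\vx,\vy)=-\norm{\vy-f(\vx)}$, using the unsquared norm under which $f$ contracts; these are concave but not quadratic. Summing the gains of the deviation $f$ gives $\E[\norm{\vx-\vy}-\norm{f(\vx)-f(\vy)}]\ge\gamma\,\E\norm{\vx-\vy}$ with no norm conversion, so $\E\norm{\vx-\vy}=O(\eps/\gamma)$. Player~2's condition together with $\norm{\vy-f(\vx)}\ge\norm{\vy-f(\vy)}-\norm{f(\vy)-f(\vx)}$ then gives $\E\norm{\vy-f(\vy)}\le\eps+2\,\E\norm{\vx-\vy}=O(\eps/\gamma)$. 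Setting $\eps=\Theta(\delta\gamma)$ finishes the proof. No iteration, projection or rounding is needed, and a singleton $\Phi_i$ suffices.
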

The hardness holds even when $\Phi$ contains only one deviation function for each player. As shown recently~\citep{Zhang25:Learning}, there is a polynomial-time algorithm for computing a $\Phi$-equilibrium with respect to poly-dimensional swap deviations in multilinear games. As such, \Cref{cor:contr-polydim} marks a rare setting where, unless there is a breakthrough algorithmic advance, namely a polynomial-time algorithm for $\Contr$, computing an equilibrium in concave games is strictly harder than in multilinear games.

\paragraph{A polynomial-time algorithm for $\Phi$-equilibria in quadratic concave games.} Our next result treats the important special case of concave games with quadratic utilities.  We show that the quadratic structure in the utilities enables us to bypass the \Contr-hardness established earlier for general concave games.

\begin{theorem}[Informal verson of \Cref{th:concave quadratic eqcomp}]
    \label{theorem:quadratic-polytime}
    There is a polynomial-time algorithm for computing $\Phi$-equilibria with respect to poly-dimensional swap deviations in any concave quadratic game when the number of players is constant.
\end{theorem}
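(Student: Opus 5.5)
Our plan is to follow the ellipsoid-against-hope template of \citet{Papadimitriou08:Computing}, as adapted to poly-dimensional deviations by \citet{Zhang25:Learning}. The new ingredient is an oracle that computes \emph{expected fixed points} in the quadratic setting. Let each $\phi \in \Phi_i$ be written as $\phi(\vx) = \mK \vm(\vx)$, where $\vm : \cX_i \to \R^k$ is a fixed poly-dimensional feature map and $\mK$ ranges over a convex set of coefficient matrices. Since the number of players is constant, player $i$'s quadratic utility, evaluated at the joint profile obtained by applying $\phi$ to $\vx_i$, is concave in $\phi(\vx_i)$. The $\Phi$-regret constraints are then linear in the distribution $\mu$ over joint profiles through low-degree moments of $\mu$: quadratic utilities against product-like deviations depend only on a polynomial number of moments of the form $\E_\mu[\vm(\vx_i)\otimes \vx_{-i}\otimes \vx_{-i}]$, and similar. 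This lets us work with a polynomial-size dual.

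The steps, in order, are as follows.

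First, we reduce equilibrium computation to the following problem. Given any $\phi$ (equivalently, any $\mK$) for each player, we must find a distribution $\mu$, supported on polynomially many joint profiles, such that no player gains in expectation by deviating via their $\phi$. This is a ``$\Phi$-expected fixed point'' (EFP) oracle. By the ellipsoid-against-hope argument over the space of deviation coefficients, polynomially many oracle calls, each with a polynomially bounded certificate, yield an $\epsilon$-$\Phi$-equilibrium in $\poly(d,\log(1/\epsilon))$ time. The cut separating the current $\mK$ is linear in $\mK$, as in \citet{Zhang25:Learning}.

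Second, we implement the EFP oracle for a single player, holding the others fixed. This is the main obstacle. For general concave utilities, finding expected fixed points of $\phi$ is \Contr-hard by \Cref{th:concave efg hard}, so the quadratic structure must be used. With a quadratic utility $u(\vx) = -\vx^\top \mA \vx + \vb^\top \vx$ with $\mA \succeq 0$, the deviation gain is
\[
u(\phi(\vx)) - u(\vx) = \ip{\grad u(\vx)}{\phi(\vx)-\vx} - (\phi(\vx)-\vx)^\top \mA (\phi(\vx)-\vx).
\]
So it suffices to find a distribution $\mu$ with
\[
\E_\mu[\ip{\grad u(\vx)}{\phi(\vx)-\vx}] \le \E_\mu[\norm{\phi(\vx)-\vx}_{\mA}^2].
\]
The fixed-point-like structure therefore lives in the (possibly degenerate, and a priori unknown to the algorithm) Mahalanobis norm induced by $\mA$.

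We would handle this with a shell-ellipsoid style procedure, \shellelips. We run the ellipsoid method looking for a point $\vx$ with small $\norm{\phi(\vx)-\vx}_{\mA}$, and whenever a query point is far from fixed, we use $\phi(\vx)-\vx$ as a cut. The key claim is that either the ellipsoid shrinks in the directions where $\mA$ is nondegenerate, or the current pair of points already certifies a good convex combination. Any unsuccessful run produces a short list of query points $\vx_1,\dots,\vx_T$ together with dual multipliers from the ellipsoid cuts. By LP duality, with the cut inequalities certifying infeasibility, we can combine these points into a distribution satisfying the inequality above. The final convex combination is computed by a small LP.

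Third, we combine across the constant number of players by fixing the others' strategies as product distributions of the current iterate. We expect this to lose only polynomial factors, and we then assemble the final correlated distribution as the mixture prescribed by ellipsoid against hope.

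The delicate part is bounding the number of iterations of \shellelips\ by $\poly(d,\log(1/\epsilon))$. The norm $\norm{\cdot}_{\mA}$ may have tiny eigenvalues, and the map need not be a true contraction. We would first try a volume argument restricted to the range of $\mA$, handling directions with eigenvalues below $\epsilon$ by rounding them to zero. This is the step we expect could fail and require the more careful ``unknown Mahalanobis norm'' fixed-point algorithm mentioned in the abstract.
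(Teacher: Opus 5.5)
Your outer layer (ellipsoid against hope over the deviation coefficients $\mK_i$, driven by a per-player semi-separation oracle) matches the paper's. The crux, that per-player oracle, is never actually constructed, and the route you sketch for it would not work.

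The oracle must return one distribution $\mu_i$ with $\E_{\vx\sim\mu_i}[u(\phi(\vx))-u(\vx)]\le\eps$ for \emph{every} concave quadratic $u$ simultaneously. This uniformity is what makes $\mu_i$ robust to the opponents, since $\mA_i(\vx_{-i})$ and $\vb_i(\vx_{-i})$ are arbitrary functions of $\vx_{-i}$. For a single fixed $u$ the task is trivial: a point mass at $\argmax_{\cX} u$, provided $\phi$ maps it into $\cX$.

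An ellipsoid run in $\cX$-space with cuts $\phi(\vx)-\vx$ cannot deliver this uniform guarantee. Since $\phi$ is an arbitrary map and not a contraction, the cuts localize nothing. Finding a point with small residual $\phi(\vx)-\vx$ for a general map is as hard as Brouwer fixed-point computation. At best, the cut multipliers yield an accuracy certificate, i.e.\ an EVI-type guarantee tied to the single Euclidean geometry (morally $\mA\propto\mI$). That says nothing about other $\mA\succeq\vec 0$. It also says nothing about $\mA=\vec 0$, where an expected fixed point is required.

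You flag exactly this step as likely to fail. In the paper the dependence runs the other way: \Cref{th:unkcontr quadratic} is a \emph{corollary} of the oracle, not an ingredient. Your third step, fixing opponents at the current iterate, also reintroduces a circular, Nash-like coupling that a uniform guarantee avoids.

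The missing idea is to \emph{discard} the quadratic remainder rather than exploit it, and to run ellipsoid against hope over the utility parameters instead of over $\cX$.
\begin{itemize}
\item \textbf{Linearize.} For $u(\vx)=\vb^\top\vx-\frac12\vx^\top\mA\vx$ with $\mA\succeq\vec 0$,
$u(\phi(\vx))-u(\vx)=\ip{\vb-\mA\vx,\phi(\vx)-\vx}-\frac12\norm{\phi(\vx)-\vx}_\mA^2\le\ip{\vb-\mA\vx,\phi(\vx)-\vx}$.
For each fixed $\vx$ the right-hand side is \emph{linear} in $(\mA,\vb)$. By \Cref{lem:bound coefficients}, $(\mA,\vb)$ ranges over the compact convex set $\{\vec 0\preceq\mA\preceq 2\mI,\ \norm{\vb}_2\le 1\}$ of dimension $O(d^2)$.
\item \textbf{Dualize.} The dual \eqref{eq:qefp dual} asks for $(\mA,\vb)$ with $\ip{\vb-\mA\vx,\phi(\vx)-\vx}\ge\eps$ for all $\vx\in\cX$. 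It is infeasible.
\item \textbf{Separate.} Approximately maximize the concave quadratic $u$ over $\cX$ to get $\vx$. If $\phi(\vx)\in\cX$, first-order optimality gives $\ip{\vb-\mA\vx,\phi(\vx)-\vx}\le\eps/2$, a violated constraint that is linear in $(\mA,\vb)$. Otherwise $\vx$ certifies that $\phi$ is not an endomorphism.
\item \textbf{Extract.} The polynomially many query points, weighted by the infeasibility certificate, form a QEFP valid uniformly over all $(\mA,\vb)$, in $\poly(d,\log(R/\eps))$ time. The product $\mu_1\times\dots\times\mu_n$ then serves directly as the semi-separation output.
\end{itemize}
The constant number of players is needed only so that this product has polynomial support and its expected utilities can be computed. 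Your moment-based justification is unnecessary. It also fails in general, because the coefficients depend arbitrarily on $\vx_{-i}$.
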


To put this into context, a well-known fact is that a fixed point of a contraction mapping with respect to the $\ell_2$ norm can be computed in polynomial time~\citep{Sikorski93:Ellipsoid}. One can view~\Cref{theorem:quadratic-polytime} as the analog of that result in the realm of $\Phi$-equilibria. In addition, \Cref{theorem:quadratic-polytime} separates $\Phi$-equilibria---with respect to poly-dimensional swap deviations---from normal-form correlated equilibria, as the latter is $\Contr$-hard even under concave quadratic utilities (\Cref{theorem:informal-contrahard}).

The proof of~\Cref{theorem:quadratic-polytime} hinges on what we refer to as a \emph{quadratic expected fixed point (QEFP)}, which asks for a distribution $\mu \in \Delta(\cX)$ such that $\E_{\vx \sim \mu} \langle \vec{b} - \mat{A} \vx, \phi(\vx) - \vx \rangle \leq \epsilon$ for all PSD matrices $\mat{A}$ and vectors $\vec{b} \in \R^d$. Our key technical contribution is a polynomial-time algorithm for computing a quadratic expected fixed point of any function $\phi$, which paves the way to~\Cref{theorem:quadratic-polytime}. Our approach is based on a non-trivial application of the ellipsoid against hope algorithm of~\citet{Papadimitriou08:Computing}. Our notion and algorithm for QEFPs also has implications which may be of independent interest to $\UnkContr$ for Mahalanobis norms and to another related relaxation of fixed points; we will elaborate on these implications in \Cref{sec:qefp discussion}.

\paragraph{An FPTAS for $\Phi$-equilibria.} We complement~\Cref{cor:contr-polydim,theorem:quadratic-polytime} by showing that there is an $\FPTAS$ for computing an $\epsilon$-approximate $\Phi$-equilibrium with respect to poly-dimensional swap deviations in general concave games; to put this into context, we recall that there is an $\FPTAS$ for computing fixed points of non-expansive mappings~\citep{Lieder21:Convergence,Halpern67:Fixed}.

\begin{theorem}
    \label{theorem:FPTAS-polydim}
    There is an $\FPTAS$ for computing an $\epsilon$-approximate $\Phi$-equilibrium of any concave game with respect to poly-dimensional swap deviations.
\end{theorem}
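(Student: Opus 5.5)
We would follow the ellipsoid-against-hope template of \citet{Papadimitriou08:Computing}, with a single change. There, the separation oracle is answered by an \emph{exact} fixed point; here we will answer it with an \emph{approximate} fixed point, computed by a Halpern-type iteration in $\poly(1/\epsilon)$ time.

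First we set up the deviation class. Each player $i$ has a poly-dimensional deviation set $\Phi_i = \{\phi_{\mK} : \vx \mapsto \mK \vm(\vx)\}$, where $\vm : \cX_i \to \R^k$ is a fixed feature map with $k = \poly(d)$, and $\mK$ ranges over a convex set $\cK_i$ of matrices with $\mK \vm(\vx) \in \cX_i$ for all $\vx$. An $\epsilon$-approximate $\Phi$-equilibrium is a correlated distribution $\mu$ satisfying, for every $i$ and every $\mK \in \cK_i$,
\[
\E_{\vx \sim \mu}\bigl[u_i(\mK\vm(\vx_i), \vx_{-i}) - u_i(\vx)\bigr] \le \epsilon .
\]
Each $u_i$ is concave in player $i$'s own strategy. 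Using a gradient (linearization) bound at $\vx_i$, it therefore suffices to enforce
\[
\E_{\vx\sim\mu}\ip{\grad_i u_i(\vx)}{\mK\vm(\vx_i)-\vx_i}\le\epsilon .
\]

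Next we write the equilibrium problem as a game between a $\mu$-player and a deviation player who chooses $(\mK_i)_i$. By minimax (convexity in $\mK$, linearity in $\mu$), an equilibrium exists with value at most $0$. The key existence step is the \emph{approximate best response to a fixed deviation}: for any fixed $(\mK_i)_i$ we need a point mass, or small-support distribution, $\vx$ at which
\[
\sum_i \ip{\grad_i u_i(\vx)}{\mK_i\vm(\vx_i)-\vx_i}\le\epsilon .
\]
The natural choice is an approximate fixed point of the map $\phi(\vx) = (\mK_i\vm(\vx_i))_i$. A general continuous $\phi$ has no efficient approximate fixed point, which is exactly where \Cref{cor:contr-polydim} bites. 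We avoid this by regularizing: consider
\[
\phi_\lambda(\vx) = (1-\lambda)\,\phi(\vx) + \lambda\,\vx_0 .
\]
If $\phi$ is $L$-Lipschitz, then $\phi_\lambda$ is still not contractive in general, so instead we work in the lifted space. The plan is to use the FPTAS structure: search for $\vx$ with $\norm{\phi(\vx)-\vx}\le \epsilon/G$, where $G$ bounds the gradients. Since $\Phi$ is poly-dimensional, we exploit the freedom in $\mu$ to use an \emph{expected} fixed point, i.e.\ $\E_\mu[\phi(\vx)-\vx]$ small in the relevant linear functionals. Such a point can be obtained by running a no-regret (online gradient) procedure on the deviation side for $T=\poly(d, 1/\epsilon)$ rounds and outputting the uniform mixture of iterates, following the \citet{Zhang25:Learning} approach, which is polynomial in $1/\epsilon$.

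Then we run ellipsoid against hope. We run ellipsoid over the $\poly(d)$-dimensional space of $(\mK_i)_i$ to certify infeasibility of "value $>\epsilon$". At each query we produce, via the step above, a distribution violating the current query, and this distribution's constraint serves as the separating hyperplane. After $\poly(d,\log(1/\epsilon))$ ellipsoid steps, an LP over the polynomially many collected distributions yields the mixture $\mu$. The running time is $\poly(d)\cdot\poly(1/\epsilon)$, which gives the FPTAS.

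The main obstacle is the oracle step: producing in $\poly(d,1/\epsilon)$ time a distribution $\mu$ with
\[
\E_\mu \sum_i \ip{\grad_i u_i(\vx)}{\mK_i\vm(\vx_i)-\vx_i}\le\epsilon
\]
for a fixed $\mK$, when the gradient field is nonlinear. Our first attempt will be an expected-variational-inequality argument: run projected gradient-like dynamics $\vx^{t+1} = \vx^t + \eta\,(\phi(\vx^t)-\vx^t)$, and bound the telescoping quantity using the smoothness of $u_i$ and the Lipschitzness of $\vm$, so that the average of $\ip{\grad u(\vx^t)}{\phi(\vx^t)-\vx^t}$ is $O(1/(\eta T) + \eta L)$.
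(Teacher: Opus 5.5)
Your outer architecture (a search over the $\poly(d)$-dimensional $\mK$-space with a $\poly(d,1/\epsilon)$-time oracle per query) is reasonable. However, there is a genuine gap at exactly the step you call the main obstacle, and your linearization at $\vx_i$ makes that step harder than it needs to be. Bounding $\E_\mu\ip{\grad_i u_i(\vx), \mK\vm(\vx_i)-\vx_i}$ is strictly stronger than bounding the true gain, and none of your candidate oracles achieves it:
\begin{itemize}
\item An expected fixed point in the sense of \citet{Zhang25:Learning} only controls $\ip{\vv, \E_\mu[\phi(\vx)-\vx]}$ for a \emph{constant} $\vv$, i.e., linear utilities, not a state-dependent gradient.
\item Searching for $\vx$ with $\norm{\phi(\vx)-\vx}\le\epsilon/G$ is the Brouwer problem, which is intractable in general.
\item The damped-iteration bound $O(1/(\eta T)+\eta L)$ needs $L$-smoothness of $u_i$. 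The model does not provide this: utilities are only concave and Lipschitz, with a supergradient oracle.
\end{itemize}
Without smoothness the damped-iteration bound fails for a fixed step size. On $[-1,1]$ take $u(x)=1-\abs{x}$ and $\phi(x)=\operatorname{clip}_{[-1,1]}((1-2/\eta)x)$. Near $0$ the damped step is $x\mapsto -x$, so from $x^{(0)}=\eta/4$ the iterates alternate $\pm\eta/4$. The linearized gain is $1/2$ at every step, while the true gain $u(\phi(x))-u(x)$ is negative. Moreover, your telescoping is written for a single $u$. For the joint map $(\mK_i\vm(\vx_i))_i$ in an $n$-player game, $\sum_i\ip{\grad_i u_i(\vx^t), \vx_i^{t+1}-\vx_i^t}$ is not the increment of any potential, because each $u_i$ also moves with $\vx_{-i}$, so nothing telescopes.

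The missing idea is to keep the gain nonlinear and to solve, per player, a problem with a \emph{universal} guarantee. This is a concave expected fixed point: $\mu_i\in\Delta(\cX_i)$ with $\E_{\vx\sim\mu_i}[u(\phi_i(\vx))-u(\vx)]\le\epsilon$ for every concave $u:\cX_i\to[0,1]$. The paper obtains it by iterating $\phi_i$ exactly $M=1/\epsilon$ times and taking the uniform distribution on the orbit. The sum telescopes to $\frac1M\bigl(u(\vx^{(M)})-u(\vx^{(0)})\bigr)\le\epsilon$ for any $[0,1]$-valued $u$, with no smoothness needed. An iterate leaving $\cX_i$ certifies that $\phi_i$ is not an endomorphism; you need this semi-separation too, since ellipsoid centers need not lie in $\Phi_{\vm}$. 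Because the guarantee holds for all concave $u$, it applies to $u_i(\cdot,\vx_{-i})$ for every $\vx_{-i}$, so product distributions suffice.

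Linearization is then done only around $\mK^{(t)}\vm(\vx)$, never around $\vx$. Write $u(\phi(\vx))-u(\vx)=[u(\phi(\vx))-u(\mK^{(t)}\vm(\vx))]+[u(\mK^{(t)}\vm(\vx))-u(\vx)]$. In expectation, the first bracket is at most $\ip{\mU^{(t)}, \mK-\mK^{(t)}}$ by concavity, and this is absorbed by the external regret of \shellgd{} in the framework of \citet{Zhang25:Learning}. The second bracket is at most $\epsilon$ by the concave expected fixed point. This gives a $\Phi_{\vm}$-regret minimizer for concave utilities at $\poly(1/\epsilon)$ cost, and no-regret dynamics then yield the FPTAS. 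In your ellipsoid-against-hope formulation you would equivalently cut with supergradients in $\mK$ of the concave-in-$\mK$ true gain.
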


The first ingredient for establishing this result is an $\FPTAS$ for computing what we refer to as \emph{concave expected fixed points}. To explain this, we first recall the notion of an expected fixed point of a mapping $\phi$, introduced by~\citet{Zhang24:Efficient}, which is a distribution $\mu \in \Delta(\cX)$ such that $\E_{\vx \sim \mu} [\vx - \phi(\vx)] \approx 0$. A \emph{concave} expected fixed point strengthens this requirement: it asks for a distribution $\mu \in \Delta(\cX)$ such that $\E_{\vx \sim \mu} [u(\vx) - u(\phi(\vx))] \approx 0$ \emph{for any concave function} $u$. We observe that there is a simple $\FPTAS$ for computing a concave expected fixed point: iterate the mapping $\phi$ multiple times and then take the uniform distribution over that sequence of points. Moreover, this can be used in conjunction with the framework of~\citet{Zhang25:Learning} for minimizing poly-dimensional swap regret, leading to~\Cref{theorem:FPTAS-polydim}.

\subsection{Further related work}
\label{sec:related}

The notion of swap regret was introduced by~\citet{Blum07:From}, and is a strengthening of \emph{internal regret}~\citep{Hart00:Simple,Stoltz05:Internal}. On the probability simplex, swap and internal regret are polynomially related~\citep{Blum07:From}, but in games of exponential type employing internal regret is in a certain sense vacuous in that the uniform random strategy guarantees exponentially small internal regret (\emph{e.g.}, we refer to~\citet{Fujii23:Bayes}). In addition to the fundamental link between normal-form correlated equilibria and swap regret, recent research has revealed a beautiful connection between swap regret and robustness to manipulability~\citep{Deng19:Strategizing,Arunachaleswaran25:Swap,Arunachaleswaran24:Pareto,Rubinstein24:Strategizing} and further cultivated connections to calibration~\citep{Foster97:Calibrated,Fishelson25:High,Fishelson25:Full,Peng25:High,Hu24:Calibration,Roth24:Forecasting,Noarov25:High}; in fact, this latter direction goes both ways, in that a calibration algorithm can be employed to converge to the set of correlated equilibria~\citep{Foster97:Calibrated}. Swap regret has also been employed in the context of omniprediction and multicalibration to ensure multi-group fairness~\citep{Lu25:Sample,Gopalan23:Swap}.

One of our key focus in this paper is on concave games, in which each player's utility is concave in that player's strategy when the other players are fixed. Computing a Nash equilibrium in concave games is \PPAD-complete~\citep{Papadimitriou23:Computational}, thereby firmly placing the problem of computing an NFCE in $\PPAD$ as well.\footnote{Technically, this requires one to fix an output representation that allows efficient verification, which we do in \Cref{sec:computation model}. The \TFNP-membership of finding NFCEs likely depends on the output representation: indeed, \citet{Cheval25:Complexity} show that finding an {\em optimal} NFCE is \PSPACE-hard, precluding the possibility of an efficiently-verifiable output representation that can capture all NFCEs (even up to, say, outcome equivalence).} 
More broadly, equilibrium computation in concave games typically shares the same complexity as multilinear games, which makes our hardness result in~\Cref{thm:polydimcontract} more surprising. Motivated by recent applications in machine learning, there has been interest in extending the scope to non-concave games, which have been referred to as the next frontier for algorithmic game theory research~\citep{Daskalakis21:Non}. Identifying tractable notions of correlated equilibrium in non-concave games has been explored in recent research~\citep{Zhang25:Expected,Cai24:Tractable,Cai26:Proximal,Ahunbay25:First}.

\paragraph{Complexity of computing fixed points of contractions.} The complexity class \emph{unique end of potential line} $(\UEOPL)$ was introduced by~\citet{Fearnley20:Unique} to capture problems in $\PPAD \cap \PLS$ that are guaranteed to have unique solutions, such as finding a fixed point of a contraction mapping. It is currently not known if $\Contr$ is complete for $\UEOPL$, although~\citet{Fearnley20:Unique} established completeness for a certain discrete variant of contraction. It is an interesting question whether our hardness results can be strengthened to $\UEOPL$-hardness. The recent breakthrough result of~\citet{Chen25:Computing} (see also~\citet{Chen26:Quadratic}) gave an algorithm for computing the fixed point of a contraction mapping that only requires a polynomial number of \textit{queries} (but exponential time to compute each query point). \citet{Daskalakis18:Converse} proved \CLS-hardness but for a more general \textit{metric} contraction problem where the metric is not necessarily induced by any norm. It remains unclear whether one can employ the reduction of~\citet{Daskalakis18:Converse} to strengthen our \Contr-hardness. 

Relatedly, \citet{Mehta14:Constant} showed that computing a Nash equilibrium even under the promise that the set of Nash equilibria is convex is \SSG-hard. Also, computing a symmetric Nash equilibrium under the promise that there is a unique symmetric Nash equilibrium is \SSG-hard~\citep{Mehta14:Constant}. These lower bounds strongly tie to the problem of showing hardness for (normal-form) correlated equilibria: any reduction based on some notion of equilibrium collapse between Nash and NFCEs has to grapple with the fact that the set of NFCEs is convex. For example, the complexity of computing a point in the convex hull of the set of Nash equilibria is, to our knowledge, not well understood.

\paragraph{Computing optimal NFCEs.} While the complexity of computing one normal-form correlated equilibrium has remained open, natural decision versions of this problem are by now well understood. It has been known~\citep{Stengel08:Extensive} that computing a welfare-optimal NFCE is \NP-hard even in succinct games of polynomial type. More recently, \citet{Cheval25:Complexity} showed that deciding whether there exists an NFCE whose value exceeds a given threshold is \PSPACE-hard in multiplayer extensive-form games (with perfect recall).

\paragraph{Fixed points and $\Phi$-equilibria.} In the online learning setting, \citet{Hazan07:Computational} established an equivalence between computing fixed points and minimizing $\Phi$-regret. As observed by~\citet{Zhang24:Efficient}, their reduction can be bypassed if the learner can employ \emph{mixed} strategies; without the use of mixed strategies, \citet{Zhang24:Efficient} used the equivalence shown by~\citet{Hazan07:Computational} to preclude the existence of computationally efficient algorithms minimizing poly-dimensional swap regret even in Bayesian games. When the underlying transformation is a contraction, one of our key results establishes an equivalence \emph{\`a la}~\citet{Hazan07:Computational} in the equilibrium computation setting, which was hitherto less understood.

\paragraph{Sparse CCEs.} Finally, \citet{Foster23:Hardness} established that, in Markov (aka. stochastic) games, minimizing regret with respect to potentially non-Markovian policies is \PPAD-hard. Their approach proceeds by proving hardness for computing \emph{sparse CCEs}, which is a correlated distribution expressed as a polynomial mixture of product distributions. Later work established similar reductions in extensive-form games~\citep{Peng24:Complexity,Anagnostides24:Complexity}. These reductions apply even to the weaker notion of \emph{coarse} correlated equilibrium, and in extensive-form games, they only rule out a polynomial sparsity. It would be interesting to understand whether stronger super-polynomial sparsity lower bounds can be established under NFCEs (instead of CCEs).

\section{Preliminaries}
\label{sec:prels}

\paragraph{Notation} $\S^{d}_+$ is the semidefinite cone in dimension $d$. For $\mA, \mB \in \R^{d\times d}$, $\mA \succeq \mB$ means $\mA - \mB \in \S^{d}_+$. An {\em endomorphism} on $\cX$ is a function mapping $\cX \to \cX$. For a set $\cX$ and number $\eps \ge 0$, $B(\cX, \eps)$ is the set of all points $\eps$-close in the $\ell_2$-norm to $\cX$. For a set $\cX$, we use $\Delta(\cX)$ to denote the set of {\em finite-support} distributions on $\cX$. If $\mu \in \Delta(\cX)$ and $\nu \in \Delta(\cY)$, then $\mu \times \nu \in \Delta(\cX \times \cY)$ denotes the product distribution. %

\subsection{Computational model and representation}\label{sec:computation model}

In this paper, we will operate in a black-box model of computation. Basic arithmetic operations on real numbers will be assumed to run in unit time. To avoid issues surrounding numerical precision, we will always be interested in $\eps$-approximate solutions to problems, where $\eps > 0$ is a given input parameter.
Functions will generally be given by evaluation oracles. If $f : \cX \to \R$ is also guaranteed to be convex or concave, we also assume access to an oracle for a subgradient $\grad f : \cX \to \R^d$, and we assume that $\norm{\grad f(\vx)}_2 \le 1$ for all $\vx$, that is, that $f$ is $1$-Lipschitz. 
 We will view the oracle correctness and convexity guarantees (\eg, that $\grad f$ is actually the gradient of $f$; that $f$ is actually convex; that $f$ is Lipschitz and bounded, etc.) as {\em promises}: our positive results will assume that the promise holds, and our negative results will only contain instances in which the promise holds.
 
Convex compact sets $\cX$ will be represented by separation oracles, that is, Turing machines that, given $\vx \in \R^d$, either decides that $\vx \in \cX$, or outputs a hyperplane $\va^\top \vx = b$ that strictly separates $\vx$ from $\cX$. We will assume that $B(\vec 0, 1) \subseteq \cX \subseteq B(\vec 0, R)$ where $R > 0$ is given. Together with the separation oracle, these conditions imply the ability to optimize any Lipschitz convex function $f$ (given by an oracle as per the previous paragraph) to precision $\eps$ over $\cX$ in time $\poly(d, \log(R/\eps))$~\citep{Grotschel12:Geometric}.

Our results are not sensitive to these choices; we specify them only for the sake of formality and cleanliness. For example, all positive results in our paper would still hold in reasonable finite-precision Turing machine-based models, or---after standard minor adjustments---with weak separation oracles; and all negative results involve only simple feasible sets such as $\cX = [-1, 1]^d$, and simple functions such as quadratics (in addition to the given oracles).

Various problems that we will study, including the $\Phi$-equilibrium problem, require reasoning about distributions over infinite sets $\cX$. To avoid measure-theoretic issues and to ensure that outputs can be represented efficiently by an algorithm, in these cases we will consider only finite-support distributions $\mu$, represented by listing the elements of the support set and their probabilities. Algorithms that output distributions must output them in this form. 
Lower bounds will not be sensitive to this choice, either: our lower bounds also hold for algorithms that return any representation that allows sampling (as long as one permits randomized reductions).

\subsection{No-regret learning} No-regret learning is a standard paradigm for learning in games and for computing various kinds of correlated equilibria~\citep{Cesa-Bianchi06:Prediction}. Here, we give a very general setup for no-regret learning that will encompass all the notions that we care about for this paper. 

A learner has a convex compact strategy set $\cX \subset \R^d$. On each of $T$ rounds $t = 1, \dots, T$, the learner outputs a {\em mixed} strategy $\mu^{(t)} \in \Delta(\cX)$, and an {\em adversary}, after observing $\mu^{(t)}$, selects a concave function $u^{(t)} : \cX \to \R$, which the learner observes. The goal of the learner is to minimize the regret with respect to a given set of {\em strategy transformations} $\Phi \subseteq \cX^\cX$:
\begin{align}
    \Reg_\Phi(T) := \sup_{\phi \in \Phi} \frac{1}{T} \sum_{t=1}^T \ab[ u^{(t)}(\phi(\vx^{(t)})) - u^{(t)}(\vx^{(t)})]. \label{eq:regret}
\end{align}
Readers familiar with no-regret learning may wonder why we care about concave utilities and mixed strategies, since in the usual no-regret paradigm it is often the case that one can assume without loss of generality that utilities are linear and strategies are points in $\cX$. The answer is that, while the above are WLOG when $\Phi$ contains only linear functions, neither assumption is WLOG when $\Phi$ contains {\em nonlinear} functions. We will elaborate on this in the technical sections when it becomes relevant.

A few special cases of $\Phi$ will be of particular interest. If $\Phi$ is the set of linear endomorphisms mapping $\cX$ to itself, the resulting notion is {\em linear regret}~\citep{Gordon08:No}. If $\Phi$ consists of {\em all} functions $\cX \to \cX$, the resulting notion is {\em swap regret}. Finally, following \citet{Zhang25:Learning}, we will say that $\Phi$ is {\em $k$-dimensional} if it can be expressed as a linear combination of a small number of features, that is, if there is a feature map $\vm : \cX \to \R^k$ for which $k = \poly(d)$ and every map $\phi \in\Phi$ can be written in the form $\phi(\vx) = \mK_{\phi}^\top \vm(\vx)$ for some matrix $\mK_{\phi} \in \R^{d \times k}$. For example, finite sets $\Phi$ are $k$-dimensional with $k = d \cdot |\Phi|$ by  setting $\vm(\vx)$ to be the concatenation of the finitely many possible values of $\phi(\vx)$, and  the set of degree-$\ell$ polynomials is $k$-dimensional with $k = d^{O(\ell)}$, by setting $\vm(\vx)$ to be the vector of up-to-$\ell$th moments of $\vx$. We will always assume, without loss of generality for our settings, that $\Phi$ contains the identity map $\Id : \cX \to \cX$. 

When a set of transformations $\Phi \subseteq \cX^\cX$ is low-dimensional, we will assume that we are given the function $\vm$. We do {\em not} assume that we are given the set of matrices $\mK$ that correspond to functions in $\Phi$; rather, we will interpret $\vm$ as implicitly declaring the set $\Phi$ to contain {\em all} endomorphisms $\phi : \cX \to \cX$ of the form $\vx \mapsto \mK \vm(\vx)$. We will denote this set $\Phi_{\vm}$. 

\subsection{Concave games}

In a {\em concave game} $\Gamma$, each of $n$ players $i \in [n]$ has a convex compact \textit{strategy set} $\cX_i \subset \R^{d_i}$. A tuple $\vx \in \cX := \cX_1 \times \dots \times \cX_n \subset \R^d$, where $d = \sum_i d_i$ is called a {\em profile}. For each player $i$, a function $u_i : \cX \to [0, 1]$ defines the {\em utility} of player $i$. We assume that $u_i$ is concave in her own strategy $\vx_i \in \cX_i$ when the other players' strategies are held fixed.

We will sometimes be interested in problems in which the utilities have additional structure, namely, they are quadratic or linear. We will call a game {\em quadratic} if $u_i$ is a quadratic concave function in $\vx_i$ for every fixed opponent profile $\vx_{-i}$, \ie, if
$$u_i(\vx_i, \vx_{-i}) = \vb_i(\vx_{-i})^\top \vx_i - \vx_i^\top \mA_i(\vx_{-i}) \vx_i,$$
where $\vb_i : \cX_{-i} \to \R^{d_i}$ and $\mA_i : \cX_{-i} \to \S^{d_i}_+$ are arbitrary functions. We will call a game {\em linear} if furthermore $\mA_i \equiv \vec 0$, leaving only the linear term. 

The main solution concepts of interest to us will be various notions of $\Phi$-equilibrium, which we now define. For each player $i$, let $\Phi_i \subseteq \cX_i^{\cX_i}$ be a set of transformations mapping the player's strategy set to itself. We will write $\Phi = (\Phi_1, \dots, \Phi_n)$. An {\em $\eps$-approximate $\Phi$-equilibrium} is a distribution $\mu \in \Delta(\cX)$ from which no player can profitably deviate by applying any function $\phi_i \in \Phi_i$. Formally:
\begin{align}
    \E_{\vx\sim\mu} \ab[ u_i(\phi_i(\vx_i), \vx_{-i}) - u_i(\vx) ] \le \eps
\end{align}
for every player $i$ and every deviation $\phi_i \in \Phi_i$. 

It is a standard result in no-regret learning and game theory that, if every player $i$ has $\Phi_i$-regret at most $\eps$, then the average strategy profile $\bar\mu := \frac{1}{T} \sum_{t=1}^T (\mu^{(t)}_1 \times \dots \mu^{(t)}_n)$ is an $\eps$-approximate $\Phi$-equilibrium. In particular, the notions of regret we singled out have corresponding notions of $\Phi$-equilibria: linear regret corresponds to {\em linear correlated equilibria}, and swap regret corresponds to {\em normal-form correlated equilibria}.

To ensure that expected utilities $\E_{\vx\sim\mu} u_i(\vx)$ and their gradients are efficiently computable (which is a basis for many of our algorithms), we will consider only games where the number of players $n$ is an absolute constant; otherwise, even the expected utility of a product distribution $\mu_1 \times \dots \mu_n$ may be hard to compute to high precision. Thus, throughout the paper, $\poly$ ignores the dependence on the number of players $n$. We elaborate on this issue in \Cref{sec:qefp,sec:low dim fptas} where it is relevant. Our lower bounds are not sensitive to this choice: all our lower bounds for equilibrium computation problems hold already for two-player games.

\subsection{Fixed point problems} We will reduce at various times from problems related to finding a fixed point of a contraction. Here, we fully specify these problems. Consider a convex compact set $\cX$ and a function $f : \cX \to \cX$ that is a $(1-\gamma)$-contraction under a norm $\norm{\cdot} : \R^{d_i} \to \R_{\ge 0}$, that is, $\norm{f(\vx) - f(\vy)} \le (1-\gamma) \norm{\vx - \vy}$ for all $\vx, \vy \in \cX$, where $\gamma > 0$. Banach's fixed point theorem guarantees, under these assumptions, that $f$ admits a unique fixed point $\vx^*$. We will always assume that $\max_{\vx, \vy \in \cX}\norm{\vx - \vy} \le 1$ for normalization; since we have assumed $B(\vec 0, 1) \subseteq \cX$, this also implies that $\norm{\cdot}$ is $1$-Lipschitz with respect to $\norm{\cdot}_2$.

\begin{definition}[\Contr]
Given $\cX, f, \norm{\cdot}, \gamma$, and a precision parameter $\delta > 0$, and promised that $f$ is a $(1-\gamma)$-contraction under norm $\norm{\cdot}$, output a point $\vx \in \cX$ for which $\norm{\vx - f(\vx)} \le \delta$. 
\end{definition}

Note that $\gamma$ and $\delta$ are given in binary, so they can be exponentially close to  $0$. We will also sometimes be interested in a generalization of \Contr, where the norm is {\em unknown} and in fact may not be a norm at all, merely an arbitrary convex function. We now define this problem. For a $1$-Lipschitz convex function $Q : \cX \to [0,1]$ and parameter $\gamma > 0$, we say that $f : \cX \to \cX$ is a $(1-\gamma)$-contraction under $Q$ if $Q(f(\vx)) \le (1-\gamma) Q(\vx)$ for every $\vx \in \cX$. Note that this immediately implies that $\min_{\vx \in \cX} Q(\vx) = 0$, by iterating $f$ and using compactness of $\cX$. In the $\UnkContr$ problem, we are tasked with minimizing $Q$ {\em without knowing $Q$}. More formally, $\UnkContr$ is defined as follows.

\begin{definition}[\UnkContr]
    \label{def:UnkContr}
Given $\cX, f, \gamma$, and $\delta > 0$, and promised that $f$ is a $(1-\gamma)$-contraction under some {\em unknown} $1$-Lipschitz convex $Q : \cX \to [0, 1]$, output a distribution $\mu \in \Delta(\cX)$ for which $\E_{\vx\sim\mu} Q(\vx) \le \delta$. 
\end{definition}
The fact that $\UnkContr$ captures $\Contr$ follows immediately by setting $Q(\vx) = \norm{\vx - \vx^*}$ where $\vx^*$ is the unique fixed point of the $\Contr$ instance. We allow the output of $\UnkContr$ to be a {\em distribution} with $\E Q(\vx) \le \delta$, rather than a point with $Q(\vx) \le \delta$ mostly for notational cleanliness, because our algorithms will output distributions of this form; more discussion on this choice is deferred to \Cref{app:unkcontr}.
We call a problem {\em {\sf (Unk)}\Contr-hard} if it admits an efficient Turing reduction from ({\sf Unk)}\Contr.

It is well-known that Condon’s \emph{simple
stochastic games (SSGs)}~\citep{Condon92:Complexity} can be reduced to the problem of computing an $\epsilon$-fixed point of
a $(1 - \gamma)$-contraction over the hypercube under the $\ell_\infty$-norm; thus, $\SSG$ reduces to $\Contr$. Such a reduction was also shown by~\citet{Etessami07:Complexity} for Shapley's stochastic games~\citep{Shapley53:Stochastic}, which is a broader class of games and has been instrumental in the development of multiagent reinforcement learning~\citep{Littman94:Markov}. In stark contrast, computing a fixed point of an $\ell_2$ contraction admits a $\polylog(1/\eps)$ algorithm via the ellipsoid algorithm~\citep{Huang99:Approximating,Sikorski93:Ellipsoid}. %
Indeed, this separation between $\ell_2$-contraction and other norm contractions is precisely what drives the separation between \Cref{theorem:informal-contrahard} (\Contr-hardness of $\Phi$-equilibria in concave games) and \Cref{cor:contr-polydim} (efficient computation when utilities are quadratic): intriguingly, for $\Phi$-equilibria, the {\em utility function} in a game plays the role of the {\em norm} in the contraction problem!

\section{Hardness of swap regret minimization and equilibrium computation in concave games}

The next two sections focus on the challenging setting of {\em swap regret} and {\em normal-form correlated equilibria}, which is the special case of $\Phi$-regret and $\Phi$-equilibria when $\Phi$ contains {\em all transformations} $\cX \to \cX$. In this section, we will provide hardness results that preclude the existence of algorithms with $\poly(d, \log(1/\eps))$ dependence for computing equilibria in this setting,  under the assumption that $\UnkContr$ is hard. In the next section, we will show information-theoretic lower bounds on minimizing swap regret. 

\begin{theorem}\label{th:concave quadratic nfce hard}
    Finding an $\eps$-approximate NFCE of a concave quadratic game is \UnkContr-hard.
\end{theorem}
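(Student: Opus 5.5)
We reduce from \UnkContr{} via the imitation game. Given an instance $(\cX, f, \gamma, \delta)$ with unknown $1$-Lipschitz convex $Q$ satisfying $Q(f(\vx)) \le (1-\gamma) Q(\vx)$, we build a two-player game in which both players have strategy set $\cX$. Player~1 (the imitator) has utility $u_1(\vx, \vy) = -\norm{\vx - \vy}_2^2$. Player~2 has utility $u_2(\vx, \vy) = -\norm{\vy - f(\vx)}_2^2$. Both utilities are concave quadratics in the player's own strategy: the matrix $\mA_i$ is the identity, and the linear term $\vb_i$ is $2\vy$, respectively $2f(\vx)$, which is an arbitrary function of the opponent's strategy, as permitted. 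After rescaling and adding constants, the utilities lie in $[0,1]$ and are Lipschitz. Given an $\eps$-NFCE $\mu$ of this game, we output the marginal of $\mu$ on $\vx$, possibly mixed with the marginal on $\vy$.

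The first step uses the swap deviations to extract conditional-best-response conditions. For player~1, take the deviation $\phi_1(\vx) = \E_\mu[\vy \mid \vx]$, which lies in $\cX$ by convexity. The gain from this deviation equals exactly $\E_\mu \norm{\vx - \E[\vy\mid\vx]}_2^2$, so this quantity is at most $\eps$. Likewise, for player~2 the deviation $\phi_2(\vy) = \E_\mu[f(\vx) \mid \vy]$ gives $\E_\mu \norm{\vy - \E[f(\vx)\mid \vy]}_2^2 \le \eps$. Because $\mu$ has finite support, these conditional expectations are well defined. We then combine these conditions with the convexity of $Q$ and its $1$-Lipschitzness, applying Jensen conditionally:
\begin{align}
\E Q(\vx) &\le \E Q(\E[\vy\mid\vx]) + \sqrt{\eps} \le \E Q(\vy) + \sqrt{\eps}, \\
\E Q(\vy) &\le \E Q(\E[f(\vx)\mid\vy]) + \sqrt{\eps} \le \E Q(f(\vx)) + \sqrt{\eps} \le (1-\gamma)\E Q(\vx) + \sqrt{\eps}.
\end{align}
Chaining the two inequalities gives $\gamma \E Q(\vx) \le 2\sqrt{\eps}$, so choosing $\eps = (\gamma\delta/2)^2$ yields $\E_{\vx\sim\mu} Q(\vx) \le \delta$. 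This $\eps$ is polynomial in the bit size of $\gamma$ and $\delta$, so the reduction is efficient. Note that $Q$ is never used by the reduction, which is why it works for the unknown-norm version.

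The step I expect to need the most care is the second one, namely making sure the swap deviations are legitimate and the Jensen steps are correct. The deviation must be a map from $\cX$ to $\cX$ depending only on the player's own action, which holds here. Jensen must be applied to the convex function $Q$ in the direction that gives an upper bound, $Q(\E[\cdot\mid\cdot]) \le \E[Q(\cdot)\mid\cdot]$, and the Lipschitz step must turn the squared-distance bound into $\sqrt{\eps}$ via Cauchy--Schwarz. The remaining work is formal bookkeeping. We need to normalize the utilities into $[0,1]$ with gradient norms at most $1$, which requires dividing by $O(R^2)$ and rescaling $\eps$ accordingly. We also need to check that gradient oracles for $u_2$ can be simulated from evaluations of $f$; this holds since $\grad_{\vy} u_2 = -2(\vy - f(\vx))$.
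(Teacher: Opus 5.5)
Your proposal is correct and matches the paper's proof: the same imitation game, the same conditional-mean swap deviations giving $\E_\vx \norm{\vx - \E[\vy\mid\vx]}_2^2 \le \eps$ and $\E_\vy \norm{\vy - \E[f(\vx)\mid\vy]}_2^2 \le \eps$, and the same Jensen, Lipschitz and Cauchy--Schwarz steps yielding $\gamma\,\E Q(\vx) \le 2\sqrt{\eps}$ with $\eps = (\gamma\delta)^2/4$. The only cosmetic differences are two. First, you route the chain through the intermediate quantity $\E Q(\vy)$, whereas the paper applies Jensen once to the two-step conditional mean $\E_{\vy\mid\vx,\vx'\mid\vy} f(\vx')$ and then splits with the triangle inequality. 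Second, you normalize the utilities to $[0,1]$ explicitly, which the paper leaves implicit.
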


    Let $f : \cX \to \cX$ be a $(1-\gamma$)-contraction under some unknown convex function $Q : \cX \to [0, 1]$. Define a two-player game as follows: both players have strategy set $\cX$, and the utilities are given by
    \begin{align}
    u_1(\vx, \vy) = -\norm{\vx - \vy}_2^2, \qq{} u_2(\vx, \vy) = -\norm{\vy - f(\vx)}_2^2.
    \end{align}
    Let $\mu$ be an $\eps$-approximate NFCE of this game. Then
    \begin{align} \label{eq:nfce-condition}
        \E_{\vx} \ab[\norm[\Big]{\vx - \E_{\vy|\vx} \vy}_2^2] \le \eps, \qq{}\E_{\vy} \ab[\norm[\Big]{\vy - \E_{\vx|\vy} f(\vx)}_2^2] \le \eps.
    \end{align}
    where $\E_\vx$ and $\E_\vy$ denote sampling from the marginals of $\mu$, and $\E_{\vx|\vy}$ and $\E_{\vy|\vx}$ denote sampling from the conditional distributions of $\mu$ given $\vy$ and $\vx$, respectively. 

    We now prove a useful standalone lemma, which consists of the bulk of the remainder of the proof of \Cref{th:concave quadratic nfce hard}.

    \begin{lemma}\label{lem:lipschitz convex decreases}
        Let $Q : \cX \to \R$ be any convex $1$-Lipschitz function. Then
        $$
            \E_\vx \ab[Q(\vx) - Q(f(\vx))] \le 2 \sqrt{\eps}.
        $$
    \end{lemma}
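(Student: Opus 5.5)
We will chain the two NFCE conditions in \eqref{eq:nfce-condition} through Jensen's inequality for the convex function $Q$ and its Lipschitz property. Write $\bar{\vy}(\vx) := \E_{\vy|\vx}\vy$ and $\bar{f}(\vy) := \E_{\vx|\vy} f(\vx)$. The idea is to pass from $\vx$ to $\vy$, and then from $\vy$ to $f(\vx)$, losing at most $\sqrt{\eps}$ at each step.

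\textbf{Step 1: from $\vx$ to $\vy$.} By $1$-Lipschitzness,
\[
Q(\vx) \le Q(\bar{\vy}(\vx)) + \norm{\vx - \bar{\vy}(\vx)}_2 .
\]
By convexity (Jensen over the conditional distribution of $\vy$ given $\vx$), $Q(\bar{\vy}(\vx)) \le \E_{\vy|\vx} Q(\vy)$. Taking expectation over $\vx$ and applying Cauchy--Schwarz to the first condition in \eqref{eq:nfce-condition}, we get
\[
\E_\vx Q(\vx) \le \E_\vy Q(\vy) + \sqrt{\eps}.
\]

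\textbf{Step 2: from $\vy$ to $f(\vx)$.} Here the direction of Jensen must be handled with care. Lipschitzness gives
\[
Q(\vy) \le Q(\bar{f}(\vy)) + \norm{\vy - \bar{f}(\vy)}_2 ,
\]
and convexity gives $Q(\bar{f}(\vy)) \le \E_{\vx|\vy} Q(f(\vx))$. Averaging over $\vy$, the tower property turns the right-hand side into $\E_{\vx} Q(f(\vx))$ under the joint distribution, which equals the $\vx$-marginal expectation. The second condition and Cauchy--Schwarz again bound the error by $\sqrt{\eps}$, so
\[
\E_\vy Q(\vy) \le \E_\vx Q(f(\vx)) + \sqrt{\eps}.
\]

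Summing the two inequalities yields $\E_\vx[Q(\vx) - Q(f(\vx))] \le 2\sqrt{\eps}$. The only real obstacle is orienting each step correctly: Jensen only upper-bounds $Q$ at a mean, so at every step the averaged point must sit on the left-hand side. This is exactly why the $\vx \to \vy$ step uses the conditional mean of $\vy$ given $\vx$, and the $\vy \to f(\vx)$ step uses the conditional mean of $f(\vx)$ given $\vy$. These are precisely the quantities the two players' best-response conditions control.

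We also need to justify \eqref{eq:nfce-condition} itself, which the text asserts. Player 1 can deviate via $\phi(\vx) = \bar{\vy}(\vx)$, and the gain of that deviation is exactly $\E\norm{\vx - \bar{\vy}(\vx)}^2$. The analogous deviation works for player 2, and $\bar{\vy}(\vx),\bar{f}(\vy)\in\cX$ by convexity of $\cX$, so both are valid deviations. Beyond this, the argument requires no further tools.
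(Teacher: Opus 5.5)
Your proof is correct, but it arranges the inequalities differently from the paper. The paper never passes through $\E_\vy Q(\vy)$. It rewrites $\E_\vx Q(f(\vx))$ as an expectation over the two-step chain $\vx \to \vy \mid \vx \to \vx' \mid \vy$, whose $\vx'$-marginal is again the $\vx$-marginal. It then applies Jensen once, moving the composite conditional mean $\E_{\vy|\vx,\vx'|\vy} f(\vx')$ inside $Q$, and uses Lipschitzness once to reduce everything to the single displacement $\E_\vx\norm{\vx - \E_{\vy|\vx,\vx'|\vy} f(\vx')}_2$. Only then does it split this displacement at $\E_{\vy|\vx}\vy$, via the triangle inequality and convexity of the norm, into the two terms controlled by \eqref{eq:nfce-condition}.

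You instead telescope through the intermediate quantity $\E_\vy Q(\vy)$. You prove two one-step inequalities, each involving a single conditional distribution and a single player's incentive constraint. This avoids the auxiliary three-variable chain and the triangle-inequality step.

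The paper's route isolates one $Q$-free geometric fact: $\vx$ lies, on average, within $2\sqrt{\eps}$ of the mean of $f$ under the two-step kernel, an approximate expected-fixed-point property. The bound for every convex Lipschitz $Q$ follows from it at once. Your route is more modular and shows which player's condition pays for which $\sqrt{\eps}$. Both give the same constant. Your derivation of \eqref{eq:nfce-condition} from the conditional-mean deviations, via the bias--variance identity, is also correct and fills in a step the paper only asserts.
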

    \begin{proof}
        \begin{align}
    \E_\vx \ab[Q(\vx) - Q(f(\vx))] &= \E_\vx Q(\vx) - \E_{\vx, \vy|\vx, \vx'|\vy} Q(f(\vx'))
    \\&\le \E_\vx \ab[Q(\vx) - Q\ab\Big(\E_{\vy|\vx, \vx'|\vy}  f(\vx'))]
        \\&\le \E_\vx \norm[\Big]{\vx - \E_{\vy|\vx, \vx'|\vy} f(\vx')}_2 
        \\&\le \E_{\vx} \ab[\norm[\Big]{\vx - \E_{\vy|\vx} \vy}_2 + \norm[\Big]{\E_{\vy|\vx} \ab\big[\vy - \E_{\vx'|\vy} f(\vx')]}_2 ]
        \\&\le \E_\vx \ab[\norm[\Big]{\vx - \E_{\vy|\vx} \vy}_2 + \E_{\vy|\vx} \norm[\Big]{ \vy - \E_{\vx'|\vy} f(\vx')}_2]
        \\&\le \ab(\sqrt{\E_\vx \ab[\norm[\Big]{\vx - \E_{\vy|\vx} \vy}_2^2]} + \sqrt{\E_{\vy} \ab[\norm[\Big]{ \vy - \E_{\vx'|\vy} f(\vx')}_2^2]})
        \\&\le 2 \sqrt{\eps}
    \end{align}
    where, in turn, we apply: two ways of describing the same distribution, Jensen, Lipschitzness of $Q$, Jensen again twice, and finally \eqref{eq:nfce-condition}. 
    \end{proof}
In particular, if $f$ is also a $(1-\gamma)$-contraction under $Q$, we have
\begin{align}
    \gamma \cdot \E_\vx Q(\vx) \le \E_\vx [Q(\vx) - Q(f(\vx))] \le 2\sqrt{\eps}
\end{align}
upon which taking $\eps = (\delta\gamma)^2/4$ completes the proof of \Cref{th:concave quadratic nfce hard}. \qed

\section{Impossibility of fully polynomial-time swap regret minimization}

In this section, we give {\em information-theoretic} lower bounds that preclude the existence of $\poly(d, 1/\eps)$-round algorithms for achieving swap regret $\eps$ in polytope games, even when utilities are linear. Before proceeding, we require some additional prelimiaries.

\paragraph{Tree-form decision problems and extensive-form games.}
Tree-form decision problems describe {\em sequential} interactions between the player and the environment. In a tree-form decision problem, there is a rooted tree $\cT$ of {\em nodes}. There are two types of nodes: {\em decision points}, at which the player selects an action, and {\em observation points}, at which the environment selects an observation. At each decision point $j$, actions are identified with outgoing edges, and we use $A_j$ to denote this set. Leaves of $\cT$ are called {\em terminal nodes}. The dimension $d$ of the strategy set will be the number of terminal nodes. 

A {\em pure strategy} is a choice of one action at each decision point. The {\em tree-form} or {\em realization-form representation} of the pure strategy is the vector $\vx \in \{0, 1\}^d$ for which $\vx[z] = 1$ if and only if the player plays {\em all} actions on the path from the root node to terminal node $z$. The set of tree-form pure strategies will be denoted $\Pi \subset \{0, 1\}^d$. We will use $N = |\Pi|$ to denote the number of pure strategies.

Different pure strategies may have the same tree-form representation, but for our purposes we will only require the tree-form representation of strategies, and therefore we will not distinguish between strategies with the same tree-form representation. A {\em mixed strategy} $\mu \in \Delta(\Pi)$ is a distribution over pure strategies.

Tree-form decision problems naturally model the decision problems faced by players in an {\em extensive-form game}. For our purposes, a (perfect-recall) extensive-form game with $n$ players is a polytope game in which utility functions are linear and every player's strategy set is a sequence-form strategy set. It is a well-known result that tree-form decision problems can be represented using polytopes with $\poly(d)$ many linear constraints~\citep{Koller94:Fast,Stengel96:Efficient}.

\begin{remark}
When utility functions are promised to be linear, the restriction to playing distributions over $\Pi$ instead of playing distributions over $\cX := \conv \Pi$ is without loss of generality. This is because, by Carath\'eodory's theorem on convex hulls, a learner can replace any strategy $\vx^{(t)} \in \cX$ with a distribution $\mu^{(t)}$ with expectation $\vx^{(t)}$, and therefore achieve the same expected utility while doing no worse in terms of swap regret.
\end{remark}

\subsection{Prior no-swap-regret algorithms}
\paragraph{The simplex}
The most basic setting for swap regret minimization is the case where $\cX$ is the simplex $\Delta(N)$ on $N$ items, \ie, the game is a {\em normal-form game}. In this setting, accurate upper- and lower bounds are known.

\citet{Blum07:From} showed that efficient algorithms exist for minimizing swap regret over the simplex.
\begin{theorem}[\citep{Blum07:From}]
    There exists a no-regret learning algorithm for simplices that achieves average swap regret $\eps$ using $T = \tilde O(N/\eps^2)$ rounds.
\end{theorem}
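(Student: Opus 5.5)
The plan is to go through the standard Blum--Mansour reduction. It turns any external-regret minimizer over the simplex into a swap-regret minimizer by running $N$ copies and combining their outputs through a stationary distribution. We run $N$ independent copies $\cA_1,\dots,\cA_N$ of an external-regret algorithm over $\Delta(N)$; Hedge is enough. At round $t$, copy $\cA_j$ proposes a distribution $\vec q_j^{(t)} \in \Delta(N)$. We stack these as the rows of a row-stochastic matrix $\mat{Q}^{(t)}$, and the learner plays a stationary distribution $\vec p^{(t)}$ satisfying $\vec p^{(t)} = \vec p^{(t)} \mat{Q}^{(t)}$. Such a distribution exists by Perron--Frobenius, or more simply by Brouwer on the simplex, and can be computed by solving a linear system. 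After observing the utility vector $\vec u^{(t)} \in [0,1]^N$, the learner feeds copy $\cA_j$ the scaled utility $p^{(t)}_j \vec u^{(t)}$.

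The key identity comes from stationarity:
\begin{align}
\ip{\vec p^{(t)}}{\vec u^{(t)}} = \sum_j p^{(t)}_j \ip{\vec q_j^{(t)}}{\vec u^{(t)}}.
\end{align}
So the learner's utility equals the sum of the utilities the copies obtain on their scaled losses. Any swap deviation is a map $\phi : [N]\to[N]$, and its utility is $\sum_j p^{(t)}_j u^{(t)}_{\phi(j)}$. This splits into $N$ terms, one per $j$, each comparing copy $j$ against the fixed action $\phi(j)$. Summing over $t$, the swap regret is therefore at most $\sum_j \mathrm{ExtReg}_j(T)$, where $\mathrm{ExtReg}_j(T)$ is the external regret of copy $j$ on its scaled utilities.

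It remains to bound $\sum_j \mathrm{ExtReg}_j$ sharply enough to obtain $\tilde O(N/\eps^2)$ rounds. The naive bound $O(\sqrt{T\log N})$ per copy gives a total of $N\sqrt{T\log N}$, and this leads to $T = \tilde O(N^2/\eps^2)$. This step is the main obstacle. The fix is a second-order, or "small-loss", bound for Hedge. Copy $j$'s utilities have magnitude at most $p^{(t)}_j$, so its regret is at most $O\bigl(\sqrt{L_j \log N}+\log N\bigr)$, where $L_j = \sum_t p^{(t)}_j$ is its cumulative scale. Because $\sum_j L_j = T$, Cauchy--Schwarz gives
\begin{align}
\sum_j \sqrt{L_j\log N} \le \sqrt{N T\log N}.
\end{align}
The total swap regret is then $O\bigl(\sqrt{NT\log N} + N\log N\bigr)$.

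To make the small-loss bound valid, we convert utilities to losses $\ell = 1-u$, scaled by $p_j$. The bound must be in terms of the cumulative scale, or of the loss of the best action, and not in terms of $T$. This is the standard multiplicative-weights guarantee $\mathrm{Reg} \le O(\sqrt{L^*\log N} + \log N)$ with $L^* \le L_j$, obtained either with a doubling trick on the learning rate or with an adaptive step size such as AdaHedge. Dividing by $T$, the average swap regret is $O\bigl(\sqrt{N\log N/T}\bigr)$, and setting $T = \tilde O(N/\eps^2)$ makes this at most $\eps$, as claimed.
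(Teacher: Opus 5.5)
Your proposal is correct and is essentially the original Blum--Mansour argument, which the paper cites without reproducing a proof: you run $N$ external-regret copies, combine them through a stationary distribution of the induced row-stochastic matrix, and sum their first-order (small-loss) regret bounds via Cauchy--Schwarz to get swap regret $O(\sqrt{NT\log N} + N\log N)$, hence $T = O(N\log N/\eps^2)$ rounds. Your handling of the unknown scale $L_j$ (doubling trick or an adaptive step size such as AdaHedge) costs at most additional logarithmic factors, which the $\tilde O$ absorbs.
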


One may wonder whether this is optimal, \eg, whether it is possible to achieve a logarithmic dependence on $N$. Recent simultaneous work by \citet{Dagan24:From} and \citet{Peng24:Fast} has essentially completely answered this question for normal-form games.
\begin{theorem}[\citep{Dagan24:From,Peng24:Fast}, upper bound]\label{th:ub}
    There exists a no-regret learning algorithm for simplices that achieves average swap regret $\eps$ within $T = (\log N)^{\tilde O(1/\eps)}$ rounds.
\end{theorem}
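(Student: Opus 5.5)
We prove the upper bound of \Cref{th:ub} with the ``TreeSwap''-style recursive construction of \citet{Dagan24:From,Peng24:Fast}, which reduces swap regret to external regret at many time scales. Fix a depth $D$ and a branching factor $M$, and set $T = M^D$. We index rounds by base-$M$ strings $t = (t_1,\dots,t_D)$ and, for each level $\ell \in [D]$ and each prefix $(t_1,\dots,t_{\ell-1})$, run a fresh copy of an external-regret learner (multiplicative weights over $\Delta(N)$). The learner at level $\ell$ operates on the coarse time scale: it updates only when the $\ell$-th digit changes, and its feedback is the average utility vector over the block of $M^{D-\ell}$ rounds sharing that prefix. At round $t$, the algorithm plays the uniform mixture $\mu^{(t)} = \frac1D \sum_{\ell=1}^D \vx_\ell^{(t)}$ of the current iterates of the $D$ active learners.

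To analyze swap regret, we fix any $\phi : [N]\to[N]$, which extends linearly to $\Delta(N)$. The swap regret equals $\frac1T\sum_t \ip{u^{(t)}}{\phi(\mu^{(t)}) - \mu^{(t)}}$. The key telescoping step compares $\mu^{(t)}$ against a ``shifted'' mixture in which the level-$\ell$ iterate is replaced by the level-$(\ell+1)$ iterate (or by $\phi$ applied to it). Because $\phi$ is applied pointwise to a mixture of $D$ strategies, we can write
\[
\phi(\mu^{(t)}) - \mu^{(t)} = \frac1D\sum_\ell \ab(\phi(\vx_\ell^{(t)}) - \vx_{\ell+1}^{(t)}) + \frac1D\ab(\vx_{D+1}^{(t)} - \vx_1^{(t)})
\]
up to boundary terms of size $O(1/D)$. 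Each term $\ip{u}{\phi(\vx_\ell) - \vx_{\ell+1}}$, averaged over a level-$\ell$ block, is bounded by the external regret of the level-$(\ell+1)$ learner against the fixed comparator $\phi(\vx_\ell)$, which is constant on that block. The delicate point is that this only works if the comparator is frozen while the finer learner runs, which is exactly what the tree indexing guarantees. Averaging gives swap regret at most $O(1/D) + O(\sqrt{\log N / M})$.

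To reach swap regret $\eps$, we take $D = \Theta(1/\eps)$ and $M = \Theta(\log N/\eps^2)$, which yields $T = M^D = (\log N/\eps^2)^{O(1/\eps)} = (\log N)^{\tilde O(1/\eps)}$, as claimed. I expect the main obstacle to be making the telescoping rigorous. One must check that $\phi(\mu)$ can be handled via the convex combination above, which requires linearity of $\phi$ on $\Delta(N)$; this holds because any swap map on pure actions extends linearly, a property special to simplices. One must also control the mismatch between the learners' block-averaged losses and the per-round losses.
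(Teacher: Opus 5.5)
Your proposal is correct and is essentially the TreeSwap construction and telescoping analysis of \citet{Dagan24:From}; the paper does not reprove this theorem and simply cites that work (together with \citet{Peng24:Fast}). One small cleanup: there is no level-$(D+1)$ learner, so run the telescoping sum over $\ell \le D-1$ and keep the single boundary term $\frac1D\langle u^{(t)}, \phi(\vx_D^{(t)}) - \vx_1^{(t)}\rangle$. Also, the block-averaging ``mismatch'' you worry about does not arise: $\vx_{\ell+1}^{(t)}$ is constant on level-$(\ell+1)$ blocks, so each block sum is exactly the block size times the averaged feedback.
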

Both papers also provided (nearly-)matching lower bounds. Here we state a particularly simple-to-state lower bound proven by \citet{Dagan24:From}.
\begin{theorem}[Theorem 4.1 of \citep{Dagan24:From}, lower bound]\label{th:lb}
  Let $T < N/4$. Then, in the there exists an oblivious %
  adversary such that the swap regret of any learner for the $N$-simplex is $\Omega(\log^{-5}T)$.
\end{theorem}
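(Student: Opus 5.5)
Since this is an imported result, the plan is to reconstruct the argument of \citet{Dagan24:From}, not to build on anything earlier in this paper. We use Yao's minimax principle: it suffices to construct a \emph{randomized} oblivious adversary, that is, a distribution over loss sequences fixed in advance, against which every deterministic learner has expected swap regret $\Omega(\log^{-5} T)$.

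The hypothesis $T < N/4$ is used to guarantee that at every round at least $3N/4$ actions have never received any distinguished reward. The adversary can therefore always introduce a \emph{fresh} action, chosen uniformly at random among these unused ones. A learner that has not yet seen the fresh action cannot place more than about $4/N \le 1/T$ mass on it in expectation, so before an action is revealed the learner is essentially blind to it.

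The construction is multi-scale, with $L = \Theta(\log T)$ levels.
\begin{enumerate}
\item Partition $[T]$ dyadically into blocks. At level $\ell$ there are $2^{\ell}$ blocks, each of length $T/2^{\ell}$.
\item At the start of each block, at each level, draw a fresh random action $a_B$.
\item Give $a_B$ a reward of order $1/L$ throughout block $B$, so that the total reward of an action is a sum of contributions from the nested blocks containing the current time.
\item Choose the rewards so that later, deeper fresh actions strictly dominate the ones revealed earlier in the enclosing block.
\end{enumerate}

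The key step is the swap-regret accounting. For each action $a$ the learner uses, the swap deviation can send $a$ to a single target, and we choose that target to be an action that is revealed later and dominates $a$ within the relevant block. The argument is a case split at each scale. If the learner keeps substantial mass on stale actions, the deviation from those actions to the newly revealed $a_B$ gains roughly $1/L$ per unit of mass. If instead the learner promptly moves its mass to $a_B$, then $a_B$ becomes stale at the next finer scale and is charged there. A potential argument summed over the $L$ levels should show that a constant fraction of rounds incur misallocated mass at some scale. Each such round contributes gain $\Omega(1/\mathrm{poly}(L))$, which yields average swap regret $\Omega(L^{-c})$.

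The main obstacle is making these per-scale gains simultaneously realizable by \emph{one} swap function. An action $a$ used across many blocks can be mapped to only one target, and the learner may spread its mass to make different scales demand conflicting targets. I would first try the following bucketing. Classify each (action, block) pair by the scale at which it is charged. Pick the heaviest scale, losing a factor of $L$. Within that scale, choose targets greedily block by block, losing a further factor of $L$ from overlaps across the nested structure. The remaining polylogarithmic losses come from normalizing rewards to $[0,1]$ and from the learner's small advance mass on fresh actions. Together these account for the exponent $5$.
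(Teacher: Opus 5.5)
There is a genuine gap. The whole content of the theorem is the swap-regret accounting, and you only conjecture it (``a potential argument \ldots\ should show'', ``I would first try''). The exponent $5$ is fitted to the target rather than derived.

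Worse, the mechanism in your steps 3--4 cannot work as stated. A fresh action $a_B$ is alive only on $B$. So redirecting an action $a$ to $a_B$ gains only on the rounds of $B$ where $a$ is played, and forfeits $a$'s reward on every other round where $a$ is played. Your dichotomy (``keeps mass on stale actions'' versus ``promptly moves its mass to $a_B$'') is therefore not exhaustive. A learner can hedge across scales by keeping a roughly constant mass on each revealed action for its whole lifetime, which is exactly what the multi-scale algorithms behind \Cref{th:ub} do.

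Concretely, take the natural instantiation of your construction: the level-$\ell$ active action earns $w_\ell=(2L+\ell)/(2L^2)$ on every round of its block. These rewards are increasing in depth, all of order $1/L$, and satisfy $w_{\ell+m}\le\tfrac32 w_\ell$. Let the learner split its mass evenly, at each round, among the already revealed active actions whose blocks have length at least $4$. Each such $a_B$, at level $\ell$ with $|B|=2^r$, is then played at rate about $1/L$ on every round of $B$ except the first. Now consider the possible redirections of $a_B$:
\begin{itemize}
\item to an ancestor, it loses, since $w_j<w_\ell$;
\item to any action dead on $B$, it loses;
\item to a descendant $a_C$ that is $m\ge 1$ levels deeper, utility changes by about $\frac1L\bigl(w_{\ell+m}2^{r-m}-w_\ell(2^r-1)\bigr)$, which is negative with a constant-factor margin because $2^m-2^{m-r}\ge\tfrac32$.
\end{itemize}
So, up to lower-order boundary terms, this learner has essentially no swap regret against your adversary, and no bucketing of the accounting can extract a $1/\mathrm{poly}(L)$ bound from it.

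A correct proof needs an adversary under which such hedging is itself exploitable, together with an actual proof of the accounting. That is precisely what \citet{Dagan24:From} supply in their Theorem 4.1. The paper does not reprove it; it imports it, recording in \Cref{th:normalform} the structure it later uses: a single rewarded action per round, $O(\log T)$ action classes each rewarded in increasing order, and one unused action. As written, your proposal is a research plan rather than a proof.
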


\paragraph{Extensive-form games}

For extensive-form games, the picture is less clear. For an upper bound, one can consider a tree-form decision problem with $N$ pure strategies (\ie, $|\Pi| = N$) as simply an ``easier version'' of a normal-form decision problem where each pure strategy is treated as a different action, \ie, where the strategy set is the $N$-simplex. \Cref{th:ub} therefore implies a similar bound on swap regret for tree-form decision problems.\footnote{As stated, the bound is only information-theoretic. However, the information-theoretic bound is implementable by an efficient (\ie, $\poly(m, 1/\eps)$-time-per-iteration) algorithm, which is described by \citet{Dagan24:From} and \citet{Peng24:Fast}, and is beyond the scope of this paper.}
\begin{corollary}[\citealp{Dagan24:From,Peng24:Fast}, tree-form upper bound]\label{th:efg-upper}
    There exists a no-regret learning algorithm for tree-form decision problems that achieves swap regret $\eps$ after $T = (\log N)^{\tilde O(1/\eps)} \le d^{\tilde O(1/\eps)}$ rounds.
\end{corollary}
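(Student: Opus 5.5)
The plan is to reduce directly to \Cref{th:ub}. We simulate the simplex algorithm over the $N$ pure strategies in $\Pi$ and check that the tree-form swap regret of the simulation is at most the simplex swap regret.

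Concretely, identify each pure strategy $\pi \in \Pi$ with a vertex of the $N$-simplex $\Delta(N)$. At each round, the simplex algorithm of \Cref{th:ub} outputs a distribution $p^{(t)} \in \Delta(N)$. The tree-form learner plays the mixed strategy $\mu^{(t)} \in \Delta(\Pi)$ that puts mass $p^{(t)}[\pi]$ on $\pi$. When the adversary reveals a linear utility $u^{(t)}(\vx) = \ip{\vec u^{(t)}, \vx}$, we feed the simplex algorithm the utility vector $\ell^{(t)}[\pi] := \ip{\vec u^{(t)}, \pi}$. Since $u^{(t)}$ is normalized to lie in $[0,1]$ on $\cX$, and hence on $\Pi$, the feedback is a valid bounded simplex utility. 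This keeps the normalization assumed in \Cref{th:ub}.

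Next we compare regrets. By the Remark, we may restrict to mixed strategies over $\Pi$ under linear utilities. Any swap deviation for the tree-form learner is then a map $\phi : \Pi \to \cX$. We may moreover assume $\phi : \Pi \to \Pi$: the deviation gain is linear in $\phi(\pi)$ for each fixed $\pi$, so the supremum over $\cX = \conv \Pi$ is attained at a vertex, pointwise in $\pi$. Such a $\phi$ is exactly a swap function on the $N$ actions of the simplex problem. The gain
\[
\frac1T\sum_t \E_{\pi\sim\mu^{(t)}}\bigl[u^{(t)}(\phi(\pi)) - u^{(t)}(\pi)\bigr]
\]
therefore coincides with the simplex swap regret term for $\phi$. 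So the tree-form swap regret equals the simplex swap regret, which is at most $\eps$ after $T = (\log N)^{\tilde O(1/\eps)}$ rounds by \Cref{th:ub}.

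Finally, we bound $\log N$ in terms of $d$. A pure strategy is determined by its realization vector $\vx \in \{0,1\}^d$, and we do not distinguish strategies with the same representation, so $N \le 2^d$. Hence $\log N \le d$ and $(\log N)^{\tilde O(1/\eps)} \le d^{\tilde O(1/\eps)}$.

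The only step requiring care is the vertex reduction for deviations. It rests on linearity of the utilities together with the convention that regret is measured in expectation over the mixed strategy, and both are given here. No further obstacle is expected.
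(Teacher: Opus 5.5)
Your proposal is correct and follows the paper's argument: treat each pure strategy in $\Pi$ as an action of the $N$-simplex, invoke \Cref{th:ub}, and use $N \le 2^d$ to get the $d^{\tilde O(1/\eps)}$ bound. You also make explicit two steps the paper leaves implicit: translating linear utilities into bounded simplex feedback, and reducing deviations $\Pi \to \cX$ to swap functions $\Pi \to \Pi$ by linearity.
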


Showing a matching lower bound for extensive form, however, remained open. The main difficulty is that the adversary is restricted to {\em linear} utility functions $u^{(t)} : \cX  \to \R$; the adversary in \Cref{th:lb} does not use linear utility functions when the extensive-form game is interpreted as a normal-form game over $N$ actions as described above. The main result in this section closes this discrepancy, by showing a lower bound that almost matches \Cref{th:efg-upper}.

The main result of this section is the following.

\begin{theorem}\label{th:efg-lower}
    There exist arbitrarily large tree-from strategy sets with the following property. Let $\eps > 0$ and suppose $T \le \exp\ab(\Omega\ab(\min\ab\{d^{1/14}, \eps^{-1/6}\}))$. Then there exists an oblivious adversary running for $T$ iterations against which no learner can achieve expected swap regret better than $\eps$.
\end{theorem}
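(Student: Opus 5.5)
The plan is to reduce from the normal-form lower bound \Cref{th:lb} of \citet{Dagan24:From}. I will build a tree-form strategy set of dimension $d$ whose pure strategies contain $N = \exp(d^{c})$ ``designated'' strategies. These designated strategies should behave, under \emph{linear} utilities, like the $N$ vertices of a simplex. Then I translate the oblivious adversary of \Cref{th:lb}, run on $N$ actions with $T < N/4$, into a linear adversary on the tree-form polytope. The final step converts any learner on the tree-form set into a learner on the simplex with comparable swap regret. The exponents $d^{1/14}$ and $\eps^{-1/6}$ are the polynomial losses accumulated in this embedding. Concretely, they come from choosing $N$ so that $\log^{-5} T$ survives an embedding error that is polynomially small in $d$.

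\emph{Step 1 (the gadget).} Take an error-correcting code $C \subseteq \{0,1\}^m$ with $|C| = N = 2^{\Omega(m)}$ and constant relative distance. Realize $\{0,1\}^m$, or a lightly augmented version of it, as a tree-form set. One way is a chain of $m$ binary decision points; another adds a depth-two layer so that terminal coordinates also encode some pairwise products $x_ix_j$. Either way $d = \poly(m)$. For each codeword $c$, the linear utility $\langle 2c - \mathbf 1, \vx\rangle$ restricted to the bit coordinates is uniquely maximized at $c$. Its value drops by at least the Hamming distance, and hence by $\Omega(m)$ at every other codeword.

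\emph{Step 2 (simulating utilities and rounding).} Let the simplex adversary play $u^{(t)} \in [0,1]^N$. I want a linear function $\ell^{(t)}$ on the tree-form set with two properties. First, on codewords it agrees with $u^{(t)}$ up to a small additive error. Second, it penalizes non-codeword pure strategies, so that they are dominated by their nearest codeword. The natural candidate is a large penalty term $\lambda\langle 2\bar c - \mathbf 1,\vx\rangle$ together with a scaled payoff term. Given a tree-form learner, I define a simplex learner that samples the learner's pure strategy and rounds it to the nearest codeword. I then check that any tree-form swap deviation maps to a simplex swap deviation with at most a polynomially small loss, and conversely. This gives $\Reg^{\text{tree}} \ge \Reg^{\text{simplex}} - \poly(1/m)$, which after rescaling yields the stated bound.

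\emph{Main obstacle.} A general vector $u^{(t)} \in [0,1]^N$ is not the restriction of a linear function to $N = \exp(m)$ codewords in dimension $\poly(m)$. Step 2 therefore cannot work verbatim for an arbitrary adversary. My first attempt is to open up the hard instance of \Cref{th:lb} and check whether it only uses structured utilities. The hoped-for form is utilities that reward a single ``current'' action, or a small set of actions organized by a tree or epoch schedule. Such utilities are expressible as $\langle w, \vx\rangle$ by peaking at a chosen codeword, with a gap coming from the code distance. If the construction needs richer utilities, I will re-prove the Dagan et al.\ argument directly with peaked linear utilities. Their argument is a recursive scheme in which the adversary rewards a fresh action that the learner has not yet concentrated on. With peaked utilities, the quantitative loss from the gap and the rounding is what I expect to produce the $d^{1/14}$ and $\eps^{-1/6}$ exponents.
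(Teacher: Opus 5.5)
There is a genuine gap in Step 2. You correctly isolate the obstacle that a general $u^{(t)} \in [0,1]^N$ is not linear on $N = \exp(m)$ points, and you find the right way around it: the hard adversary of \citet{Dagan24:From} only ever rewards a single action, $\ind\{a^{(t)} = u^{(t)}\}$, so each round can be embedded as a linear functional peaked at one vector. Two smaller problems come first.

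\emph{Code property.} Constant relative distance is the wrong property to ask of the code. Your functional $\langle 2c - \mathbf 1, \vx\rangle$ takes the value $|c| - d_H(c,c')$ at another codeword $c'$, and this can vary by $\Theta(m)$ across codewords. After normalization, the embedded indicator then gives \emph{constant} utility to many wrong actions. What you need is near-orthogonality of the embedded actions, $|\langle \tilde\va, \tilde\va'\rangle - \ind\{a = a'\}| \le \eps$ for all pairs (\Cref{lem:concentration}). This costs $n = O(\eps^{-2}\log N)$ coordinates via random $\pm 1/\sqrt n$ vectors and Hoeffding.

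\emph{Tree realization.} A chain of $m$ binary decision points without observation points has $2^m$ terminal nodes. To keep $d$ linear, the environment must choose the coordinate, as in \Cref{fig:dp}.

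The serious gap is the rounding. Rounding to the nearest codeword cannot keep the tree-form learner's utility below the simplex learner's. Neither can any time-independent rounding $\vx \mapsto p_\vx \in \Delta(\cA)$, checked against every round. The reason is that one vertex of the cube can correlate with many codewords at once. In the $\pm 1/\sqrt n$ normalization, $\vx = n^{-1/2}\operatorname{sign}(\sum_{j \le r} \tilde\va_j)$ has $\langle \vx, \tilde\va_j\rangle \approx \sqrt{2/(\pi r)}$ for each $j \le r$. Hence $\sum_a \langle \vx, \tilde\va\rangle_+ \approx \sqrt{2r/\pi} \gg 1 = \sum_a p_\vx(a)$. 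Already for two codewords, a point halfway between $c_0$ and $c_1$ earns about $1/2$ on both, while nearest-codeword rounding gives $0$ on one of them. A linear penalty cannot repair this: a linear functional that is $\approx 1$ at $c_0$ and $\approx 0$ at $c_1$ averages $\approx 1/2$ over such halfway points. Such hedging strategies may make the tree-form problem easier than the simplex, so your two-way transfer of swap regret is unsupported. Your fallback of re-proving \citet{Dagan24:From} for peaked utilities defers the difficulty rather than resolving it.

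The paper overcomes this with three ingredients your plan lacks.
\begin{enumerate}
\item The $\tilde\va_{ij}$ are drawn uniformly at random. Before $\tilde\va_{ij}$ is revealed, the learner's play is independent of it, so play puts total normalized mass at most $N e^{-n\eps^2/2} \le \delta$ on strategies correlating more than $\eps$ with it (\Cref{lem:delta-bound}). This rules out hedging on future actions.
\item It uses the finer structure in \Cref{th:normalform}: a partition $\cA = \cA_1 \sqcup \dots \sqcup \cA_k$, $k = O(\log T)$, within which actions are played in increasing order. This is mirrored by a root decision $i \in [k]$, so each pure strategy lives in one group. Hedging on past actions then becomes worthless, since they are never rewarded again.
\item $\vx$ is rounded to $a_{ij}$ for the \emph{largest} $j$ with $\langle \vx, \tilde\va_{ij}\rangle > \eps$, with probability $\beta = \langle \vx, \tilde\va_{ij}\rangle$, and otherwise to the never-used action $a^*$. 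This yields $\langle \vx, \tilde\vu^{(t)}\rangle \le p_\vx(u^{(t)}) + \eps$ only for $t > t_\vx$ (\Cref{lem:epsilon-bound}), which is all the argument needs.
\end{enumerate}

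Finally, the exponents. With indicator ($\ell_1$-bounded) utilities the normal-form bound is $\Omega(k^{-6})$, not the $\log^{-5} T$ of \Cref{th:lb}. Taking $\eps = \Theta(k^{-6})$ and $n = \Theta(\eps^{-2}\log N)$ gives $d = kn = O(\log^{14} N)$. This is where $d^{1/14}$ and $\eps^{-1/6}$ come from.
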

Intuitively, the proof of \Cref{th:efg-lower} works by finding an ``embedding'' of the adversary of \Cref{th:lb} into a tree-form decision problem such that the utility functions $u^{(t)}$ do remain linear. This works by choosing random vectors in $\{-1, 1\}^n$ (for some appropriately chosen dimension $n$) to simulate the ``actions'' in the (exponentially large) normal-form decision problem, and then exploiting the concentration property that an exponentially large number of such vectors $\{\va_i\}_{i=1}^{N}$ can be chosen such that $\ip{\va_i, \va_j} \approx 0$ for all $i \ne j$. 

Before proving \Cref{th:efg-lower}, we first state a more detailed version of the normal-form lower bound (\Cref{th:lb}).\footnote{The discussion of \citet[pp 37--38]{Dagan24:From} specifies the adversary which satisfies the properties listed in \Cref{th:normalform}.} This restatement changes the notation so as to avoid mixing the notation between tree-form and normal-form decision problems, and extracts some useful properties of the adversary. In particular, a key property of the adversary that we exploit (specified in \Cref{it:adv-unit} in the below theorem) is that, with probability 1, each of the utilities that it chooses is in fact the indicator of a single action. (In general, in the normal-form game setting, the vectors $\vu^{(t)}$ may have coordinates in $[-1,1]$.)

\begin{theorem}[\citep{Dagan24:From}, expanded version of \Cref{th:lb}]\label{th:normalform}
    Let $\cA$ be the set of vertices of the $N$-simplex, and let $T < N/4$. Then there exists an adversary on $\cA$ with the following properties:
    \begin{enumerate}
        \item  \label{it:adv-unit} The adversary selects a sequence $(u^{(1)}, \dots, u^{(T)}) \sim \cD$ from some distribution $\cD \in \Delta(\cA^T)$, and then outputs utility  $\ind\{ a^{(t)} = u^{(t)}\} $ at time $t$ regardless of the sequence of distributions played by the learner.
        \item There exists an action $a^* \in \cA$ that is never used by the adversary.\footnote{This can always be assumed WLOG.}
        \item There exists a partition $\cA = \cA_1 \sqcup \dots \sqcup \cA_k$ where $k \le O(\log T)$ with the following property. 
        Within each set $\cA_i$, number the actions $\cA_i = \{ a_{i1}, \dots, a_{iN_i} \}$. For any sequence $(u^1, \dots, u^T)\in \supp \cD$, the adversary plays actions in $\cA_i$ only in increasing order. That is, if $u^{(t)} = a_{ij}$ and $u^{(t')} = a_{ij'}$ and $t \le t'$, then $j \le j'$. 
        \item The swap regret of any learner against this adversary is\footnote{The reason that the $-5$ in \Cref{th:lb} has been changed to a $-6$ here is that the adversary is now constrained to pick a sequence of {\em actions}, \ie, $\ell_1$-bounded losses, instead of $\ell_\infty$-bounded losses. See \citet[Theorems 1.7 and 4.1]{Dagan24:From}.} $\Omega(k^{-6}) = \Omega(\log^{-6} T)$. %
        \label{prop:increasing}
    \end{enumerate}
\end{theorem}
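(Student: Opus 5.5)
This statement is essentially a repackaging of the construction behind \Cref{th:lb}, so the plan is to reopen the adversary of \citet[Section 4]{Dagan24:From} rather than give a new argument. For each of the four items, I would check that the construction already has the property, or can be modified to have it at negligible cost.

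\textbf{Step 1: recall the adversary.} I would first recall the adversary in the form it takes in \citet[pp.~37--38]{Dagan24:From}. It is oblivious and randomized, and has a hierarchical structure with $k = O(\log T)$ scales. At scale $i$, time is divided into blocks. In each block the adversary rewards one ``current'' action. Whenever the block at scale $i$ ends, the current action at that scale is retired and replaced by a fresh, never-before-used action. At each round the adversary samples a scale and rewards that scale's current action. The learner then faces a dilemma. Either it commits to recent actions, which can be swapped profitably to a coarser-scale action. Or it hedges, which can be swapped profitably to the currently rewarded action. Either way this forces swap regret $\Omega(\mathrm{poly}(1/k))$.

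\textbf{Step 2: items 1--3.} Given this description, the structural items should follow directly.
\begin{itemize}
\item For item~1, I would implement each round's reward as the indicator $\ind\{a^{(t)} = u^{(t)}\}$ of the single sampled action. Any averaging across scales is replaced by sampling one scale. The utility sequence is drawn up front from a distribution $\cD$, independent of the learner, so the adversary is oblivious.
\item For item~2, at most $T < N/4$ distinct actions ever appear, so some $a^*$ is never used.
\item For item~3, I would take $\cA_i$ to be the actions ever used at scale $i$, indexed in the order they are introduced, and place unused actions arbitrarily, for instance appended at the end of $\cA_k$. Since retired actions are never reused, each scale's actions appear in increasing order. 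If a single action could be current at two scales, I would give each scale its own disjoint pool of fresh actions. This is affordable because the total number of fresh actions needed is at most $T < N/4$.
\end{itemize}

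\textbf{Step 3: item 4 (the main obstacle).} The main work is the regret bound after the two modifications: indicator rewards, and sampling one scale instead of summing $k$ scale-weighted terms. These convert $\ell_\infty$-bounded utilities into $\ell_1$-bounded ones. I would redo the potential or case analysis of \citet[Theorem~4.1]{Dagan24:From} with utilities rescaled by $1/k$, which is what the sampling effectively does in expectation. By linearity of expectation, the learner's expected utility under the sampled adversary equals $1/k$ times its utility under the summed adversary. The same identity holds for every fixed swap function. Hence the swap regret scales by exactly $1/k$, giving $\Omega(k^{-5}) \cdot k^{-1} = \Omega(k^{-6})$, which matches \citet[Theorem~1.7]{Dagan24:From}.

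The delicate point is that swap regret involves a supremum over swap functions taken after the expectation over $\cD$. I would therefore check that the lower-bound argument of Dagan et al.\ already exhibits, for every learner, a swap function that is good in expectation over the adversary's randomness. I expect this to hold, since that is how their bound is stated. With $k = O(\log T)$, this yields $\Omega(\log^{-6} T)$.
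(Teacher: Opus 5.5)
Your route is the paper's own. The paper gives no independent proof: it cites the adversary described in \citet[pp.~37--38]{Dagan24:From} and attributes the change of exponent from $-5$ to $-6$ (from $\ell_\infty$- to $\ell_1$-bounded utilities) to their Theorems~1.7 and~4.1. Your checks of items 1--3 are fine, with one caveat. The ``disjoint pools'' fallback changes the instance, and having enough spare actions does not show that Dagan et al.'s regret analysis survives that change. You would instead need to confirm that their adversary already assigns each action to a single scale, which is what the paper asserts.

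The real gap is in item 4: the claim that ``the swap regret scales by exactly $1/k$'' does not follow from linearity. Linearity handles each \emph{fixed} swap $\phi$. But swap regret has $\max_\phi$ inside the expectation, and the sampled learner's plays $p_t$ depend on the scale-sampling noise. So linearity only gives you $\sup_\phi \E$ over that noise, while Dagan et al.'s bound controls $\E\sup_\phi$, which points the wrong way. The issue is therefore not the adversary's base randomness, on which the swap may depend, but the extra sampling noise and the learner's coins.

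The fix gives an inequality, not an identity. Set up notation as follows:
\begin{itemize}
\item let $\vec w^{(t)}$ be the summed ($\ell_\infty$) utility vector;
\item let $u^{(t)}$ be the current action of the sampled scale $I_t$;
\item let $\bar p_t := \E[p_t \mid \vec w^{(1:T)}]$.
\end{itemize}
The $I_s$ and the learner's coins are independent of the base sequence, so $\bar p_t$ is a function of $\vec w^{(1:t-1)}$ alone. It is therefore a legitimate distribution-playing learner against the summed adversary. Since $p_t$ is independent of $I_t$ given the past, $\E[p_t(a)\ind\{u^{(t)}=b\}\mid \vec w^{(1:T)}] = \bar p_t(a)\,\vec w^{(t)}[b]/k$. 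Applying $\E\max_b \ge \max_b \E$ separately for each $a$ gives
\[
\E\Big[\sum_{a}\max_{b}\sum_{t=1}^T p_t(a)\big(\ind\{u^{(t)}=b\}-\ind\{u^{(t)}=a\}\big)\,\Big|\,\vec w^{(1:T)}\Big] \ge \frac{1}{k}\sum_{a}\max_{b}\sum_{t=1}^T \bar p_t(a)\big(\vec w^{(t)}[b]-\vec w^{(t)}[a]\big).
\]
Now take expectations and apply the $\ell_\infty$ bound $\Omega(k^{-5})$ to the learner $\bar p$; this yields average swap regret $\Omega(k^{-6})$. The argument needs two facts. First, the $\ell_\infty$ lower bound must hold for learners that play distributions, which it does. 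Second, each $\vec w^{(t)}$ must be a sum of indicators of $k$ distinct actions, one per scale, so that sampling a scale produces exactly the single-action indicator form of item~1.
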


We now prove \Cref{th:efg-lower} via a reduction from \Cref{th:normalform}. In particular, we fix $N \in \mathbb{N}$ as in \Cref{th:normalform}, and let $\cA$ denote the action set for the lower bound in \Cref{th:normalform}. We will often use the decomposition $\cA = \cA_1 \sqcup \cdots \sqcup \cA_k$. 

\subsection{The adversary}

We now specify the adversary that we will use in the proof. Consider the following family of tree-from strategy sets, parameterized by natural numbers $k$ and $n$. First the learner picks an index $i \in [k]$. Then the environment picks $j \in [n]$, and finally the learner picks a binary action. This family of decision problems is depicted in \Cref{fig:dp}.

\begin{figure}[h]
\definecolor{p1color}{RGB}{31,119,180}
\newcommand{\pone}{{\ensuremath{\color{p1color}\blacktriangle}}\xspace}
\tikzset{
  every path/.style={-},
}
\forestset{
        default preamble={for tree={
        parent anchor=south, child anchor=north
}},
  p1/.style={
      regular polygon,
      regular polygon sides=3,
      inner sep=2pt, fill=p1color, draw=p1color,
  },
  nat/.style={draw},
  terminal/.style={draw=none, inner sep=2pt},
   el/.style n args={1}{edge label={node[midway, fill=white, inner sep=1pt, draw=none, font=\footnotesize] {#1}}},
}
\centering
\scalebox{0.8}{
\begin{forest}
for tree={l+=1em},
    [,p1
        [,nat,el={$i{=}1$},s sep=3em
            [,p1,el={$j{=}1$} [] []]
            [,p1,el={$j{=}2$} [] []]
            [$\dots$, draw=none, no edge]
            [,p1,el={$j{=}n$} [] []]
        ]
        [,nat,el={$i{=}2$},s sep=3em
            [,p1,el={$j{=}1$} [] []]
            [,p1,el={$j{=}2$} [] []]
            [$\dots$, draw=none, no edge]
            [,p1,el={$j{=}n$} [] []]
        ]
        [$\dots$, draw=none, no edge]
        [,nat,el={$i{=}k$},s sep=3em
            [,p1,el={$j{=}1$} [] []]
            [,p1,el={$j{=}2$} [] []]
            [$\dots$, draw=none, no edge]
            [,p1,el={$j{=}n$} [] []]
        ]
    ]
\end{forest}
}
    \caption{A depiction of the class of tree-form decision problems used in the proof of \Cref{th:efg-lower}. Triangles (\pone) are decision points and boxes ($\square$) are observation points. }\label{fig:dp}
\end{figure}

A pure strategy is identified (up to linear transformations) by a vector $\vx \in \R^{k\times n}$ where, for some $i \in [k]$, $\vx[i, \cdot] \in \{-\frac{1}{\sqrt{n}}, \frac{1}{\sqrt{n}} \}^n$ and $\vx[i', \cdot]=\vec 0$ if $i' \ne i$ (\ie, $\vx$ interpreted as a matrix with exactly one nonzero row). For convenience, we will use $\Pi_i \subset \Pi$ to denote the set of (pure) strategies where the learner plays $i$ at the root. Let $C$ be an absolute constant large enough to make the asymptotic bounds in \Cref{th:normalform} true.

The adversary used to prove \Cref{th:efg-lower} works as follows. First, for each $i \in [k]$, it populates $\cA_i$ with $N_i$ uniformly randomly chosen strategies in $\Pi_i$. %
Formally, we let $\psi : \cA \to \Pi$ denote the (random) mapping which associates each action in $\cA_i \subset \cA$ with the corresponding action in $\Pi_i$: the image of $\cA_i$ under $\psi$ consists of actions we denote by $\tilde\va_{i1}, \ldots, \tilde\va_{iN_i} \in \Pi_i$, which are chosen independently and uniformly in $\Pi_i$. The adversary in \Cref{th:normalform} produces a random sequence $u^{(1)}, \ldots, u^{(T)} \in \cA$; we consider the adversary which draws a sequence of utilities $u^{(t)}$ from that distribution and outputs the sequence consisting of $\tilde \vu^{(t)} := \psi(u^{(t)})$ for $t \in [T]$. %

\subsection{Analysis}

Let $\eps > 0$ be a parameter to be selected later. We start with a simple concetration bound.

\begin{lemma}\label{lem:concentration}
    Let $\delta = N^2 e^{-n\eps^2/2}$. With probability $1 - \delta$, for all $a,a ' \in \cA$, we have $\abs{\ip{\tilde{\vec a}, \tilde{\vec a}'} - \ind\{a = a'\}} \le \eps$.
\end{lemma}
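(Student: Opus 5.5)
The plan is a standard Hoeffding bound combined with a union bound over the at most $N^2$ pairs $(a, a')$. There are two cases, depending on whether $a$ and $a'$ lie in the same block of the partition $\cA = \cA_1 \sqcup \dots \sqcup \cA_k$.

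\emph{Case $a = a'$.} Every pure strategy $\tilde\va \in \Pi_i$ has exactly one nonzero row, and that row lies in $\{\pm 1/\sqrt n\}^n$. Hence $\ip{\tilde\va, \tilde\va} = n \cdot \tfrac1n = 1$ deterministically, and the deviation is $0$.

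\emph{Case $a \ne a'$ in different blocks.} If $a \in \cA_i$ and $a' \in \cA_{i'}$ with $i \ne i'$, then $\tilde\va$ and $\tilde\va'$ have disjoint supports (different nonzero rows). So $\ip{\tilde\va, \tilde\va'} = 0$ deterministically.

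\emph{Case $a \ne a'$ in the same block $\cA_i$.} Here $\tilde\va$ and $\tilde\va'$ are independent and uniform in $\Pi_i$. Their inner product is
\[
\ip{\tilde\va, \tilde\va'} = \frac1n \sum_{j=1}^n \sigma_j \sigma'_j ,
\]
where the $\sigma_j, \sigma'_j$ are independent Rademacher signs. The products $\sigma_j\sigma'_j$ are therefore i.i.d.\ Rademacher variables with mean zero. Hoeffding's inequality gives
\[
\Pr\bigl[\abs{\ip{\tilde\va, \tilde\va'}} > \eps\bigr] \le 2 e^{-n\eps^2/2}.
\]

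\emph{Union bound.} Only unordered pairs $\{a, a'\}$ with $a \ne a'$ can fail. There are at most $\binom N2 \le N^2/2$ of them, and each fails with probability at most $2e^{-n\eps^2/2}$. The total failure probability is thus at most $N^2 e^{-n\eps^2/2} = \delta$.

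No step is a real obstacle. The only point needing care is the constant: the union bound must run over unordered pairs (using symmetry of the inner product) so that the factor $2$ from two-sided Hoeffding cancels against the $1/2$ and the bound matches $\delta$ exactly.
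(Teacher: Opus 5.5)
Your proof is correct and follows essentially the same route as the paper: the diagonal case is trivial, two-sided Hoeffding handles each pair $a \ne a'$, and a union bound over at most $\binom{N}{2} \le N^2/2$ unordered pairs makes the factor $2$ cancel to give exactly $\delta$. Your separate observation that pairs from different blocks $\cA_i \ne \cA_{i'}$ have inner product exactly $0$ (their nonzero rows are disjoint) is a bit of extra care that the paper leaves implicit when it applies Hoeffding to all pairs.
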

\begin{proof}
    If $a = a '$ then the claim holds trivially because then $\tilde{\va} = \tilde{\va}'$, and they are both unit vectors. For a fixed $a \ne a' \in \cA$, the claim holds with probability $2e^{-n\eps^2/2}$ by Hoeffding's inequality. The lemma then follows by union bounding over the $\binom{N}{2} \le N^2/2$ pairs.
\end{proof}

We will claim that, for any learner against this adversary, there exists a learner against the adversary of \Cref{th:normalform} that achieves a similar swap regret---and thus the swap regret of the former learner must be large. First, we will construct the latter learner.

Let $\mu^{(1)}, \dots, \mu^{(T)} \in \Delta(\Pi)$ be the sequence of distributions played by the learner. Note that $\mu^{(t)}$ can depend on the utilities $u^{(1:t-1)} \in \cA$ that are played by the adversary. Consider the sequence $\bar\mu^{(1)}, \dots, \bar\mu^{(T)} \in \Delta(\cA)$, where $\bar\mu^{(t)}$ is the distribution that samples $\vx \sim \mu^{(t)}$ and plays according to $p_\vx \in \Delta(\cA)$, defined as follows. Let $\vx \in \Pi_i$ be any strategy. There are two cases.

\begin{enumerate}
    \item 
    $\ip{\vx,\tilde \va_{ij}} \le \eps$ for every $i \in [k], j \in [N_i]$. Then define $p_\vx= a^*$ deterministically (i.e., $p_\vx$ is the distribution which puts all of its mass on $a^*$). 
    \item $\ip{\vx, \tilde\va_{ij}} > \eps$ for some $i \in [k], j \in [N_i]$. Let $j$ be the {\em largest} such index, let $\beta = \ip{\vx, \tilde\va_{ij}}$, and define $p_\vx$ as the distribution that is $a^*$ with probability $1-\beta$ and $a_{ij}$ with probability $\beta$. Note that $\beta \in [0,1]$ since $\vx, \tilde\va_{ij} \in \{ -\frac{1}{\sqrt n}, \frac{1}{\sqrt n} \}^n$ and in this case we have assumed that $\langle \vx, \tilde\va_{ij} \rangle > \eps > 0$. 
\end{enumerate}
A critical property for us will be that the learner cannot ``guess in advance'' what future unobserved $\tilde\va_{ij}$s will be, since these are sampled uniformly at random. That is, in Case 2, $\vx$ can only be played with large probability once the adversary has played $\tilde\va_{ij}$. 

To be more formal, we first define some notation. For every $i \in [k], j \in [N_i]$, let $t_{ij}$ be the first iteration on which the  adversary plays $\tilde\va_{ij}$ (or $t_{ij} = T$ if this never happens). For $\vx \in \Pi_i$, if $\vx$ is in Case 1 above then define $t_\vx = 0$, and otherwise define $t_\vx = t_{ij}$, where $j$ is as in Case 2.

There are two properties that we will critically need to use about $t_\vx$. The first states that the learner cannot place large mass on $\vx$ until after $t_\vx$, because doing so would require the learner to guess a vector heavily correlated with $\tilde\va_{ij}$ before the learner observes $\tilde\va_{ij}$.

\begin{lemma}
  \label{lem:delta-bound} $\ds \E\frac{1}{T}\sum_{\vx \in \Pi} \sum_{t =1}^{t_\vx} \mu^{(t)}(\vx) \le \delta$.
\end{lemma}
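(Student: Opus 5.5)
The plan is to exploit the fact that, at any time $t \le t_{ij}$, the learner has not yet observed $\tilde\va_{ij}$. Since the images under $\psi$ are drawn independently and uniformly, the learner's mixed strategy $\mu^{(t)}$ is then independent of $\tilde\va_{ij}$. A fixed vector has exponentially small probability of correlating with a fresh random sign vector, so $\mu^{(t)}$ can put only negligible mass on strategies $\vx$ with $\ip{\vx,\tilde\va_{ij}} > \eps$. A union bound over the $N$ pairs $(i,j)$ then gives a total of at most $N e^{-n\eps^2/2} \le \delta$ per round, and averaging over $t$ proves the lemma.

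First I reduce the double sum to a union over pairs. Fix $\vx$ and $t \le t_\vx$. Then $\vx$ is in Case 2, since in Case 1 $t_\vx = 0$ and the sum is empty. So there is a pair $(i,j)$ (namely the one defining $t_\vx$) with $\ip{\vx,\tilde\va_{ij}} > \eps$ and $t \le t_{ij}$. Hence
\begin{align}
\sum_{\vx \in \Pi}\sum_{t=1}^{t_\vx} \mu^{(t)}(\vx) \le \sum_{t=1}^T \sum_{i\in[k]}\sum_{j \in [N_i]} \ind\{t \le t_{ij}\} \Pr_{\vx\sim\mu^{(t)}}\ab[\ip{\vx,\tilde\va_{ij}}>\eps].
\end{align}
This step throws away the delicate ``largest index'' choice in the definition of $p_\vx$. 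I therefore do not need the increasing-order property of \Cref{th:normalform} here.

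Next, the independence argument. The adversary is oblivious: the action sequence $u^{(1:T)} \sim \cD$ is drawn independently of the embedding $\psi$. I condition on this sequence, which makes every $t_{ij}$ deterministic. For $t \le t_{ij}$, the strategy $\mu^{(t)}$ is a (possibly randomized) function of $\psi(u^{(1)}),\dots,\psi(u^{(t-1)})$ only. None of these actions equals $a_{ij}$, by the definition of $t_{ij}$ as the first play. This also covers the convention $t_{ij}=T$ when $a_{ij}$ is never played, since then nothing about $\tilde\va_{ij}$ is ever revealed. Because the $\psi$-images of distinct actions are independent, $\tilde\va_{ij}$ is uniform on $\Pi_i$ and independent of $\mu^{(t)}$.

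For any fixed $\vx \in \Pi$, the inner product $\ip{\vx,\tilde\va_{ij}}$ is either identically $0$ (if $\vx \notin \Pi_i$) or a sum of $n$ independent $\pm 1/n$ terms. In the latter case Hoeffding gives probability at most $e^{-n\eps^2/2}$ that it exceeds $\eps$. Taking expectations, each term of the right-hand side contributes at most $e^{-n\eps^2/2}$. Summing over at most $N$ pairs and $T$ rounds, dividing by $T$, and using $N e^{-n\eps^2/2}\le N^2 e^{-n\eps^2/2}=\delta$ gives the claim.

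The main obstacle I expect is making the independence step rigorous. One must check that the learner's information up to round $t$ consists only of the revealed vectors $\tilde\vu^{(1)},\dots,\tilde\vu^{(t-1)}$, and that conditioning on the oblivious sequence from $\cD$ does not correlate $\tilde\va_{ij}$ with them. Both follow from the construction, in which the $N$ vectors $\tilde\va_{i'j'}$ are drawn independently of each other and of $\cD$.
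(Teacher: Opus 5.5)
Your proposal is correct and follows essentially the same route as the paper. Like the paper, you bound $\ind\{t \le t_\vx\}$ by a sum over pairs $(i,j)$ with $t \le t_{ij}$ and $\ip{\vx,\tilde\va_{ij}} > \eps$, use that $\mu^{(t)}$ is independent of the not-yet-revealed $\tilde\va_{ij}$, and apply Hoeffding to get $N e^{-n\eps^2/2} \le \delta$; your explicit conditioning on the oblivious action sequence, which makes the $t_{ij}$ deterministic, spells out an independence step that the paper only asserts.
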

\begin{proof}
    Since the learner has not yet observed $\tilde\va_{ij}$ at time $t_{ij}$, its prior strategy sequence $\mu^{(1:t_{ij})}(\vx)$ must be independent of $\tilde\va_{ij}$. Moreover, if $t \le t_\vx$ then there must exist some $j$ with $t_{ij} \ge t$ and $\ip{\vx, \tilde\va_{ij}} \ge \eps$---namely, the $j$ defining Case 2. Thus we have:
    \begin{align}
        \E \frac{1}{T} \sum_{\vx \in \Pi} \sum_{t=1}^{t_\vx}  \mu^{(t)}(\vx)
        &\le  \E \frac{1}{T} \sum_{i=1}^k \sum_{\vx \in \Pi_i} \sum_{t=1}^T   \mu^{(t)}(\vx) \sum_{j : t_{ij} \ge t} \ind\{\ip{\vx, \tilde\va_{ij}} \ge \eps\}
        \\&=  \underbrace{\frac{1}{T} \sum_{i=1}^k \sum_{\vx \in \Pi_i} \sum_{j=1}^{N_i} \E\ab\Big[  \sum_{t \le t_{ij}} \mu^{(t)}(\vx) ]}_{\le N} \underbrace{ \E \ab[ \ind{\ip{\vx,\tilde \va_{ij}} \ge \eps} ] }_{\le e^{-n\eps^2/2}} \le \delta,
    \end{align}
    where in the last line we use the fact that $\tilde\va_{ij}$ is independent of $\mu^{1:t_{ij}}(\vx)$ and then Hoeffding's inequality. Moreover, we have used in the final inequality that for each $i \in [k]$ and $j \leq N_i \leq N$, we have $\frac 1T \sum_{\vx \in \Pi}\sum_{t=1}^T \E\mu^{(t)}(\vx) \leq 1$. 
\end{proof}

The second property is that, for  $t > t_\vx$, utilities of $\vx$ under $\tilde\vu^{(t)}$ are approximately the same as those of $p_\vx$ under the utilities $\vu^{(t)}$ of \Cref{th:normalform}.

\begin{lemma}
  \label{lem:epsilon-bound}
    For $t > t_\vx$, we have $\ds \ip{\vx, \tilde\vu^{(t)}} \le p_\vx(u^{(t)}) + \eps$.
\end{lemma}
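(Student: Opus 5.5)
The idea is to fix $\vx \in \Pi_i$ and a round $t > t_\vx$, write $u^{(t)} = a_{i'j'}$ so that $\tilde\vu^{(t)} = \tilde\va_{i'j'} \in \Pi_{i'}$, and check the inequality case by case. The only ingredients are the block structure of the decision problem, the definition of $p_\vx$, the choice of $j$ as the \emph{largest} index in Case 2, and the increasing-order property of the adversary in \Cref{th:normalform}. I do not expect to need the concentration event of \Cref{lem:concentration}. Since $p_\vx(u^{(t)}) \ge 0$, it suffices in every case either to show $\ip{\vx, \tilde\vu^{(t)}} \le \eps$, or to show that $\ip{\vx, \tilde\vu^{(t)}}$ equals $p_\vx(u^{(t)})$.

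\emph{Different block, $i' \ne i$.} Here $\vx$ is supported on row $i$ and $\tilde\va_{i'j'}$ on row $i' \ne i$. So $\ip{\vx,\tilde\vu^{(t)}} = 0 \le p_\vx(u^{(t)}) + \eps$.

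\emph{Same block, $\vx$ in Case 1.} By definition $\ip{\vx, \tilde\va_{ij'}} \le \eps$ for every $j'$, and we are done.

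\emph{Same block, $\vx$ in Case 2.} Let $j$ be the largest index with $\beta = \ip{\vx,\tilde\va_{ij}} > \eps$, so $t_\vx = t_{ij}$. There are three subcases.
\begin{itemize}
\item If $j' > j$, maximality of $j$ gives $\ip{\vx,\tilde\va_{ij'}} \le \eps$.
\item If $j' = j$, then $\ip{\vx,\tilde\vu^{(t)}} = \beta = p_\vx(a_{ij})$, since $p_\vx$ places mass exactly $\beta$ on $a_{ij}$. (This uses $a_{ij} \neq a^*$, as $a^*$ is never played.)
\item The remaining subcase $j' < j$ must be excluded, and this is the only step with content. Because $t > t_{ij}$, and $t_{ij} \le T$ with the convention $t_{ij} = T$ making the statement vacuous when $a_{ij}$ is never played, the adversary actually played $a_{ij}$ at time $t_{ij} < t$. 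By the increasing-order property (item~\ref{prop:increasing}'s preceding item in \Cref{th:normalform}), any action $a_{ij'}$ from $\cA_i$ played at a later time $t$ must satisfy $j' \ge j$. This is a contradiction.
\end{itemize}

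The potential subtlety is exactly this last subcase. It is where one might worry that the learner gets credit in the tree-form game for an earlier action $\tilde\va_{ij'}$ that $p_\vx$ does not reflect. The monotone ordering of the adversary within each $\cA_i$, combined with choosing the largest correlated index, rules this out. Collecting the cases proves the lemma, deterministically for every realization of $\psi$ and of the adversary's sequence.
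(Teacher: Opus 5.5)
Your proof is correct and follows the paper's argument. Both use the same split into Case~1 and Case~2 with the largest correlated index $j$, the increasing-order property of the adversary to rule out $a_{ij'}$ with $j' < j$ after time $t_{ij}$, and the equality $\ip{\vx, \tilde\vu^{(t)}} = \beta = p_\vx(u^{(t)})$ when $u^{(t)} = a_{ij}$; your explicit handling of the different-block case, the vacuous situation $t_{ij} = T$, and $a_{ij} \neq a^*$ just spells out points the paper's proof leaves implicit.
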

\begin{proof}
Let $\vx \in \Pi_i$. There are two cases. In the first case, we suppose that $\ip{\vx, \tilde\va_{ij}} \le \eps$ for every $i \in [k], j \in [N_i]$. Then for every $t$, we have $\vu^{(t)} \notin \supp p_\vx = \{ a^* \}$ (because the adversary of \Cref{th:normalform} never plays $a^*$), and $\ip{\vx, \tilde\vu^{(t)}} \le \eps$ by definition, so we are done.

Otherwise, let $j$ be the largest index for which $\ip{\vx, \tilde\va_{ij}} > \eps$. Then $t_\vx = t_{ij}$ by definition, and since $t > t_{ij}$, by Property~\ref{prop:increasing}, for time steps following $t_{ij}$ the adversary of \Cref{th:normalform} is no longer allowed to play $a_{ij'}$ for $j' < j$. Thus, either $u^{(t)} = a_{ij}$, or else $u^{(t)} \notin \{ a_{i1}, \ldots, a_{ij} \}$. Since $j$ is defined to be the largest index for which $\langle \vx, \tilde\va_{ij} \rangle > \eps$, in the latter case $p_\vx$ must put all its mass on $a^* \neq u^{(t)}$, meaning that $\langle \vx, \tilde\vu^{(t)} \rangle \leq \eps = p_\vx(u^{(t)}) + \eps$. In the former case, we have $\langle \vx, \tilde\vu^{(t)} \rangle = \beta = p_\vx(a_{ij})= p_\vx(u^{(t)})$. 
\end{proof}

For the rest of this proof we will use $\bar V(\phi)$ to denote the utilities experienced by $\bar\mu^{(t)}$ under the utilities $\vu^{(t)}$ in \Cref{th:normalform} and deviation function $\phi : \cA \to \cA$. That is,
\begin{align}
    \bar V(\phi) = \frac1T \sum_{t=1}^T \sum_{\va \in \cA} \bar\mu^{(t)}(a) \ind\{\phi(a) = u^{(t)}\} = \frac1T \sum_{t=1}^T \sum_{\vx \in \Pi} \mu^{(t)}(\vx) \Pr_{a \sim p_\vx}[\phi(a) = u^{(t)}]
\end{align}
Similarly, we will use 
\begin{align}
    V(\phi) := \frac{1}{T} \sum_{t=1}^T \E_{\vx \sim \mu^{(t)}} \ip{\vu^{(t)}, \phi(\vx)}
\end{align}
to denote the utility experienced by $\mu^{(t)}$ under our extensive-form adversary.

By \Cref{th:normalform}, there exists a function $\bar\phi : \cA \to \cA$ such that\footnote{Technically $\phi$ is a random variable dependent on $\vu^{(1)}, \dots, \vu^{(T)}$.} $\E[\bar V(\bar\phi) - \bar V(\Id)] \ge 1/Ck^6$.
We define a deviation $\phi : \Pi \to \cX$ by setting\footnote{Note that if a profitable deviation $\Pi \to \cX$ exists, then by linearity, so must a profitable deviation $\Pi \to \Pi$.} $\phi(\vx) := \E_{a \sim p_\vx} \psi(\bar\phi(a)).$

It suffices to show that $\E[V(\phi) - V(\Id)]$ is large. To do this, we will show that, in expectation and up to small errors, $V(\Id) \le \bar V(\Id)$ and $V(\phi) \ge \bar V(\bar\phi)$.

For the first approximation, we have
\begin{align}
    V(\Id) &= \frac{1}{T} \sum_{\vx \in \Pi} \sum_{t=1}^T \mu^{(t)}(\vx) \ip{\vx, \tilde \vu^{(t)}}
    \\&\le \frac{1}{T} \sum_{\vx \in \Pi} \sum_{t > t_\vx} \mu^{(t)}(\vx) \ip{\vx, \tilde \vu^{(t)}} + \delta
    \\&\le \frac{1}{T} \sum_{\vx \in \Pi} \sum_{t > t_\vx} \mu^{(t)}(\vx) p_\vx(u^{(t)}) + \eps + \delta
    \\&\le \frac{1}{T} \sum_{\vx \in \Pi} \sum_{t=1}^T \mu^{(t)}(\vx) p_\vx(u^{(t)}) + \eps + 2 \delta 
    \\&= \bar V(\Id) + \eps + 2 \delta,\label{eq:vid}
\end{align}
where the first and third inequalities use \Cref{lem:delta-bound}, and the second inequality uses \Cref{lem:epsilon-bound}. 

For the second, let $F$ be the event in \Cref{lem:concentration}. Conditional on $F$, we have 
\begin{align}
V(\phi) &= \frac{1}{T} \sum_{\vx \in \Pi} \sum_{t=1}^T \mu^{(t)}(\vx) \ip{\phi(\vx), \tilde\vu^{(t)}}
\\&\ge \frac{1}{T} \sum_{\vx \in \Pi} \sum_{t > t_\vx} \mu^{(t)}(\vx) \ip{ \phi(\vx), \tilde\vu^{(t)}} - \delta
\\&\ge \frac{1}{T} \sum_{\vx \in \Pi} \sum_{t > t_\vx} \mu^{(t)}(\vx) \Pr_{a \sim p_\vx}[\bar\phi(a) = u^{(t)}] - \eps - \delta
\\&\ge  \sum_{\vx \in \Pi} \frac{1}{T} \sum_{t=1}^T \mu^{(t)}(\vx) \Pr_{a \sim p_\vx} [\bar \phi(a) = u^{(t)}]- \eps - 2\delta 
\\&= \bar V(\bar\phi) - \eps - 2\delta,
\end{align}
where the first and third inequalities use \Cref{lem:delta-bound}.  To establish the second inequality above, we note that, 
\begin{align}
    \ip{ \phi(\vx), \tilde\vu^t} = \E_{a \sim p_\vx} \ip{\psi(\bar\phi(a)), \tilde\vu^t} \ge \Pr_{a \sim p_\vx}[\bar\phi(a) = u^t] - \eps
\end{align}
by \Cref{lem:concentration}, since $\bar\phi(\va), \tilde\vu^t \in \cA$. 

Thus, accounting for the probability $\delta$ in which $F$ fails, we have
\begin{align}
    \E[V(\phi) - \bar V(\bar\phi)] = \underbrace{\E[V(\phi) - \bar V(\bar\phi)|F]}_{\ge -\eps - 2\delta} \cdot \underbrace{\Pr[F]}_{\le 1} + \underbrace{ \E[V(\phi) - \bar V(\bar\phi) | \neg F] }_{\ge -1} \cdot \underbrace{\Pr[\neg F]}_{\le \delta} \ge - \eps - 3\delta\label{eq:vphi}.
\end{align}
Combining \eqref{eq:vid} and \eqref{eq:vphi},
\begin{align}
  \E[V(\phi) - V(\Id)] \ge \E\ab[\bar V(\bar\phi) - \bar V(\Id)] - 2 \eps - 5 \delta \ge \frac{1}{Ck^6} - 2 \eps - 5 \delta \ge \frac{1}{4Ck^6} = \eps\label{eq:vphi-vid}
\end{align}
by setting the parameters 
\begin{align}
    \eps = \frac{1}{4Ck^6} \qq{and}
    n = \frac{2\log 20CN^2 k^6}{\eps^2} \qq{so that}  5\delta = 5N^2e^{-n\eps^2/2} \le \frac{1}{4Ck^6}. \qedhere
\end{align}

The resulting tree-form decision problem hence has dimension $d = k \cdot n = O(\log^{14} N)$, and since $k = \Theta(\eps^{-{1/6}}) \le O(\log T)$ we have that the swap regret is at least $\eps$ for all $T < \min\ab\{N/4, \exp(\Omega(\eps^{-1/6}))\}$, where $N = \exp(\Omega(d^{1/14}))$, completing the proof of \Cref{th:efg-lower}. \qed

\section{Poly-dimensional equilibrium learning and computation}\label{sec:cefp}

In this section, we develop algorithms for learning and computing equilibria in concave games when the set $\Phi$ is poly-dimensional.   

\subsection{Ellipsoid against hope and expected fixed points}

\citet{Daskalakis25:Efficient} showed that, when the utilities of each player are {\em linear}, $\Phi$-equilibria are efficiently computable. However, their techniques fundamentally depended on the linearity of the utilities. Here, we first review their algorithm informally and why it fails for concave utilities, and then we will describe our algorithm. Consider the problem of finding an approximate $\Phi$-equilibrium\footnote{Here, we use a slightly different notion of $\eps$ from the prelimiaries, summing the deviation benefits instead of taking a maximum. This does not matter because 1) we have assumed that $\Id_i \in \Phi_i$ for all $i$, and 2) we are not concerned with bounding the polynomial factors, so an extra factor of $n$ is immaterial.}
\begin{align}
    \qq{find} \mu \in \Delta(\cX) \qq{s.t.} \sum_{i=1}^n \E_{\vx\sim\mu} \ab[ u_i(\phi_i(\vx_{i}), \vx_{-i}) - u_i(\vx)] \le \eps \quad \forall (\phi_1, \dots, \phi_n) \in \bigtimes_{i=1}^n \Phi_{\vm_i} \label{eq:primal} \tag{EAH-P}
\end{align}
and its dual
\begin{align}
    \qq{find} (\phi_1, \dots, \phi_n) \in \bigtimes_{i=1}^n \Phi_{\vm_i}  \qq{s.t.} \sum_{i=1}^n u_i(\phi_i(\vx_{i}), \vx_{-i}) - u_i(\vx) > \eps \quad \forall \vx \in \cX.
\end{align}
Or, equivalently, taking advantage of the low-dimensional representation of $\Phi_{\vm_i}$:
\begin{align}
    \qq{find} (\mK_1, \dots, \mK_n)  \qq{s.t.} \\\sum_{i=1}^n u_i(\mK_i \vm_i(\vx_i), \vx_{-i}) - u_i(\vx) \ge \eps \quad& \forall \vx \in \cX \label{eq:dual} \tag{EAH-D} \\
    \mK_i\vm_i(\vx_i) \in \cX_i \quad& \forall i \in [n], \vx_i \in \cX_i.
\end{align}
The two problems are dual in the sense that \eqref{eq:primal} is always feasible and \eqref{eq:dual} is always infeasible, and a certificate of infeasibility of the dual problem gives an $\eps$-approximate $\Phi$-equilibrium. Therefore, the algorithm works by running the ellipsoid algorithm on \eqref{eq:dual}, eventually terminating due to the infeasibility, and extracting a solution from the certificate of infeasibility. This framework is commonly referred to as {\em ellipsoid against hope} (EAH)~\citep{Farina24:Polynomial,Zhang25:Learning}; indeed, it is a generalization of the algorithm of that name~\citep{Papadimitriou08:Computing} that was originally devised for many-player normal-form games. To run the ellipsoid algorithm, the following subproblem, commonly referred to as the {\em semi-separation} problem~\citep{Daskalakis25:Efficient,Zhang25:Learning}: given a function $\phi_i : \cX_i \to \R^{d_i}$, {\em either}
\begin{enumerate}
    \item produce a distribution $\mu_i \in \Delta(\cX_i)$ such that $\phi_i$ will be no more than $\eps/2$-profitable regardless of the strategies of other players, {\em or}
    \item produce a certificate that $\phi_i$ is not an endomorphism, \ie, produce a $\vx_i$ such that $\phi_i(\vx_i) \notin \cX_i$. 
\end{enumerate}
If any semi-separation oracle produces (2), we can use the violating $\vx_i$ to find a separating hyperplane for the constraint $\mK_i \vm_i(\vx_i) \in \cX_i$. Otherwise, the distribution $\mu := \mu_1 \times \dots \times \mu_n$, which samples each $\vx_i \sim \mu_i$ independently, satisfies $\sum_{i=1}^n u_i(\mK_i \vm_i(\vx_i), \vx_{-i}) - u_i(\vx) \le \eps/2$ and therefore can once again be used to construct a separating direction. 

The semi-separation problem is solved by {\em another} instantiation of EAH. Consider the problem of finding a distribution such that $\phi_i$ will not be $\eps$-profitable:
\begin{align}
    \qq{find} \mu_i \in \Delta(\cX_i) \qq{s.t.} \E_{\vx_i\sim\mu_i} \ip{\vv, \phi_i(\vx_i) - \vx_i} \le \eps \quad \forall \vv \in B(\vec 0, 1). \label{eq:efp-primal} \tag{EFP}
\end{align}
This problem is known as the {\em expected fixed point} problem~\citep{Zhang24:Efficient}, because it is equivalent to finding a distribution with $\E \phi(\vx_i) \approx \E \vx_i$. 
The problem \eqref{eq:efp-primal} is solved, once again, by taking a dual.
\begin{align}
    \qq{find} \vv \in B(\vec 0, 1) \qq{s.t.} \ip{\vv, \phi_i(\vx_i) - \vx_i} > \eps \quad \forall \vx_i \in \cX_i \label{eq:efp-dual} \tag{EFP-D}
\end{align}
Once again, \eqref{eq:efp-dual} is infeasible, and a certificate of infeasibility yields a solution to \eqref{eq:efp-primal}. It remains to demonstrate a separation oracle for \eqref{eq:efp-dual}. Given $\vv \in [-1, 1]^{d_i}$, consider $\vx = \argmax_{\vx_i \in \cX_i} \ip{\vv, \vx_i}$. There are two cases. First, if $\phi(\vx_i)$ is inside $\cX_i$, then we must have $\ip{\vv, \phi(\vx_i) - \vx_i} \le 0$, so $\phi(\vx_i) - \vx_i$ is a separating direction. Otherwise, if $\phi(\vx_i)$ is outside $\cX_i$, then we can terminate because we have proven that $\phi_i$ is not an endomorphism.

The assumption that utilities are linear becomes important here: to show that $\phi_i$ is not profitable against any utility $u$, we need $\vv$ to be the gradient of the utility function, but we also need $\vv$ to not depend on the strategy $\vx_i$---that is, we need $u_i$ to be linear in $\vx_i$. To go beyond linear functions, we would need to replace $\ip{\vv, \vx}$ with an arbitrary concave function. That is, we hope to solve the problem
\begin{align}
    \qq{find} \mu_i \in \Delta(\cX_i) \qq{s.t.} \E_{\vx_i \sim\mu_i} \ab[u(\phi_i(\vx_i)) - u(\vx_i)] \le \eps \quad \forall \text{concave } u : \cX_i \to [0, 1].\label{eq:cefp-primal} \tag{CEFP}
\end{align}
We will call this the {\em concave EFP} (CEFP) problem. If it is solvable efficiently, then we can use a concave EFP as a drop-in replacement for the EFP in the framework of \citet{Zhang25:Learning}, and recover efficient algorithms for low-dimensional $\Phi$-equilibria in arbitrary concave games.

\subsection{Hardness of concave EFP and concave equilibria}

Our first results are negative: the hope that we lay out above is not possible, at least if our target is $\polylog(1/\eps)$-time algorithms. Indeed, in this section, we will show that both the concave EFP problem and the problem of finding a $\Phi$-equilibrium are \Contr-hard. %

\begin{theorem}
    The CEFP problem is \UnkContr-hard. \label{th:cefp hard}
\end{theorem}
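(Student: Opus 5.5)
The reduction is almost immediate from the definitions. Take an instance $(\cX, f, \gamma, \delta)$ of \UnkContr, with $f$ a $(1-\gamma)$-contraction under some unknown $1$-Lipschitz convex $Q : \cX \to [0,1]$. We call the CEFP solver with the mapping $\phi := f$ and precision $\eps := \gamma\delta$, and return the distribution $\mu$ it produces. Since $f$ maps $\cX$ into itself, $\phi$ is a legitimate endomorphism. The solver needs only evaluation access to $\phi$, and we have that, so the reduction is a single oracle call.

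To check correctness, apply the CEFP guarantee with the concave function $u := -Q$. Strictly, CEFP quantifies over concave $u : \cX \to [0,1]$, so I would use $u := 1 - Q$, which takes values in $[0,1]$ and is concave and $1$-Lipschitz. This yields
\[
\E_{\vx\sim\mu}\ab[Q(\vx) - Q(f(\vx))] \le \gamma\delta .
\]
The contraction property $Q(f(\vx)) \le (1-\gamma)Q(\vx)$ gives $Q(\vx) - Q(f(\vx)) \ge \gamma Q(\vx)$ pointwise. Taking expectations, $\gamma\,\E_\mu Q(\vx) \le \gamma\delta$, so $\E_\mu Q(\vx) \le \delta$. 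This is exactly the output \UnkContr\ requires.

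This is the same final step as in the proof of \Cref{th:concave quadratic nfce hard}, but without the square-root loss. The CEFP certificate directly controls $\E[Q(\vx) - Q(f(\vx))]$ for every concave utility, so we do not need to recover it through \Cref{lem:lipschitz convex decreases}.

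The only place needing care is normalization, since the CEFP definition implicitly carries the paper's standing conventions. If CEFP is stated only for $1$-Lipschitz, bounded utilities with a subgradient oracle, the worry is that the solver might need oracle access to $u$. It does not: the guarantee is universal over $u$, and the solver sees only $\phi$. If the class of $u$ is normalized differently, rescaling $Q$ by a constant only changes $\eps$ by that factor. The precision $\eps = \gamma\delta$ stays polynomial in the bit lengths of $\gamma$ and $\delta$, so the reduction is efficient.
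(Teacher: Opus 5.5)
Your proposal is correct and is essentially the paper's proof: the paper likewise uses the concave utility $u = 1 - Q$, combines the $\eps$-CEFP guarantee for $f$ with the pointwise contraction inequality $Q(\vx) - Q(f(\vx)) \ge \gamma Q(\vx)$, and sets $\eps = \delta\gamma$. Your extra remarks on normalization, and on the solver never needing oracle access to $u$, are sound and make explicit what the paper leaves implicit.
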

\begin{proof}
In this proof, we will drop the subscripts, as they are not relevant. Let $f : \cX \to \cX$ be a $(1-\gamma)$-contraction under some $1$-Lipschitz convex $Q$. Define $u : \cX \to [0, 1]$ by $u(\vx) = 1-Q(\vx)$. Then for any $\eps$-CEFP solution, we have
\begin{align}
      \gamma \cdot \E_{\vx\sim\mu} Q(\vx) \le \E_{\vx\sim\mu} \ab[Q(\vx) - Q(f(\vx))] = \E_{\vx\sim\mu} \ab[u(f(\vx)) - u(\vx)] \le \eps
\end{align}
where all expectations are over $\vx\sim\mu$, and we use in turn, the definition of contractivity, the definition of $Q$, and the definition of CEFP.
Thus, taking $\eps = \delta\gamma$ and finding a $\eps$-concave CEFP solution would solve \UnkContr.
\end{proof}

\begin{theorem}
    \label{thm:polydimcontract}
    Finding an $\eps$-approximate $\Phi$-equilibrium is \Contr-hard, even for two-player concave games where both players share the same strategy set $\cX$ and $\Phi_1 = \Phi_2 = \{ f\}$ is a singleton. \label{th:concave efg hard}
\end{theorem}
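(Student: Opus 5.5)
The plan is to reduce directly from \Contr with a two-player game in which each player's utility is built from the \emph{known} contraction norm $\norm{\cdot}$. Each player has the single deviation $f$. One player's incentive constraint will push the profile towards ``$\vx$ and $\vy$ close,'' and the other's will push it towards ``$\vy$ close to $f(\vx)$.'' Together with contractivity, these will force most of the mass of $\mu$ onto approximate fixed points of $f$.

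Concretely, let $(\cX, f, \norm{\cdot}, \gamma, \delta)$ be a \Contr instance. Both players get strategy set $\cX$, with utilities
\begin{align}
u_1(\vx,\vy) = 1 - \norm{\vx - \vy}, \qquad u_2(\vx, \vy) = 1 - \norm{\vy - f(\vx)},
\end{align}
and $\Phi_1 = \Phi_2 = \{f\}$. These utilities are concave in the player's own strategy, since norms are convex. They take values in $[0,1]$ by the normalization $\max_{\vx,\vy \in \cX}\norm{\vx-\vy}\le 1$. They are $1$-Lipschitz because $\norm{\cdot}$ is $1$-Lipschitz w.r.t.\ $\norm{\cdot}_2$. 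They are efficiently evaluable because the norm is part of the \Contr input.

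Now let $\mu$ be an $\eps$-approximate $\Phi$-equilibrium. Player 1's constraint for deviation $f$ gives
\begin{align}
\E_\mu \norm{\vx - \vy} - \eps \le \E_\mu\norm{f(\vx) - \vy}.
\end{align}
Player 2's constraint gives $\E_\mu \norm{\vy - f(\vx)} - \eps \le \E_\mu \norm{f(\vy) - f(\vx)}$. By contraction, the right-hand side is at most $(1-\gamma)\E_\mu\norm{\vx - \vy}$. Chaining the two inequalities yields $\E\norm{\vx-\vy} \le 2\eps/\gamma$. Substituting back gives $\E\norm{\vy - f(\vx)} \le (1-\gamma)\cdot 2\eps/\gamma + \eps \le 3\eps/\gamma$. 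The triangle inequality then gives
\begin{align}
\E_\mu \norm{\vx - f(\vx)} \le 5\eps/\gamma.
\end{align}

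Since $\mu$ has finite support, some $\vx$ in the support of the first marginal satisfies $\norm{\vx - f(\vx)} \le 5\eps/\gamma$. The reduction finds it by evaluating $\norm{\vx - f(\vx)}$ on every support point, which is possible since the norm is known, and outputs the best one. Setting $\eps = \gamma\delta/5$, which is polynomial in the bit-size, solves the \Contr instance.

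The main obstacle I expect is the design step itself. Because the norm, not the fixed point $\vx^*$, is what is known, we cannot simply use $Q(\vx)=\norm{\vx-\vx^*}$ as in \Cref{th:cefp hard}. The second player must act as a proxy target, and the two deviation constraints must point in opposite directions so that they combine through the contraction factor $(1-\gamma)$. The remaining points are routine to check: concavity, range, and the fact that a single deviation per player suffices (the identity is trivially fine).
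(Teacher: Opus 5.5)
Your proposal is correct and is essentially the paper's proof. You use the same two-player game (shifted by a constant to land in $[0,1]$) and the same combination of the two $f$-deviation constraints with contractivity to get $\E\norm{\vx-\vy} = O(\eps/\gamma)$. The only differences are cosmetic: you bound the residual on the $\vx$-marginal via the triangle inequality, whereas the paper bounds $\E\norm{\vy - f(\vy)}$ directly from player 2's constraint, and you make explicit the extraction of a single support point, which the paper leaves implicit.
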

\begin{proof}
    Let $f : \cX \to \cX$ be a $(1-\gamma)$-contraction under some norm $\norm{\cdot}$. Define a two-player game as follows. Both players' strategy sets are $\cX$, and the utilities are given by
    \begin{align}
    u_1(\vx, \vy) = -\norm{\vx - \vy}, \qq{} u_2(\vx, \vy) = -\norm{\vy - f(\vx)}.
    \end{align}
    Let $\mu$ be a $\eps$-approximate $\Phi$-equilibrium with $\Phi_1 = \Phi_2 = \{f\}$. Summing the two players' deviation benefits under $f$, we have
    \begin{align}
        \eps &\ge \E \ab[u_1(f(\vx), \vy) - u_1(\vx, \vy) + u_2(\vx, f(\vy)) - u_2(\vx, \vy)]
        \\&= \E \ab[\norm{\vx - \vy} - \norm{f(\vx) - f(\vy)}]
        \\&\ge \gamma \E \norm{\vx - \vy}.
    \end{align}
    where all expectations are over $(\vx, \vy) \sim \mu$. Moreover,
    \begin{align}
        \eps \ge \E [u_2(\vx, f(\vy)) - u_2(\vx, \vy)] &= \E [\norm{\vy - f(\vx)} - \norm{f(\vy) - f(\vx)}]
        \\&\ge \E [\norm{\vy - f(\vy)} - 2 \norm{f(\vy) - f(\vx)}]
        \\&\ge \E \ab[\norm{\vy - f(\vy)} - 2\norm{\vx - \vy}]
        \\&\ge \E \norm{\vy - f(\vy)} - \frac{2\eps}{\gamma}
    \end{align}
    so $\E \norm{\vy - f(\vy)} \le \eps(1 + 2/\gamma) \le 3\eps/\gamma$. Thus, taking $\eps = \delta\gamma/3$ solves \Contr.
\end{proof}

\subsection{A $\polylog(1/\eps)$-time algorithm for quadratic EFPs and low-dimensional $\Phi$-equilibria in quadratic games}\label{sec:qefp}

While the general $\Phi$-equilibrium and CEFP problems are \Contr-hard, there are cases where the $\Contr$ problem is easy. Indeed, $\Contr$ is easy when we restrict the norm $\norm{\cdot}$ to be the $\ell_2$-norm~\citep{Sikorski93:Ellipsoid}. What is the generalization of this easy case to CEFP and concave equilibria? In this subsection, we answer this question.

We start with CEFPs, and again we will drop player indices for the remainder of this section. Given a function $f : \cX \to \cX$, we will call a distribution $\mu \in \Delta(\cX)$ a {\em quadratic expected fixed point} (QEFP) if it satisfies \eqref{eq:cefp-primal} whenever $u : \cX \to [0, 1]$ is a {\em quadratic} function. We now show that this problem is efficiently solvable.
\begin{theorem}\label{th:quadratic fp}
    There is a $\poly(d, \log(R/\eps))$-time algorithm that, given $\phi : \cX \to \R^d$, either 1) solves the QEFP problem or 2) finds some $\vx$ such that $\phi(\vx) \notin \cX$. 
\end{theorem}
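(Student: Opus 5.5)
The plan is to instantiate ellipsoid against hope, exactly as was done above for \eqref{eq:efp-primal}, but with the dual variable $\vv$ replaced by the coefficients $(\mA, \vb)$ of a concave quadratic. The first step is to linearize. Write a concave quadratic $u(\vx) = \vb^\top \vx - \tfrac12 \vx^\top \mA \vx + c$ with $\mA \succeq 0$. By concavity,
\begin{align}
u(\phi(\vx)) - u(\vx) \le \ip{\vb - \mA\vx, \phi(\vx) - \vx}.
\end{align}
It therefore suffices to find $\mu \in \Delta(\cX)$ with $\E_{\vx\sim\mu}\ip{\vb - \mA\vx, \phi(\vx)-\vx} \le \eps$ for all $(\mA,\vb)$ in a compact convex set $\cQ$. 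This is the formulation stated in the introduction. The set $\cQ$ consists of pairs with $\mA \succeq 0$ together with norm bounds, say $\norm{\vb}_2 \le \poly(R)$ and $\norm{\mA}_{op} \le \poly(R)$. These bounds follow from $u$ mapping into $[0,1]$ and being $1$-Lipschitz on $\cX \supseteq B(\vec 0,1)$; for instance, the gradient at $\vec 0$ bounds $\vb$, and the range bound on the unit ball bounds $\mA$. The key structural point is that the objective is \emph{linear} in $(\mA,\vb)$ for each fixed $\vx$. Hence the problem is a semi-infinite LP, and its dual is: find $(\mA,\vb) \in \cQ$ such that $\ip{\vb-\mA\vx, \phi(\vx)-\vx} > \eps$ for all $\vx \in \cX$.

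Next I would build a separation oracle for this dual. Given a candidate $(\mA,\vb)$, compute an approximate maximizer $\vx$ of the concave quadratic $u$ over $\cX$. This is a convex program, solvable in $\poly(d,\log(R/\eps))$ time via the separation oracle for $\cX$. We need the output to be an approximate variational-inequality solution, meaning $\ip{\vb-\mA\vx, \vy-\vx} \le \eps/2$ for all $\vy\in\cX$. I would obtain this either by optimizing to high enough precision and using smoothness of $u$ (a near-optimal point of a smooth concave function has small first-order suboptimality, at a square-root loss in precision, which is fine since we have $\log(1/\eps)$ dependence), or by solving the VI directly with the ellipsoid method. Then there are two cases.
\begin{enumerate}
\item If $\phi(\vx) \notin \cX$, we stop with outcome 2 of the statement.
\item Otherwise $\ip{\vb-\mA\vx,\phi(\vx)-\vx} \le \eps/2$. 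Since this expression is linear in $(\mA,\vb)$, it gives a separating hyperplane for the dual.
\end{enumerate}
The constraints $\mA\succeq 0$ and the norm bounds are handled with their own standard separation oracles, for example an eigenvector of a negative eigenvalue when $\mA \not\succeq 0$.

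Finally, run ellipsoid on the dual. Because every candidate gets cut, after $\poly(d,\log(R/\eps))$ iterations the remaining volume is smaller than any ball that a feasible $\eps$-robust dual solution would have to contain. At that point, the collected points $\vx^{(1)},\dots,\vx^{(m)}$ witness infeasibility. We then solve a polynomial-size convex program, an LP over weights $\lambda\in\Delta(m)$ combined with the SDP/norm description of $\cQ$, to obtain $\mu = \sum_k \lambda_k \delta_{\vx^{(k)}}$. Concretely, we minimize over $\lambda$ the value $\max_{(\mA,\vb)\in\cQ} \sum_k \lambda_k \ip{\vb-\mA\vx^{(k)},\phi(\vx^{(k)})-\vx^{(k)}}$. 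The inner maximum is a support function of $\cQ$ and is efficiently computable, being a norm-type expression plus an eigenvalue term. Convex duality then gives a value of at most $\eps$.

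I expect the main obstacle to be this last certificate-extraction step, made rigorous. The standard EAH argument requires that ``infeasible'' be robust: we must run ellipsoid on a slightly strengthened dual (threshold $\eps/2$ versus $\eps$) and show that any $(\mA,\vb)$ infeasible for the weak version has a neighborhood of polynomially bounded-below volume inside $\cQ$. This is delicate because the PSD cone constraint can make $\cQ$ thin near singular $\mA$. I would first try to handle it by shifting to $\mA + \eta \mI$ and using the fact that the objective is Lipschitz in $(\mA,\vb)$ with constant $\poly(R)$.
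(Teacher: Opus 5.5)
Your proposal is correct and follows the paper's proof essentially step for step. You linearize via concavity to $\ip{\vb-\mA\vx,\phi(\vx)-\vx}$, bound $(\mA,\vb)$ using the range and Lipschitz constraints, run ellipsoid against hope on the dual with cuts from an approximate variational-inequality solution $\vx$ of $u$, and halt when $\phi(\vx)\notin\cX$. Your certificate-extraction step is more detailed than the paper's, which only says ``as usual''. Your thinness worry is settled by the standard argument: $\cQ$ contains a ball around, e.g., a multiple of $(\mI,\vec 0)$, so for every $z\in\cQ$ the set $B(z,r)\cap\cQ$ contains a ball of radius $\Omega(r/\poly(R))$, which is exactly your proposed shift toward $\mI$.
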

\begin{proof}
    For any $\mA \succeq \vec 0$, it holds that
    \begin{align}
    - \frac 12 \langle \mA \phi(\vx), \phi(\vx) \rangle + \frac 12 \langle \mA \vx, \vx \rangle & \leq - \langle \mA \vx, \phi(\vx)  - \vx\rangle\nonumber.
    \end{align}
Thus, it suffices to solve the problem
\begin{align}
    \qq{find} \mu \in \Delta(\cX) \qq{s.t.} \E_{\vx \sim \mu} \ip{\vb - \mA \vx, \phi(\vx) - \vx} \le \eps \quad \forall\mA \succeq \vec 0, \vb \in \R^d
\end{align}
where $\mA$ and $\vb$ are such that $u(\vx) := \vb^\top \vx - \frac12 \vx^\top \mA \vx + c$ is a valid utility for some $c \in \R$, in particular, $u$ must have range at most $1$.
To bound $\mA$ and $\vb$, recall that we assumed that $\cX$ contains a unit ball. Therefore, by \Cref{lem:bound coefficients}, we have $\mA \preceq 2 \mI$ and $\norm{\vb}_2 \le 1$. This yields the primal problem
\begin{align}
    \qq{find} \mu \in \Delta(\cX) \qq{s.t.} \E_{\vx \sim \mu} \ip{\vb - \mA \vx, \phi(\vx) - \vx} \le \eps \label{eq:qefp primal} \quad \forall\vec 0 \preceq \mA \preceq 2 \mat I, \vb \in B(\vec 0, 1).\tag{QEFP}
\end{align}
 Now consider the dual problem
\begin{align}
    \qq{find} \vec 0 \preceq \mA \preceq 2 \mat I, \vb \in B(\vec 0, 1) \qq{s.t.} 
    \ip{\vb - \mA \vx, \phi(\vx) - \vx} \ge \eps \label{eq:qefp dual} \quad \forall \vx \in \cX. \tag{QEFP-D}
\end{align}
As usual, \eqref{eq:qefp dual} is infeasible, and a certificate of infeasibility yields a solution to \eqref{eq:qefp primal}, so it suffices to demonstrate a separation oracle for \eqref{eq:qefp dual}. $\vec 0 \preceq \mA \preceq 2 \mat I$ and $\vb \in B(\vec 0, 1)$ are semidefinite and convex quadratic constraints, respectively, and can be separated over using standard techniques. Given $\vb$ and $\mA$, using the ellipsoid algorithm, it is possible in time $\poly(d, \log(R/\epsilon))$ to approximately maximize $u$ over $\cX$, in particular, to find a point $\vx \in \cX$ for which $\ip{\vb - \mA \vx, \vy - \vx} \le \eps/2$ for all $\vy \in \cX$. Now, as before, either $\phi(\vx) \in \cX$, or it is not; this can be determined by invoking the separation oracle.  In the former case, the inequality $\ip{\vb - \mA \vx, \phi(\vx) - \vx} \le \eps/2$ gives a separating hyperplane to be used in the ellipsoid algorithm. In the latter case, we have found $\vx$ such that $\phi(\vx) \notin \cX$, and can terminate.
\end{proof}

Thus, by using the QEFP oracle as a drop-in replacement for the expected fixed point oracle in the framework of \citet{Zhang25:Learning}, it follows that $\Phi$-equilibria can be computed efficiently for quadratic utilities:
\begin{theorem}\label{th:concave quadratic eqcomp}
    There is a $\poly(d, k, \log(R/\eps))$-time algorithm for computing an $\eps$-approximate $\Phi_\vm$-equilibrium of a given concave quadratic game $\Gamma$.
\end{theorem}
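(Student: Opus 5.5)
The proof will instantiate the two-level ellipsoid-against-hope framework of \Cref{sec:cefp}. We replace the inner expected-fixed-point oracle by the QEFP oracle of \Cref{th:quadratic fp}. The outer level runs the ellipsoid algorithm on \eqref{eq:dual}, whose variables are the matrices $(\mK_1, \dots, \mK_n)$, of total dimension $\sum_i d_i k = \poly(d,k)$. For a query point $(\mK_i)_i$ we call, for each player $i$, the semi-separation oracle with $\phi_i(\vx_i) := \mK_i \vm_i(\vx_i)$. This oracle is \Cref{th:quadratic fp} with accuracy $\eps/(2n)$. It either returns $\vx_i$ with $\phi_i(\vx_i)\notin\cX_i$, or returns $\mu_i \in \Delta(\cX_i)$ that is an $\eps/(2n)$-QEFP of $\phi_i$.

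In the first case, the separation oracle for $\cX_i$ gives $\va,b$ with $\va^\top \mK_i\vm_i(\vx_i) > b \ge \max_{\vy\in\cX_i}\va^\top\vy$. This inequality is linear in $\mK_i$, so it is a valid cut. In the second case we set $\mu = \mu_1\times\dots\times\mu_n$. The key point, which distinguishes this from the general concave case, is that for every fixed $\vx_{-i}$ the map $\vx_i\mapsto u_i(\vx_i,\vx_{-i})$ is a concave quadratic with range in $[0,1]$. The QEFP guarantee for $\mu_i$ holds uniformly over all such quadratics. Because $\mu_i$ is independent of $\vx_{-i}$ under the product distribution, we can condition on $\vx_{-i}$, apply the guarantee, and average to get $\E_\mu[u_i(\phi_i(\vx_i),\vx_{-i}) - u_i(\vx)] \le \eps/(2n)$. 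Summing over $i$ shows that $\E_\mu$ of the left side of \eqref{eq:dual} is at most $\eps/2 < \eps$.

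That expected constraint is linear in $(\mK_i)$ once $\mu$ is fixed, because $u_i(\mK_i\vm_i(\vx_i),\vx_{-i})$ is concave quadratic in $\mK_i$. So the cut is really a concave (not linear) inequality, and we take its gradient at the current point as the separating hyperplane. This gradient is computable: $\mu$ has finite support and $n$ is constant, so the expectation is a polynomially sized sum, and the gradient oracles of the $u_i$ give $\nabla_{\mK_i}$ by the chain rule. We then need the standard bounding-box facts. \Cref{lem:bound coefficients} bounds the quadratic coefficients. The constraint $\mK_i\vm_i(\vx_i)\in\cX_i$ confines $\mK$ to a region of polynomial size, at least after the normalization in \citet{Zhang25:Learning}. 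Finally, a feasible region of \eqref{eq:dual} would contain a ball of volume at least $\exp(-\poly)$ by Lipschitzness. Together these ensure the ellipsoid terminates in $\poly(d,k,\log(R/\eps))$ iterations.

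At termination we have collected polynomially many product distributions $\mu^{(1)},\dots,\mu^{(L)}$ as certificates. We take the LP over mixtures $\lambda\in\Delta([L])$ minimizing the worst deviation gain, as in \citet{Papadimitriou08:Computing,Zhang25:Learning}. Infeasibility of \eqref{eq:dual} restricted to these cuts gives, by LP duality, a mixture $\sum_\ell\lambda_\ell\mu^{(\ell)}$ that is an $\eps$-approximate $\Phi_\vm$-equilibrium. The main obstacle is this extraction step. With nonlinear (concave in $\mK$) cuts, the certificates are tangent hyperplanes rather than the exact constraints. I would handle it as \citet{Zhang25:Learning} do: the deviation gain is concave in $\mK$, so the gain at any $\mK$ is bounded by its linearization at the query point. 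The tangent cuts therefore certify that no $\mK$ in the final tiny ellipsoid's complement is better than $\eps$. I would then re-solve a small convex program over $\lambda$ to recover the mixture.
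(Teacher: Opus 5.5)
Your proposal is correct and follows the paper's route: run the ellipsoid-against-hope framework of \citet{Zhang25:Learning} on \eqref{eq:dual}, substitute the QEFP oracle of \Cref{th:quadratic fp} for the expected-fixed-point oracle in the semi-separation step, and use the product of the per-player QEFPs as the certificate. Where you differ is in handling the nonlinearity in $\mK$ through tangent cuts of the concave gain, which the paper avoids: as its remark in \Cref{sec:low dim fptas} notes, the QEFP already bounds the linearized error $\E_{\vx\sim\mu}\langle\nabla_i u_i(\vx),\mK_i\vm_i(\vx_i)-\vx_i\rangle$, which upper-bounds the true deviation gain and is linear in $\mK$, so the framework applies verbatim and the extraction step you flag as the main obstacle reduces to the standard LP-duality argument.
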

\begin{remark}
The assumption that the number of players is constant is crucial to this analysis: without it, it is not even possible to compute the expected utility of the separating distribution $\mu = \mu_1 \times \dots \times \mu_n$, because it has support exponential in $n$. \citet{Zhang25:Learning} circumvent this issue by exploiting multilinearity; we, of course, cannot do that.
\end{remark}

\subsection{Discussion}\label{sec:qefp discussion}

In this subsection, we discuss several aspects of \Cref{th:quadratic fp} which may be of independent interest beyond equilibrium computation.

\paragraph{Fixed points under unknown Mahalanobis norms}

Our algorithm in \Cref{th:quadratic fp} also, in passing, solves the corresponding version of the $\UnkContr$ problem, namely the $\UnkContr$ problem for {\em Mahalanobis norms} $\norm{\vx}_\mA := \sqrt{\vx^\top \mA \vx}$, where $\mA \succ \vec 0$:
\begin{theorem}\label{th:unkcontr quadratic}
    There is a $\poly\ab(d, \log(\nicefrac{R}{\gamma\delta}))$-time algorithm for $\UnkContr$ when $Q$ has the form $Q(\vx) = \norm{\vx - \vx^*}_\mA^2 = (\vx - \vx^*)^\top \mA (\vx - \vx^*)$.
\end{theorem}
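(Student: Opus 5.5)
The plan is to run the QEFP algorithm of \Cref{th:quadratic fp} on $\phi = f$ with a suitable accuracy $\eps$, and to show that a QEFP already solves this special case of $\UnkContr$. Since $f$ maps $\cX$ into $\cX$, outcome 2) of \Cref{th:quadratic fp} can never occur, so the algorithm always returns a distribution $\mu \in \Delta(\cX)$ with
\begin{align}
\E_{\vx\sim\mu} \ip{\vb - \mA' \vx, f(\vx) - \vx} \le \eps \qquad \forall\, \vec 0 \preceq \mA' \preceq 2\mI,\ \vb \in B(\vec 0, 1).
\end{align}
This holds because the ellipsoid certificate in that proof covers the entire linearized constraint family \eqref{eq:qefp primal}, and not only genuine utilities.

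Next, choose a test pair $(\mA', \vb)$ adapted to the unknown $Q$. Write $Q(\vx) = (\vx - \vx^*)^\top \mA (\vx - \vx^*)$. The promise that $Q$ is $1$-Lipschitz with values in $[0,1]$ on $\cX \supseteq B(\vec 0,1)$ lets us invoke \Cref{lem:bound coefficients} on $u = 1 - Q$, or argue directly, to get $\mA \preceq c\mI$ for an absolute constant $c$. Set $\mA' = 2\mA / c'$ and $\vb = 2\mA\vx^*/c'$ for a suitable constant $c'$, so that $\vb - \mA'\vx = -\tfrac{1}{c'}\grad Q(\vx)$ and both constraints $\mA' \preceq 2\mI$ and $\norm{\vb}_2 \le 1$ hold. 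The bound on $\vb$ follows because $\norm{\grad Q(\vx^*)}$-type quantities are controlled by Lipschitzness, and $\vx^* \in \cX$ up to the normalization. The QEFP guarantee then reads
\begin{align}
\E_{\vx \sim \mu} \ip{\grad Q(\vx), \vx - f(\vx)} \le c'\eps .
\end{align}
Because $Q$ is convex, $Q(\vx) - Q(f(\vx)) \le \ip{\grad Q(\vx), \vx - f(\vx)}$. Combining this with the contraction promise $Q(f(\vx)) \le (1-\gamma)Q(\vx)$ gives $\gamma\,\E_\mu Q(\vx) \le c'\eps$. Taking $\eps = \gamma\delta/c'$ yields $\E_\mu Q \le \delta$, and the running time is $\poly(d, \log(R/(\gamma\delta)))$ as required.

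The main obstacle is the second step: verifying that the pair $(\mA', \vb)$ built from the unknown $Q$ lies in the bounded family that the certificate covers. This requires getting the constants right from the normalization assumptions, namely that $Q$ is $1$-Lipschitz with range $[0,1]$ and that $\cX$ contains the unit ball. Because $Q$ is convex, using the gradient-inner-product form directly, rather than the quadratic utility difference, avoids any sign issues.
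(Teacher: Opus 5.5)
Your proposal is correct and is essentially the paper's argument. The paper runs the QEFP algorithm of \Cref{th:quadratic fp} on $f$ and tests against the quadratic utility $u = 1 - Q$, obtaining $\gamma\,\E_\mu Q(\vx) \le \E_\mu[Q(\vx) - Q(f(\vx))] \le \eps$ exactly as in the proof of \Cref{th:cefp hard}; your linearized-form-plus-convexity step just unfolds the concavity inequality already inside the proof of \Cref{th:quadratic fp}. For the constants, \Cref{lem:bound coefficients} applied to $1-Q$ gives $\mA \preceq \mI$ and $\norm{2\mA\vx^*}_2 \le 1$, so $c' = 1$ works (the relevant gradient is $\grad Q(\vec 0) = -2\mA\vx^*$, not $\grad Q(\vx^*) = \vec 0$).
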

The proof is immediate from \Cref{th:quadratic fp} and the proof of \Cref{th:cefp hard}. 
As we discussed in \Cref{sec:qefp} and the introduction, there is an efficient algorithm for $\Contr$ when the norm is the $\ell_2$-norm~\citep{Sikorski93:Ellipsoid}. By linearly transforming the feasible space, it is easy to see that the $\ell_2$-norm algorithm for $\Contr$ also works for any Mahalanobis norm $\norm{\cdot}_\mA := \sqrt{\vx^\top \mA \vx}$. However, to our knowledge, \Cref{th:unkcontr quadratic} is novel and may be of independent interest.

\paragraph{Expected VIs and other kinds of stochastic fixed point approximations}

The EFP definitions we have been using so far are not the only possible stochastic relaxation of the problem of finding a fixed point. Here, we discuss some other relaxations that have been noted in the literature, and how our (Q)EFP definitions and algorithms relate to them.

For this subsection, let $\cX \subset \R^d$ be a convex compact set, and let $f : \cX \to \cX$ be an arbitrary (not necessarily continuous) function. Let $\mu \in \Delta(\cX)$. (To avoid measure-theoretic issues, we will deal exclusively with finite-support distributions.) We will say that $\mu$ is an $\eps$-approximate {\em expected variational inequality (EVI) solution}~\citep{Zhang25:Expected,Foster21:Forecast,Nemirovski10:Accuracy} if, for every $\vv\in\cX$, we have
    \begin{align}
        \E_{\vx\sim\mu} \ip{f(\vx) - \vx, \vv - \vx} \le \eps.\label{eq:evi} \tag{EVI}
    \end{align}

The notion of an EVI seems to have been rediscovered several times in different contexts. \citet{Foster21:Forecast,Foster23:Calibeating} refer to the existence of EVI solutions as the ``outgoing minimax theorem'' and have used it to develop algorithms for calibrated learning. In the context of a monotone variational inequality, \citet{Nemirovski10:Accuracy} calls it an ``accuracy certificate'' and includes various citations dating back to the 1980s using similar concepts. \citet{Zhang25:Expected} extends the idea to $\Phi$-EVIs, where the $\vv$ is replaced by a (linear) function $\phi : \cX \to \cX$; the intuition here is that, as VIs capture Nash equilibria, EVIs capture some kind of correlated equilibria. Here, we follow the terminology of \citet{Zhang25:Expected}.

Solutions to \eqref{eq:evi} and \eqref{eq:efp-primal} are generally incomparable, and algorithms with $\polylog(1/\eps)$ convergence rate exist for both notions~\citep{Zhang25:Expected,Zhang25:Learning}. Here, we make the following observation, which unites the two notions.

\begin{proposition}
    An $\eps$-approximate solution to \eqref{eq:qefp primal} is at once 1) an $\eps$-approximate solution to \eqref{eq:efp-primal}, and 2) an $\eps R$-approximate solution to \eqref{eq:evi}.
\end{proposition}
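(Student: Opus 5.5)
Both parts follow by restricting the family of pairs $(\mA,\vb)$ quantified over in \eqref{eq:qefp primal}. A $\mu$ solving \eqref{eq:qefp primal} to accuracy $\eps$ satisfies $\E_{\vx\sim\mu}\ip{\vb - \mA\vx, \phi(\vx) - \vx} \le \eps$ for every $\vec 0 \preceq \mA \preceq 2\mI$ and every $\vb \in B(\vec 0,1)$. Each of \eqref{eq:efp-primal} and \eqref{eq:evi} is, up to scaling, a subfamily of these constraints, so the plan is simply to identify the right substitution in each case.

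For part 1), set $\mA = \vec 0$. The QEFP guarantee then gives $\E_{\vx\sim\mu}\ip{\vb, \phi(\vx)-\vx} \le \eps$ for all $\vb \in B(\vec 0,1)$. Taking $\vv = \vb$, this is exactly the constraint family of \eqref{eq:efp-primal}.

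For part 2), with $f = \phi$, the EVI constraint for a fixed $\vv \in \cX$ reads $\E\ip{\phi(\vx)-\vx, \vv - \vx} \le \eps R$. Writing $\vv - \vx = \vb - \mA\vx$ suggests taking $\mA = \mI$ and $\vb = \vv$, which satisfies $\vec 0 \preceq \mA \preceq 2\mI$. The only problem is that $\norm{\vv}_2$ may be as large as $R$, so $\vb = \vv$ may violate $\vb \in B(\vec 0,1)$.

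This mismatch between $\vb = \vv$ and the unit-ball constraint is the one place needing care. I would handle it by scaling: use $\mA = \mI/R$ and $\vb = \vv/R$, so that $\norm{\vb}_2 \le 1$ and $\vec 0 \preceq \mA \preceq 2\mI$ (assuming $R \ge 1$, which holds since $B(\vec 0,1)\subseteq\cX\subseteq B(\vec 0,R)$). Then
\[
\frac1R\,\E\ip{\phi(\vx)-\vx, \vv-\vx} = \E\ip{\vb - \mA\vx, \phi(\vx)-\vx} \le \eps ,
\]
and multiplying through by $R$ gives the claimed $\eps R$ bound. Since both parts are direct instantiations, no step is a real obstacle.
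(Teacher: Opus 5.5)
Your proposal is correct and matches the paper's proof: part 1) sets $\mA = \vec 0$, and part 2) takes $\mA = \mI/R$ and $\vb = \vv/R$ for $\vv \in \cX \subseteq B(\vec 0, R)$. You also check explicitly that $\mI/R \preceq 2\mI$, which holds because $R \ge 1$; the paper leaves this detail implicit.
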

\begin{proof}
    Starting from \eqref{eq:qefp primal}, (1) follows by setting $\mA = \vec 0$, and (2) follows by setting $\mA = \mI/R$ and restricting $\vb \in \cX/R \subseteq B(\vec 0, 1)$. 
\end{proof}
Thus, \Cref{th:quadratic fp} gives an efficient algorithm for finding a distribution $\mu$ that {\em simultaneously} solves \eqref{eq:evi} and \eqref{eq:efp-primal}, which again may be of independent interest.

\subsection{FPTAS for CEFPs and low-dimensional $\Phi$-regret minimization for arbitrary concave functions}\label{sec:low dim fptas}
We end this section by observing that there exists an FPTAS for CEFPs in arbitrary concave functions, and therefore, again following the framework of \citet{Zhang25:Learning}, an efficient algorithm for {\em regret minimization} with concave utilities and low-dimensional $\Phi$ which in turn immediately implies an FPTAS for computing low-dimensional $\Phi$-equilibria in general concave games. 

\begin{proposition}\label{prop:fptas cefp}
    There is a $\poly(d, 1/\eps)$-time algorithm that, given $\phi : \cX \to \R^d$, either 1) solves the CEFP problem with respect to $\phi$ or 2) finds some $\vx$ such that $\phi(\vx) \notin \cX$. 
\end{proposition}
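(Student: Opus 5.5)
The plan is the one sketched in the introduction: iterate $\phi$ and output the uniform distribution over the trajectory, stopping early if we ever leave $\cX$. Concretely, fix $T = \lceil 1/\eps \rceil$, start from any $\vx_0 \in \cX$ (e.g.\ $\vx_0 = \vec 0$), and set $\vx_{t+1} = \phi(\vx_t)$ for $t = 0, \dots, T-1$. After each evaluation we call the separation oracle of $\cX$ on $\vx_{t+1}$. If some $\vx_{t+1} \notin \cX$, then $\vx_t \in \cX$ and $\phi(\vx_t) \notin \cX$, so we return $\vx_t$ as the certificate in outcome 2). Otherwise, all of $\vx_0, \dots, \vx_T$ lie in $\cX$, and we return $\mu = \unif\{\vx_0, \dots, \vx_{T-1}\}$.

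The key (and essentially only) step is a telescoping argument. For every concave $u : \cX \to [0,1]$,
\begin{align}
\E_{\vx\sim\mu}\ab[u(\phi(\vx)) - u(\vx)] = \frac{1}{T}\sum_{t=0}^{T-1} \ab[u(\vx_{t+1}) - u(\vx_t)] = \frac{u(\vx_T) - u(\vx_0)}{T} \le \frac{1}{T} \le \eps.
\end{align}
This uses only that $u$ has range in $[0,1]$ and that each $\vx_t$ is in the domain of $u$, which holds in the non-terminating case. Notably, concavity is not even needed. The bound holds simultaneously for all such $u$, so $\mu$ solves \eqref{eq:cefp-primal}.

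Running time: we make $T = O(1/\eps)$ evaluations of $\phi$ and $O(1/\eps)$ separation-oracle calls, each costing $\poly(d)$ in the black-box model of \Cref{sec:computation model}. The output has support size $O(1/\eps)$. Hence the total time is $\poly(d, 1/\eps)$.

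There is no real obstacle here. The only point needing care is the endomorphism check: the iterates must be verified to lie in $\cX$ before $u$ is applied to them. That check is exactly what the early-termination branch handles. If a weak-membership version is preferred, we can instead project onto $B(\cX,\cdot)$, at the cost of an additional $\eps$ slack.
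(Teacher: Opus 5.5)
Your proposal is correct and matches the paper's proof: iterate $\phi$ for about $1/\eps$ steps from an arbitrary starting point, terminate with a certificate if an iterate leaves $\cX$, and otherwise output the uniform distribution over the trajectory, whose deviation gain telescopes to at most $1/T \le \eps$. You are slightly more explicit than the paper in two places. You return the predecessor $\vx_t$ as the certificate, and you observe that concavity of $u$ is not actually used.
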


\begin{proof}
    Let $\vx^{(0)}$ be arbitrary, and for $j = 1, \dots, M \defeq 1/\epsilon$ take $\vx^{(j)} \defeq \phi(\vx^{(j-1)})$. If $\vx^{(j)} \notin \cX$ for some $j$, the algorithm can terminate. Otherwise, consider the distribution $\mu = \unif\{\vx_0, \dots, \vx_{M-1}\}$, and observe that
    \begin{align}
        \E_{\vx\sim\mu} [u(\phi(\vx)) - u(\vx)] &= \frac{1}{M} \sum_{j=0}^{M-1} \ab(u(\vx^{(j+1)}) - u(\vx^{(j)})) = \frac{1}{M} \ab( u(\vx^{(M)}) - u(\vx^{(0)})) \le \frac{1}{M} = \eps
    \end{align}
    as desired.
\end{proof}
A special case of this algorithm was already noted by \citet{Zhang24:Efficient} in the context of extensive-form games and linear utilities; here, we use it for the more general concave setting where a $\log(1/\eps)$-time algorithm is actually not possible. Once again, using the above result as a drop-in replacement for the expected fixed point calculation in \citet{Zhang25:Learning} gives the following.

\begin{theorem}\label{th:rm concave}
    Given convex set $\cX$ and function $\vm : \cX \to \R^k$, there exists a no-regret learning algorithm whose iterates are computable in time $\poly(d, k, R, 1/\eps)$, and which accepts possibly nonlinear but concave utilities, and whose $\Phi_\vm$-regret after $T$ iterations is bounded by $\eps$ after $\poly(d, k, R) / \eps^2$ rounds.
\end{theorem}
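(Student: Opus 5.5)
The plan is to run the framework of \citet{Zhang25:Learning} for $\Phi_\vm$-regret minimization, with the expected fixed point (EFP) subroutine replaced by the concave EFP oracle of \Cref{prop:fptas cefp}. That framework reduces $\Phi_\vm$-regret minimization to two ingredients.

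The first ingredient is an external-regret minimizer over the set of admissible matrices $\{\mK : \mK\vm(\vx)\in\cX \ \forall \vx\in\cX\}$. This is a convex set, and we only have a membership/separation certificate for it in the form of a violating $\vx$. The second ingredient is a per-round subroutine. Given the current $\mK^{(t)}$ (equivalently $\phi^{(t)}(\vx)=\mK^{(t)}\vm(\vx)$), it must output a mixed strategy $\mu^{(t)}$ that is an approximate expected fixed point of $\phi^{(t)}$. If instead $\phi^{(t)}$ is not an endomorphism, it must exhibit $\vx$ with $\phi^{(t)}(\vx)\notin\cX$, which yields a separating hyperplane in $\mK$-space.

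The first step is the regret decomposition, and this is where concavity enters. For each round I would write
\begin{align}
\E_{\vx\sim\mu^{(t)}}\big[u^{(t)}(\mK\vm(\vx)) - u^{(t)}(\vx)\big] ={}& \E_{\vx\sim\mu^{(t)}}\big[u^{(t)}(\mK\vm(\vx)) - u^{(t)}(\mK^{(t)}\vm(\vx))\big] \\
&+ \E_{\vx\sim\mu^{(t)}}\big[u^{(t)}(\phi^{(t)}(\vx)) - u^{(t)}(\vx)\big].
\end{align}
The second term is at most $\eps$ by the CEFP guarantee, since $u^{(t)}$ is concave and $[0,1]$-valued. This is exactly why the linear EFP must be replaced by a CEFP: a fixed linear test vector $\vv$ does not control a concave $u^{(t)}$.

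The first term is a regret term in the variable $\mK$. The map $\mK\mapsto \E_{\vx\sim\mu^{(t)}} u^{(t)}(\mK\vm(\vx))$ is concave, because it is a concave function composed with a linear map and then averaged. Its supergradient is $\E_{\vx\sim\mu^{(t)}}\grad u^{(t)}(\phi^{(t)}(\vx))\vm(\vx)^\top$, which is computable because $\mu^{(t)}$ has support of size $M+1=O(1/\eps)$. So I would feed these supergradients to online gradient descent over the feasible $\mK$-set. This needs two facts. First, a diameter bound $\poly(d,k,R)$ for the $\mK$-set, which I would inherit from \citet{Zhang25:Learning}. Second, a Lipschitz bound $\norm{\grad u}\le 1$ together with a bound on $\norm{\vm}$. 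Together these give $\poly(d,k,R)\sqrt{T}$ total regret, hence $\poly(d,k,R)/\eps^2$ rounds for average regret $\eps$.

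The main obstacle is implementing projection onto the $\mK$-set, which is accessible only through the semi-separation certificates. The plan is to use \citet{Zhang25:Learning}'s handling directly. When the CEFP subroutine returns a violating $\vx$, we obtain a hyperplane separating $\mK^{(t)}$ from the feasible set. We then either project onto that halfspace and retry, or run a projection via ellipsoid with this certificate-producing oracle, which still gives a polynomial number of calls per round. Each call costs $\poly(d,k)/\eps$ time, since it consists of $M=1/\eps$ evaluations of $\phi$ plus membership checks. This yields per-iterate time $\poly(d,k,R,1/\eps)$. Finally, the regret is measured against all of $\Phi_\vm$, and every $\phi\in\Phi_\vm$ is of the form $\mK\vm(\cdot)$ with $\mK$ in the feasible set, so the stated bound follows.
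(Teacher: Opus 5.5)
Your proposal is correct and takes essentially the same route as the paper: it uses the same split of the per-round deviation benefit into an $\eps$-CEFP term (handled by \Cref{prop:fptas cefp}) and a concavity-linearized term $\langle \E_{\vx\sim\mu^{(t)}} \grad u^{(t)}(\mK^{(t)}\vm(\vx)) \otimes \vm(\vx), \mK - \mK^{(t)}\rangle$, which is fed to (shell) gradient descent, with the projection implemented by the ellipsoid-based $\shellproj$/$\shellelips$ that use the CEFP routine as a semi-separation oracle. One point to sharpen: the CEFP routine cannot certify every non-endomorphism, so the iterates $\mK^{(t)}$ live in shell sets $\tilPhi^{(t)} \supseteq \Phi_\vm$ rather than in the admissible set itself (the regret bound is \Cref{lemma:shellgd}), and the polynomial bound on oracle calls per round comes from the ellipsoid-based $\shellproj$ rather than from repeated halfspace projections.
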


We defer the full algorithm behind~\Cref{th:rm concave} in~\Cref{appendix:shellGD}, as it largely follows~\citet{Zhang25:Learning}.

\begin{theorem}\label{th:concave fptas eqcomp}
    There is a $\poly(d, k, R, 1/\eps)$-time algorithm for computing an $\eps$-approximate $\Phi_\vm$-equilibrium of a given concave quadratic game $\Gamma$.
\end{theorem}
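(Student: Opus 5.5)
The plan is to obtain \Cref{th:concave fptas eqcomp} from \Cref{th:rm concave} by the standard no-regret-to-equilibrium argument. Note that the statement says ``concave quadratic,'' but the argument below works for any concave game; quadratic structure is never used.

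\textbf{Step 1: self-play.} Each of the (constantly many) players $i$ runs the no-$\Phi_{\vm_i}$-regret learner of \Cref{th:rm concave} on their own strategy set $\cX_i$. Player $i$ is fed the utility functions
\[
u_i^{(t)}(\vx_i) := \E_{\vx_{-i} \sim \mu_{-i}^{(t)}} u_i(\vx_i, \vx_{-i}), \qquad \mu_{-i}^{(t)} := \bigtimes_{j \neq i} \mu_j^{(t)}.
\]
These functions are concave in $\vx_i$, since each $u_i(\cdot, \vx_{-i})$ is concave and an average of concave functions is concave. They take values in $[0,1]$. Because $n$ is constant and each $\mu_j^{(t)}$ has finite support of size $\poly(d,k,R,1/\eps)$, the product $\mu_{-i}^{(t)}$ has polynomial support. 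Hence value and subgradient oracles for $u_i^{(t)}$ can be simulated exactly as finite averages of calls to the oracles for $u_i$ and $\grad u_i$.

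\textbf{Step 2: regret bound.} Run for $T = \poly(d,k,R)/\eps^2$ rounds, so that every player's $\Phi_{\vm_i}$-regret (in expectation over the mixed strategies) is at most $\eps$. Each round costs $\poly(d,k,R,1/\eps)$ time by \Cref{th:rm concave}, plus polynomial overhead for the oracle simulation. The total running time is therefore $\poly(d,k,R,1/\eps)$.

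\textbf{Step 3: extract the equilibrium.} Output
\[
\bar\mu := \frac1T \sum_{t=1}^T \bigl(\mu_1^{(t)} \times \dots \times \mu_n^{(t)}\bigr),
\]
which has polynomial support. For each player $i$ and each $\phi_i \in \Phi_{\vm_i}$,
\[
\E_{\vx \sim \bar\mu}\bigl[u_i(\phi_i(\vx_i), \vx_{-i}) - u_i(\vx)\bigr] = \frac1T \sum_{t=1}^T \E_{\vx_i \sim \mu_i^{(t)}}\bigl[u_i^{(t)}(\phi_i(\vx_i)) - u_i^{(t)}(\vx_i)\bigr] \le \eps.
\]
The equality holds by independence of the product distributions $\mu_i^{(t)} \times \mu_{-i}^{(t)}$, and the inequality is the regret bound from Step 2. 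This is exactly the definition of an $\eps$-approximate $\Phi_\vm$-equilibrium.

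\textbf{Main obstacle.} The only step needing care is the oracle simulation in Step 1. The regret in \Cref{th:rm concave} is against mixed strategies, and the adversary's utilities must be revealed after $\mu_i^{(t)}$ is committed. That is fine here because $u_i^{(t)}$ depends only on the other players' round-$t$ distributions. What must be checked is that the iterates' supports stay polynomial, so that $\mu_{-i}^{(t)}$ is explicitly enumerable. This holds because each iterate is produced in $\poly(d,k,R,1/\eps)$ time and so cannot list more support points than that.
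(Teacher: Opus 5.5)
Your proposal is correct and follows the paper's intended route. The paper obtains \Cref{th:concave fptas eqcomp} by running the no-$\Phi_{\vm}$-regret learner of \Cref{th:rm concave} for each player in self-play and outputting the average of the product distributions, with the constant number of players keeping the opponents' expected utilities and supergradients exactly computable; your observation that the quadratic structure is never used also matches the paper, which presents this result as an FPTAS for $\Phi$-equilibria in general concave games.
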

\begin{remark}
    In the results of this subsection, it is crucial that the regret in \eqref{eq:regret} is defined as it is, and {\em not} of in the linearized form
    \begin{align}
        \sup_{\phi \in \Phi} \frac{1}{T} \sum_{t=1}^T \ip*{\grad u^{(t)}(\vx^{(t)}), \phi(\vx^{(t)}) - \vx^{(t)}}.
    \end{align}
    This is because of the form of the fixed point error: \Cref{prop:fptas cefp} bounds only the quantity in \eqref{eq:cefp-primal}, {\em not} the linearized fixed point error
    \begin{align}
        \E_{\vx \sim\mu} \ip{\grad u(\vx), \phi(\vx) - \vx}.
    \end{align}
    This caveat applies only to the results in this subsection, and not to \Cref{th:concave quadratic eqcomp}, because for \Cref{th:concave quadratic eqcomp} the corresponding fixed point computation (\Cref{th:quadratic fp}) {\em does} bound the linearized fixed point error.
\end{remark}

\begin{remark}
    As we have been doing throughout the paper, our analysis here assumes that the number of players $n$ is a constant, so that expected utilities are efficiently computable. Here in the regret minimization case, since the desired dependence on $\eps$ is only inverse-polynomial, this assumption can be relaxed, because expected utilities and their gradients can be {\em estimated} to precision $\poly(d, R)/\sqrt{\ell}$ using $\ell$ samples by standard sampling techniques. For the sake of cleaniless, we do not elaborate on this construction in this paper. 
\end{remark}

\section{Conclusions and future research}

In this paper, we made significant progress on long-standing open questions regarding the complexity of normal-form correlated equilibria and swap regret minimization beyond polynomial-type games. Our \Contr-hardness result provides the first evidence for the intractability of computing normal-form correlated equilibria, corroborating a long-held belief in this area. Moreover, we established unconditional, information-theoretic lower bounds ruling out the existence of strongly sublinear learners for minimizing swap regret. On the positive side, we gave an fully polynomial-time approximation scheme ($\FPTAS$) for computing $\Phi$-equilibria in general concave games, and a polynomial-time algorithm for the important special case of concave quadratic games. The latter is based on a new algorithm we develop for computing what we refer to as quadratic expected fixed points, which could be of independent interest.

Our results open several avenues for future research. The main question that we leave open concerns the complexity of normal-form correlated equilibria in multilinear games, and specifically, more structured classes such as extensive-form and Bayesian games.
Matching the $\PPAD$ upper bound for NFCEs would require grappling with the fact that the set of solutions is convex, which is a fundamental obstacle in extending existing \PPAD-hardness reductions. For example, it is unclear whether computing a point in the convex hull of the set of Nash equilibria is \PPAD-hard. Strengthening our reduction to $\UEOPL$-hardness would be a natural next step. A less ambitious aim would be to establish \PPAD-hardness for more restricted NFCE representations, for example, ones expressed as polynomial mixtures of pure strategy profiles.

Another family of open problems stems from the relationships among the contraction problems that we study. For example, what is the relationship between $\SSG$, $\ell_\infty$-contraction, $\Contr$, $\UnkContr$, and NFCE in concave games? Each is at least as hard than the one before it, but none are known to be polynomially equivalent. In fact, the FPTASes that we are aware of for $\Contr$ do not work for our formulation of $\UnkContr$ (because we do not assume that $Q$ comes from a norm), so the complexity of $\UnkContr$ (and hence concave NFCE) may well be strictly higher than that of $\Contr$, even when $\eps$ is polynomially small.

\section*{Acknowledgements}

C.D. is supported by a Simons Investigator Award, a Simons Collaboration on Algorithmic Fairness, ONR MURI grant N00014-25-1-2116, and ONR grant N00014-25-1-2296. G.F. was supported in part by the National Science Foundation award CCF-2443068, the Office of
Naval Research grant N000142512296, and an AI2050 Early Career Fellowship. N.G. was supported by a Fannie \& John Hertz Foundation Fellowship and
an NSF Graduate Fellowship.  T.S. is supported by the Vannevar Bush Faculty Fellowship ONR N00014-23-1-2876, National Science Foundation grants RI-2312342 and RI-1901403, ARO award W911NF2210266, and NIH award A240108S001.  B.H.Z. was also supported by the CMU Computer Science Department Hans Berliner PhD Student Fellowship. 

\bibliographystyle{plainnat}
\bibliography{dairefs}

\appendix
\section{Omitted lemmas}

\begin{lemma}
    If $f : [-1, 1] \to [-1, 1]$ is a quadratic function, with $f(x) = \frac12 ax^2 + bx$, then $|a| \le 2$ and $|b| \le 1$.
\end{lemma}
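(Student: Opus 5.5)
The plan is to evaluate $f$ at the three points $x \in \{-1, 0, 1\}$ and solve the resulting linear equations for $a$ and $b$. Since $f(x) = \frac12 ax^2 + bx$ has no constant term, we get $f(0) = 0$. At the other two points,
\begin{align}
f(1) = \tfrac12 a + b, \qquad f(-1) = \tfrac12 a - b.
\end{align}
Both values lie in $[-1,1]$ by hypothesis.

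Adding and subtracting these two identities gives
\begin{align}
a = f(1) + f(-1), \qquad b = \tfrac12\bigl(f(1) - f(-1)\bigr).
\end{align}
The triangle inequality then yields $|a| \le |f(1)| + |f(-1)| \le 2$ and $|b| \le \frac12(|f(1)| + |f(-1)|) \le 1$, which is exactly the claim.

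There is no real obstacle. The only point to note is that the bound uses only the values at the endpoints, so it holds even though $f$ may attain its extremum at the interior vertex $-b/a$.

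In the application (the proof of \Cref{th:quadratic fp}), the statement is used after restricting a quadratic utility to lines through the origin. A utility with range at most $1$ restricts along each unit direction, within the unit ball contained in $\cX$, to a one-dimensional quadratic with range at most $1$. After subtracting the constant term, its values lie in $[-1,1]$. This one-dimensional bound then yields $|\vv^\top \mA \vv| \le 2$ and $|\vb^\top \vv| \le 1$ for all unit vectors $\vv$, which gives $\mA \preceq 2\mI$ and $\norm{\vb}_2 \le 1$.
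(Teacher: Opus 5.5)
Your proof is correct and matches the paper's argument: both evaluate $f$ at $\pm 1$ and read off $a = f(1)+f(-1)$ and $2b = f(1)-f(-1)$, then bound each in absolute value (the paper proves the upper bounds and appeals to symmetry for the lower ones, where you use the triangle inequality directly). Your remark on restricting the utility to unit directions and subtracting the constant term also matches how the paper applies this lemma in the subsequent coefficient-bounding lemma.
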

\begin{proof}
We have $f(1) - f(-1) = (\frac12 a+b)-(\frac12 a-b) = 2b \le 2$, so $b \le 1$. Similarly, we have $f(1) + f(-1) = (\frac12 a+b)+(\frac12 a-b) = a \le 2$. Symmetry gives $b \ge -1$ and $a \ge -2$.
\end{proof}
\begin{lemma}\label{lem:bound coefficients}
    Let $u : B(\vec 0, 1) \to [0, 1]$ be a quadratic function $u(\vx) = c + \vb^\top \vx + \frac12 \vx^\top \mA \vx$. Then $\norm{\vb}_2 \le 1$ and $\norm{\mA}_2 \le 2$. 
\end{lemma}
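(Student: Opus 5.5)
The plan is to reduce to the one-dimensional lemma stated just above by restricting $u$ to lines through the origin. Since $c$ is a constant, it is irrelevant to bounding $\vb$ and $\mA$, so we first subtract $u(\vec 0)=c$. The function $g(\vx) := u(\vx) - c = \vb^\top \vx + \frac12 \vx^\top \mA \vx$ then takes values in $[-1,1]$ on $B(\vec 0,1)$, because $u$ and $c = u(\vec 0)$ both lie in $[0,1]$.

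Fix an arbitrary unit vector $\vv \in \R^d$ and set $h(t) := g(t\vv)$ for $t \in [-1,1]$. Since $t\vv \in B(\vec 0,1)$, we have $h : [-1,1]\to[-1,1]$ and $h(t) = \frac12 (\vv^\top \mA \vv) t^2 + (\vb^\top \vv) t$. Applying the previous lemma gives $\abs{\vv^\top \mA \vv} \le 2$ and $\abs{\vb^\top \vv} \le 1$. The paper writes $\frac12\vx^\top\mA\vx$, so we may assume $\mA$ is symmetric; replacing $\mA$ by its symmetric part does not change $u$.

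Taking the supremum over unit $\vv$ then gives the two claims. First, $\norm{\vb}_2 = \sup_{\norm{\vv}_2=1} \vb^\top \vv \le 1$, where we may take $\vv = \vb/\norm{\vb}_2$ when $\vb \neq \vec 0$. Second, for symmetric $\mA$ the spectral norm satisfies $\norm{\mA}_2 = \sup_{\norm{\vv}_2 = 1} \abs{\vv^\top \mA \vv}$ (attained at an eigenvector of the largest-magnitude eigenvalue), so $\norm{\mA}_2 \le 2$.

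There is no real obstacle here. The only points needing care are the symmetry of $\mA$, which the variational characterization of the spectral norm requires, and the shift by $c$, which moves the range to $[-1,1]$ so that the hypothesis of the one-dimensional lemma is met. This yields $\mA \preceq 2\mI$ as used in the proof of \Cref{th:quadratic fp}; there the sign convention is $-\frac12\vx^\top\mA\vx$, which changes nothing about the bound on $\abs{\cdot}$.
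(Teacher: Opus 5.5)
Your proposal is correct and follows the paper's own proof: subtract $c$, restrict $u$ to the segment $t\vv$ for each unit vector $\vv$, apply the one-dimensional lemma to get $\abs{\vv^\top \mA \vv} \le 2$ and $\abs{\vb^\top \vv} \le 1$, and take the supremum over $\vv$. Your remark that this supremum equals $\norm{\mA}_2$ only after replacing $\mA$ by its symmetric part is a detail the paper leaves implicit.
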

\begin{proof}
    For every $\vv$ with $\norm{\vv}_2 = 1$, let $u_\vv : [-1, 1] \to [-1, 1]$ be defined by $u_\vv(t) = u(t\vv) - c = \frac12 t^2 (\vv^\top \mA \vv) + t \vb^\top \vv$. By the previous lemma, we have $|\vv^\top \mA \vv| \le 2$ and $|\vb^\top \vv| \le 1$. Since these hold for every $\vv$, the lemma follows.
\end{proof}

\section{Distributional solutions to \UnkContr}\label{app:unkcontr}

As stated in the body, we defined $\UnkContr$ as the problem of finding a {\em distribution} $\mu$ with $\E_\mu Q(\vx) \le \delta$. This choice was mostly for cleanliness, as our reductions are to problems that naturally return distributions. Here, we further justify this choice of definition, by showing that it is equivalent to a more standard definition that requires finding a {\em point} $\vx$ with $Q(\vx) \le \delta$.

\begin{definition}[{\sf Point-}\UnkContr]
Given $\cX, f, \gamma$, and $\delta > 0$, and promised that $f$ is a $(1-\gamma)$-contraction under some unknown $1$-Lipschitz convex $Q : \cX \to [0, 1]$, output a point $\vx \in \cX$ for which $Q(\vx) \le \delta$. 
\end{definition}

Of course, there is a straightforward {\em randomized} (Monte Carlo) algorithm for reducing {\sf Point-}$\UnkContr$ to \UnkContr: simply solve $\UnkContr$ with $\delta' \le p\delta$ and draw a point $\vx\sim\mu$ from the distribution $\mu$; this point must satisfy $Q(\vx) \le \delta$ with probability at least $1-p$ by Markov's inequality. We now exhibit a {\em deterministic} equivalence between the two problems as well.

\begin{proposition}
    $\UnkContr$ and {\sf Point-}$\UnkContr$ are polynomially equivalent, in the sense that a polynomial-time algorithm for one problem implies a polynomial-time algorithm for the other.
\end{proposition}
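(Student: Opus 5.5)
The plan is to prove the two directions separately. Both should be essentially immediate; the only real point is that the direction from distributions to points must be deterministic, and convexity of $Q$ gives exactly that.

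\emph{{\sf Point-}$\UnkContr$ $\Rightarrow$ $\UnkContr$.} Suppose we have a polynomial-time algorithm for {\sf Point-}$\UnkContr$. On an $\UnkContr$ instance $(\cX, f, \gamma, \delta)$, run it on the same instance; the promise is identical, so this is legitimate. It returns a point $\vx$ with $Q(\vx) \le \delta$. Output the point mass $\mu = \delta_{\vx} \in \Delta(\cX)$. This is a finite-support distribution with $\E_{\mu} Q = Q(\vx) \le \delta$, and the running time is unchanged.

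\emph{$\UnkContr$ $\Rightarrow$ {\sf Point-}$\UnkContr$.} Suppose instead we have a polynomial-time algorithm for $\UnkContr$. Run it with the same parameters to obtain $\mu$, which is given explicitly as a list $\{(\vx_\ell, p_\ell)\}$ with $\E_{\vx\sim\mu} Q(\vx) \le \delta$. Output its mean $\bar\vx := \sum_\ell p_\ell \vx_\ell$.

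\begin{itemize}
\item The support size is at most the running time of the $\UnkContr$ algorithm, so $\bar\vx$ is computable in polynomial time.
\item Since $\cX$ is convex, $\bar\vx \in \cX$.
\item Since $Q$ is convex, Jensen's inequality gives $Q(\bar\vx) \le \E_{\vx\sim\mu} Q(\vx) \le \delta$.
\end{itemize}

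Thus $\bar\vx$ solves {\sf Point-}$\UnkContr$, even though $Q$ is never evaluated.

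The main thing to be careful about is this averaging step. It relies on convexity of the unknown $Q$, which is part of the promise in both definitions, and on $\mu$ being output in explicit finite-support form, as fixed in \Cref{sec:computation model}. If only sampling access to $\mu$ were allowed, this deterministic argument would fail and we would fall back on the randomized Markov-inequality reduction described above. With the paper's output convention, no such obstacle arises.
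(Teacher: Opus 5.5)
Your proof is correct, but for the nontrivial direction it takes a different route from the paper. The paper does not use convexity of $Q$ there. It runs the $\UnkContr$ algorithm at a smaller precision $\delta'$ and outputs the highest-probability atom $\vx$ of $\mu$. That atom has mass at least $1/|\supp\mu|$, so Markov's inequality gives $Q(\vx) \le \delta' \cdot |\supp\mu|$. The paper then chooses $\delta'$ small relative to the polynomial runtime bound, which also bounds the support size. You instead output the mean $\bar\vx$ and apply Jensen, using convexity of $Q$ and of $\cX$, both of which are part of the problem's promise.

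Your route needs only one call at the original precision $\delta$. It does not depend on the explicit constants of the runtime polynomial. It also avoids calibrating $\delta'$ against a support-size bound that itself grows as $\delta'$ shrinks, a point the paper's write-up glosses over slightly.

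The paper's heaviest-atom argument buys generality in exchange. It works for any nonnegative potential $Q$, convex or not, and it returns a point that actually lies in the support of $\mu$. It also shows that your framing is slightly off: convexity is not what makes a deterministic reduction possible. Still, for the problem as defined, your averaging argument is the simpler one.
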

\begin{proof}
    One direction is trivial. For the other, suppose that there is a polynomial-time algorithm for \UnkContr. Then by definition, the runtime of the algorithm is bounded by $K\cdot (d \log(\nicefrac{R}{\gamma\delta}))^c$ for some constants $K, c > 0$. In particular, since distributions are represented by listing their supports, we also have $\abs{\supp\mu} \le K\cdot (d \log(\nicefrac{R}{\gamma\delta}))^c$. Now consider the following algorithm: pick some $\delta' \ll \delta$, solve $\UnkContr$ with precision $\delta'$ to obtain a distribution $\mu$, and output the highest-probability element $\vx$ in the distribution $\mu$. This highest-probability element must have probability at least $1/\abs{\supp\mu}$, and therefore Markov's inequality implies $Q(\vx) \le \delta' \cdot \abs{\supp\mu}$. Therefore, by taking
    \begin{align}
        \delta' \le \frac{\delta}{K\cdot (d \log(\nicefrac{R}{\gamma\delta}))^c},
    \end{align}
    we have $Q(\vx) \le \delta$, and we are done.
\end{proof}

\section{Shell GD and the algorithm of Theorem~\ref{th:rm concave}}
\label{appendix:shellGD}

For completeness, here we provide the construction behind~\Cref{th:rm concave}, which is given in~\Cref{alg:main}. It relies on~$\shellgd$ (\Cref{sec:shellgd}), which is (projected) gradient descent but with respect to a changing constraint set. $\shellgd$ internally makes use of $\shellproj$ (\Cref{sec:shellproj}), which provides a projection oracle. In turn, $\shellproj$ uses $\shellelips$, described in~\Cref{sec:shellellips}. It strengthens the semi-separation oracle of~\citet{Zhang25:Learning} by using \emph{concave} expected fixed points. Unlike expected fixed points, concave expected fixed points are likely hard to compute when $\epsilon$ is exponentially small (\Cref{th:cefp hard}), so the running time in $\shellelips$ is $\poly(1/\epsilon)$.

\subsection{Shell ellipsoid}
\label{sec:shellellips}

$\shellelips$ (\Cref{alg:shellellipsoid}) takes as input a convex set of transformations $\cF \subseteq \cB_D(\vec{0})$ for which we have efficient oracle access to and outputs \emph{either} a function $\phi \in \cF$ and an $\epsilon$-concave expected fixed point in $\Delta(\cX)$, \emph{or} a certificate---a polytope specified as the intersection of a polynomial number of halfspaces---showing that $\vol(\cF \cap \enfuns) \approx 0$.
\begin{lemma}
    \label{lemma:shellellipsoid}
    If $\cF \subseteq \cB_{D}(\vec{0})$ is a $k$-dimensional convex set that admits efficient oracle access, for $\epsilon > 0$, $\shellelips(\cF)$ (\Cref{alg:shellellipsoid}) runs in time $\poly(k, 1/\epsilon, \log D)$, and 
    \begin{itemize}
        \item either outputs $\phi \in \cF$ with an $\epsilon$-concave expected fixed point in $\cX$,
        \item or it returns a polytope $\cQ \subseteq \R^k$, expressed as the intersection of at most $\poly( k, \log(1/\epsilon), \log D)$ halfspaces, with the property that $\Phi_{\vec{m}} \subseteq \cQ$ and $\vol(\cQ \cap \cF) < \epsilon$.
    \end{itemize}
\end{lemma}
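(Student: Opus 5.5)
We will prove \Cref{lemma:shellellipsoid} by running the central-cut ellipsoid method directly on the $k$-dimensional convex set $\cF$, identifying each candidate transformation with its coefficient matrix $\mK$ so that $\phi(\vx)=\mK\vm(\vx)$. We start from the ball $\cB_D(\vec 0)\supseteq\cF$ and maintain an ellipsoid $\cE_s$ together with a list of halfspaces $H_1,\dots,H_s$. The invariant is $\Phi_{\vm}\cap\cF\subseteq \cE_s\cap\cF$, and more precisely $\Phi_\vm\subseteq\bigcap_{r\le s}H_r$.

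Each iteration looks at the current center $\mK_s$. If $\mK_s\notin\cF$, we query the oracle for $\cF$ and cut with the returned separating hyperplane. This cut removes no point of $\cF$, so it does not need to be recorded in $\cQ$. If $\mK_s\in\cF$, we run the procedure of \Cref{prop:fptas cefp} on $\phi_s:=\mK_s\vm(\cdot)$, which iterates $\phi_s$ for $M=1/\epsilon$ steps from an arbitrary start. There are two outcomes.
\begin{itemize}
\item All iterates remain in $\cX$. Then the uniform distribution over them is an $\epsilon$-concave expected fixed point, and we output $\phi_s\in\cF$ together with it. This is the first alternative of the lemma.
\item Some iterate satisfies $\phi_s(\vx)\notin\cX$. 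Then the separation oracle of $\cX$ returns $\va,b$ with $\va^\top\phi_s(\vx)>b\ge\va^\top\vy$ for all $\vy\in\cX$. The halfspace $H:=\{\mK:\va^\top\mK\vm(\vx)\le b\}$ is linear in $\mK$, contains every endomorphism in $\Phi_\vm$, and excludes the center $\mK_s$. We record $H$ in the list and cut $\cE_s$ by it.
\end{itemize}

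For termination we use the standard volume decrease: each central cut shrinks the volume by a factor $e^{-1/(2(k+1))}$, starting from at most $\vol(\cB_D)\le (2D)^k$. After $S=O(k^2\log(D/\epsilon))$ iterations without an output of the first kind, we have $\vol(\cE_S)<\epsilon$. We then return $\cQ:=\bigcap_r H_r$, built only from the recorded halfspaces.

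The second alternative follows from the invariant. Every point of $\cQ\cap\cF$ survived all the cuts, since the $\cF$-cuts never remove points of $\cF$ and the other cuts are exactly the $H_r$. Hence $\cQ\cap\cF\subseteq\cE_S$ and $\vol(\cQ\cap\cF)<\epsilon$. Also $\Phi_\vm\subseteq\cQ$ by construction. The number of halfspaces is at most $S=\poly(k,\log(1/\epsilon),\log D)$. Each iteration costs $\poly(k)$ oracle calls plus $1/\epsilon$ evaluations of $\phi_s$ and separation queries, so the total time is $\poly(k,1/\epsilon,\log D)$.

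The main obstacle is degeneracy of the ellipsoid method when $\cF$ lies in a lower-dimensional affine subspace of the coefficient space. We will treat $\cF$ as full-dimensional inside its own $k$-dimensional affine hull, parametrize that hull explicitly, and run the ellipsoid there, so that volumes are measured in dimension $k$. A smaller point is that $\vx$ depends on $\mK_s$, which makes the halfspaces $H$ adaptive; this is harmless, because each $H$ remains valid for all of $\Phi_\vm$ regardless of how $\vx$ was chosen.
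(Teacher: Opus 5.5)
Your proposal is correct and follows essentially the same route as the paper. The paper runs the ellipsoid method from $\cB_D(\vec 0)$, tests each center with the iteration procedure of \Cref{prop:fptas cefp}, and otherwise cuts with a halfspace separating the center from $\enfuns$, deferring correctness to \citet{Daskalakis25:Efficient}; your explicit construction of that halfspace from $\cX$'s separation oracle, your handling of centers outside $\cF$ (unrecorded $\cF$-cuts), and your volume count fill in details the paper's pseudocode glosses over.
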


The proof of correctness is immediate following~\citet{Daskalakis25:Efficient}.

\begin{algorithm}[!ht]
\caption{$\shellelips(\cF)$}
\label{alg:shellellipsoid}
\SetKwInOut{Input}{Input}
\SetKwInOut{Output}{Output}
\SetKw{Input}{Input:}
\SetKw{Output}{Output:}
\Input{
    \begin{itemize}[noitemsep,topsep=0pt]
        \item Oracle access to convex set $\cX \subset \R^d$
        \item Oracle access to a $k$-dimensional convex set $\cF \subseteq \cB_D(\vec{0})$
        \item Precision $\epsilon > 0$
    \end{itemize}
}
Initialize $\cE \defeq \cB_D(\vec{0})$ and $\cQ \defeq \R^k$\\
 \While{$\vol(\cE) \geq \epsilon$} {
    Set $\phi \in \cQ \cap \cF$ to be the center of the ellipsoid $\cE$\\
    Run the algorithm of~\Cref{prop:fptas cefp} with respect to $\phi$\\
    \If{it returned an $\epsilon$-concave expected fixed point $\mu \in \Delta(\cX)$ of $\phi$}{
        \textbf{return} $\phi$\\
    }
    \Else{
        Let $H$ be a halfspace that separates $\phi$ from $\enfuns$\\
        Set $\cQ \defeq \cQ \cap H$\\
    }
    Set $\cE$ to be the minimum volume ellipsoid containing $\cQ \cap \cF$
 }
 \textbf{return} $\cQ$
\end{algorithm}

\subsection{Shell gradient descent}
\label{sec:shellgd}

The basic approach consists in executing (projected) gradient descent but under a sequence of changing shell sets, $(\tilY^{(t)})_{t=1}^T$, of $\enfuns$, each containing $\enfuns$. This process is referred to as $\shellgd$ (\Cref{alg:shellgd}). If one maintains the invariance $\enfuns \subseteq \tilY^{(t)}$, it is not hard to show that $\shellgd$ minimizes external regret with respect to~$\enfuns$~\citep{Zhang25:Learning}.

\begin{lemma}[\citep{Daskalakis25:Efficient}]
    \label{lemma:shellgd}
    If the shell sets $(\tilY^{(t)})_{t=1}^T$ satisfy $\enfuns \subseteq \tilY^{(t)} \subseteq \cB_D(\vec{0})$ for any $t \in [T]$, then for any sequence of utilities $\vec{U}^{(1)}, \dots, \vec{U}^{(T)} \in [-1, 1]^k$, $\shellgd$ (\Cref{alg:shellgd}) satisfies
    \begin{equation}
        \max_{\vy^* \in \enfuns} \sum_{t=1}^T \langle \vy^* - \vy^{(t)}, \vec{U}^{(t)} \rangle \leq \frac{D^2}{2\eta} + \eta \sum_{t=1}^T \| \vec{U}^{(t)} \|_2^2.
    \end{equation}
\end{lemma}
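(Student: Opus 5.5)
We will prove the regret bound for $\shellgd$ with the standard online projected gradient ascent argument, adapted to shell sets that change from round to round. Recall the update: $\vy^{(t+1)}$ is the Euclidean projection of $\vy^{(t)} + \eta \vec{U}^{(t)}$ onto $\tilY^{(t+1)}$, with $\vy^{(1)} \in \tilY^{(1)}$. The only property of the shells we need is that every comparator $\vy^* \in \enfuns$ lies in every $\tilY^{(t)}$. That is exactly the hypothesis $\enfuns \subseteq \tilY^{(t)}$.

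\textbf{Step 1: one-step inequality.} Fix $\vy^* \in \enfuns$. Since $\vy^* \in \tilY^{(t+1)}$ and $\tilY^{(t+1)}$ is convex, the projection is nonexpansive toward points of the set (Pythagorean inequality for projections). Hence
\begin{align}
\norm{\vy^{(t+1)} - \vy^*}_2^2 \le \norm{\vy^{(t)} + \eta \vec{U}^{(t)} - \vy^*}_2^2.
\end{align}
Expanding the right-hand side gives
\begin{align}
\norm{\vy^{(t+1)} - \vy^*}_2^2 \le \norm{\vy^{(t)} - \vy^*}_2^2 - 2\eta \ip{\vy^* - \vy^{(t)}, \vec{U}^{(t)}} + \eta^2 \norm{\vec{U}^{(t)}}_2^2.
\end{align}
Rearranging bounds $\ip{\vy^* - \vy^{(t)}, \vec{U}^{(t)}}$ by $\frac{1}{2\eta}\bigl(\norm{\vy^{(t)} - \vy^*}_2^2 - \norm{\vy^{(t+1)} - \vy^*}_2^2\bigr) + \frac{\eta}{2}\norm{\vec{U}^{(t)}}_2^2$.

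\textbf{Step 2: telescoping.} Summing over $t = 1, \dots, T$, the distance terms telescope. What remains is at most $\frac{1}{2\eta}\norm{\vy^{(1)} - \vy^*}_2^2$. Both $\vy^{(1)}$ and $\vy^*$ lie in $\cB_D(\vec 0)$, so this is at most $\frac{(2D)^2}{2\eta}$.

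\textbf{Step 3: matching the stated constant.} The naive diameter bound gives $\frac{4D^2}{2\eta}$, not $\frac{D^2}{2\eta}$. To recover the stated constant, I will either interpret $D$ as the diameter, or initialize $\vy^{(1)} = \vec 0$, which gives $\norm{\vy^{(1)} - \vy^*}_2 \le D$ directly. The gradient term is $\frac{\eta}{2}\sum_t \norm{\vec{U}^{(t)}}_2^2$, which is at most the stated $\eta\sum_t \norm{\vec{U}^{(t)}}_2^2$. Taking the maximum over $\vy^*$ finishes the proof.

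\textbf{Main obstacle.} The only delicate point is that the projection target changes with $t$. One might worry that the iterates $\vy^{(t)}$ leave $\enfuns$, or that projecting onto different sets breaks the telescoping. Neither is a problem: the argument never requires $\vy^{(t)} \in \enfuns$, only that the comparator lies in the set being projected onto at each round. This is precisely the invariant $\enfuns \subseteq \tilY^{(t)}$. If the projections in $\shellproj$ are only approximate, I would add an error term of order $\sum_t \text{(projection error)} \cdot D/\eta$, which can be made negligible.
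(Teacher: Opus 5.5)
Your proof is correct and is the standard projected online-gradient analysis that the paper defers to \citet{Daskalakis25:Efficient}. The only property of the shells it uses is that every comparator $\vy^* \in \enfuns$ lies in each $\tilY^{(t)}$, so nonexpansiveness of projection toward $\vy^*$ and the telescoping go through unchanged; for the last round, set $\vy^{(T+1)} := \vy^{(T)} + \eta \vec U^{(T)}$ without projecting. Your Step~3 caveat is genuinely needed: with an arbitrary $\vy^{(0)} \in \tilY^{(1)}$ and $\cB_D(\vec 0)$ of radius $D$, the constant $D^2/(2\eta)$ can fail (in one dimension with $\enfuns = \tilY^{(t)} = [-1,1]$, $\vy^{(0)} = -1$, $\vec U^{(t)} \equiv 1$, $\eta = 1/100$, $T = 200$, the regret against $\vy^* = 1$ is $201$ versus a claimed bound of $52$), and initializing at $\vec 0$ is a legitimate repair, since the zero map sends $\cX$ to $\vec 0 \in \cX$ and hence $\vec 0 \in \enfuns \subseteq \tilY^{(1)}$.
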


\begin{algorithm}[!ht]
\caption{$\shellgd$~\citep{Daskalakis25:Efficient}}
\label{alg:shellgd}
\SetKwInOut{Input}{Input}
\SetKwInOut{Output}{Output}
\SetKw{Input}{Input:}
\SetKw{Output}{Output:}
\Input{Learning rate $\eta$, oracle access to convex and compact sets $\tilY^{(1)}, \dots, \tilY^{(T)} \subseteq \cB_D(\vec{0})$}\;
Initialize $\vy^{(0)} \in \tilY^{(1)}$ and $\vec{U}^{(0)} \defeq \vec{0}$\;
 \For{$ t=1, \dots, T$} {
    Update $\vy^{(t)} \defeq \Pi_{\tilY^{(t)}}( \vy^{(t-1)} + \eta \vec{U}^{(t-1)})$\\
    Output $\vy^{(t)}$ and obtain feedback $\vec{U}^{(t)} \in [-1, 1]^k$
 }
\end{algorithm}

\subsection{Shell projection}
\label{sec:shellproj}

$\shellgd$ hinges on a projection operation, which is implemented in $\shellproj$. The following lemma formalizes the guarantee of $\shellproj$.

\begin{lemma}
    \label{lemma:shellproj}
    Let $\cX$ be a convex and compact set such that $\cB_{r}(\vec{0}) \subseteq \cX \subseteq \cB_R(\vec{0})$ and $\cM$ be a convex set such that $\enfuns \subseteq \cM \subseteq \cB_D(\vec{0})$. For any $\phi \in \cB_D(\vec{0}) \subseteq \R^k$ and $\epsilon > 0$, $\shellproj$ (\Cref{alg:shellproj}) runs in time $\poly(k, 1/\epsilon, R/r, D)$ and returns
    \begin{enumerate}
        \item a shell set $\tilPhi$ satisfying $\enfuns \subseteq \tilPhi$, expressed by intersecting $\cM$ with at most $\poly(d, k, 1/\epsilon, R/r, D)$ halfspaces, and\label{item:invar}
        \item a transformation $\tilphi \in \tilPhi$ such that $\| \tilphi - \Pi_{\tilPhi}(\phi) \| \leq \epsilon$, together with an $\epsilon$-concave expected fixed point of $\tilphi$, $\mu \in \Delta(\cX)$.\label{item:proj}
    \end{enumerate}
\end{lemma}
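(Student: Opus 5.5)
The plan is to mirror the shell projection routine of \citet{Zhang25:Learning,Daskalakis25:Efficient}, replacing their expected-fixed-point subroutine with $\shellelips$ (\Cref{lemma:shellellipsoid}). That lemma in turn calls the concave EFP procedure of \Cref{prop:fptas cefp}. The projection $\Pi_{\tilPhi}(\phi)$ is the minimizer of the strongly convex objective $\psi \mapsto \norm{\psi - \phi}^2$ over a shell set. I would compute it with an outer cutting-plane (ellipsoid) method run over $\cM \cap \{\psi : \norm{\psi - \phi} \le r_0\}$, while lazily refining the shell set by adding halfspaces.

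The invariant to maintain is $\enfuns \subseteq \tilPhi$, where $\tilPhi$ is $\cM$ intersected with all halfspaces discovered so far. The loop runs as follows.

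First, take the current candidate $\psi$ as the center of the current ellipsoid. Call $\shellelips$ on a small ball (of radius $\approx\epsilon$) around $\psi$, intersected with $\tilPhi$.

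Second, \Cref{lemma:shellellipsoid} returns one of two outcomes. It may return some $\tilphi$ in that ball, together with an $\epsilon$-concave expected fixed point. We then use $\tilphi$ as a witness that the sublevel set near $\psi$ is nonempty, and we cut the ellipsoid with the objective's gradient hyperplane, as in standard ellipsoid-based convex minimization. Alternatively, it may return a polytope $\cQ \supseteq \enfuns$ with $\vol(\cQ \cap \text{ball}) < \epsilon$. We then intersect $\tilPhi$ with $\cQ$; the invariant is preserved because $\enfuns \subseteq \cQ$. This ball is now (essentially) cut away, so $\psi$ is treated as infeasible and a cut separating it is used.

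Third, after $\poly(k, \log(D/\epsilon))$ iterations the ellipsoid volume drops below threshold. We output the best witnessed $\tilphi$ together with its fixed point $\mu$. Since each call to \Cref{prop:fptas cefp} costs $\poly(d,1/\epsilon)$, the running time is $\poly(k, 1/\epsilon, R/r, D)$, and the halfspace count is the product of the two polynomial iteration bounds, giving \Cref{item:invar}.

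For \Cref{item:proj}, I would argue via strong convexity of the squared distance. Suppose the final $\tilphi$ is within $\epsilon^2$ of optimal value over the final $\tilPhi$ (up to the volume slack). Then $\norm{\tilphi - \Pi_{\tilPhi}(\phi)} \le O(\epsilon)$, and rescaling $\epsilon$ finishes.

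The main obstacle is the volume-versus-distance conversion. Returned shells only guarantee $\vol(\cQ\cap\cF)<\epsilon$, not emptiness, so the true projection onto the final $\tilPhi$ could lie in a thin sliver never witnessed. I would handle it as \citet{Daskalakis25:Efficient} do. A convex set with tiny volume inside a ball containing a point near the optimum must be thin. Because $\enfuns$ itself is full-dimensional relative to its affine hull (bounded below via $B(\vec 0,r)\subseteq\cX$, using $R/r$), thin slivers cannot contain points much better than the witnessed ones. If needed, I would choose the volume threshold as $(\epsilon r/RD)^{k}$, which costs only $\log$ factors.
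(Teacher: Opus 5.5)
\textbf{There is a genuine gap: your output witness need not lie in the final shell.} In your outer cutting-plane scheme, a witness $\tilphi_t$ found at iteration $t$ is only known to lie in the shell $\tilPhi_t$ that was current at that moment. Later calls to \shellelips{} keep intersecting the shell with new halfspaces. Each one separates some queried center from $\enfuns$, using a point $\vx$ that the center maps outside $\cX$: it has the form $\{\mK : \langle \va, \mK\vm(\vx)\rangle \le b\}$.

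An $\epsilon$-concave expected fixed point only certifies that one orbit of length $1/\epsilon$ stays in $\cX$. It does not make $\tilphi_t$ an endomorphism, so a later halfspace can cut $\tilphi_t$ off. The best witness you output may then lie outside the final $\tilPhi$, which violates the second item of the lemma outright.

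Your strong-convexity step also fails. The inequality $\|\tilphi - \Pi_{\tilPhi}(\phi)\|^2 \le \|\tilphi - \phi\|^2 - \|\Pi_{\tilPhi}(\phi) - \phi\|^2$ requires $\tilphi \in \tilPhi$. A cut-off witness can be much closer to $\phi$ than the final projection and still far from it. Your ellipsoid analysis only gives $\|\tilphi-\phi\|^2 \le \min_{\tilPhi}\|\cdot-\phi\|^2 + \epsilon^2$; it never gives the reverse inequality that membership would supply.

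Returning the earlier shell $\tilPhi_t$ does not help either. The infeasibility cuts made after time $t$ are what certify that nothing better exists, and they separate centers only from the later, smaller shells. In short, your negative information is monotone under shell refinement but your positive information is not, and your analysis treats both as if they were. The volume-to-distance conversion via fatness of $\enfuns$, which you flagged as the main obstacle, is fine; the real obstacle is this monotonicity issue.

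\textbf{How the paper's algorithm avoids this.} \Cref{alg:shellproj} is built around exactly this asymmetry. It sweeps the radius $q = 0, \delta, 2\delta, \dots$ and calls \shellelips{} on $\tilPhi \cap \cB_q(\phi)$.

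A failure at radius $q$ shrinks the shell so that $\tilPhi \cap \cB_q(\phi)$ has negligible volume. This negative certificate survives all later shrinking of $\tilPhi$.

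The first success returns immediately with the current shell. So $\tilphi \in \tilPhi$ holds by construction, and $\|\tilphi - \phi\| \le q$. Your fatness argument then converts the negligible volume of $\tilPhi \cap \cB_{q-\delta}(\phi)$ into $\|\Pi_{\tilPhi}(\phi) - \phi\| \gtrsim q - \delta$. Hence $\|\tilphi - \Pi_{\tilPhi}(\phi)\|^2 \lesssim q^2 - (q-\delta)^2 \le 2q\delta$, which is small for $\delta$ polynomially small in $\epsilon/D$.

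The cost is $O(D/\delta)$ calls to \shellelips{}, which is why the lemma's bounds are polynomial rather than logarithmic in $D$ and $1/\epsilon$. To repair your scheme you would need to end with such a certify-then-return-immediately step, which essentially reduces it to the paper's sweep.
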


\begin{algorithm}[!ht]
\caption{$\shellproj_{\Phi_{\vec{m}}}(\phi)$}
\label{alg:shellproj}
\SetKwInOut{Input}{Input}
\SetKwInOut{Output}{Output}
\SetKw{Input}{Input:}
\SetKw{Output}{Output:}
\Input{
    \begin{itemize}[noitemsep,topsep=0pt]
        \item Convex and compact set $\cX \subset \R^d$ with $\cB_r(\vec{0}) \subseteq \cX \subseteq \cB_R(\vec{0})$
        \item Convex set $\cM$ with $\enfuns \subseteq \cM \subseteq \cB_D(\vec{0})$
        \item Function $\phi \in \cB_D(\vec{0})$
        \item Precision $\epsilon > 0$
    \end{itemize}
}
\Output{
    \begin{itemize}[noitemsep,topsep=0pt]
        \item Convex set $\tilPhi$ such that $\enfuns \subseteq \tilPhi \subseteq \cM$
        \item Function $\tilphi \in \tilPhi$ such that $\| \tilphi - \Pi_{\tilPhi}(\phi) \| \leq \epsilon$
        \item An $\epsilon$-concave expected fixed point $\mu \in \Delta(\cX)$ of $\tilphi$
    \end{itemize}
}
Set $\epsilon'$ to be sufficiently small\\
Initialize $\tilPhi \defeq \cM$\\
 \For{$q = 0, \dots$ incremented by $\delta \defeq \nicefrac{\epsilon}{4D} $} {
    Run $\shellelips( \tilPhi \cap \cB_q(\phi))$ with precision $\vol(\cB_{\epsilon'}(\cdot))$\\
    \If{it finds $\tilphi$ with an $\epsilon$-concave expected fixed point $\mu \in \Delta(\cX)$}{
        \textbf{return} $\tilPhi, \tilphi, \mu$
    }
    \Else{
        Let $\cQ$ be the polytope returned by $\shellelips$\\
        Set $\tilPhi \defeq \tilPhi \cap \cQ$
    }
 }
\end{algorithm}

\subsection{The final algorithm}
\label{sec:put}

We are now ready to describe our algorithm for minimizing $\Phi_{\vec{m}}$-regret when $\Phi_{\vec{m}}$ is $\poly(d)$-dimensional (\Cref{alg:main}). It is based on running $\shellgd$ under the shell sets $(\tilPhi^{(t)} )_{t=1}^T$. By the guarantee of $\shellproj$, we know that $\Phi_{\vec{m}} \subseteq \tilPhi^{(t)}$ for any $t \in [T]$. Further, $\mK^{(t+1)} \in \tilPhi^{(t+1)}$ returned by $\shellproj$ in~\Cref{alg:main} is within distance $\epsilon$ of the projection required by~$\shellgd$. Applying~\Cref{lemma:shellgd}, we can bound the external regret $\Reg_{\Phi_{\vec{m}}}(T)$ of $(\mK^{(t)})_{t=1}^T$ with respect to comparators from $\enfuns$; combined with the fact that $\mu^{(t)} \in \Delta(\cX)$ is an $\epsilon$-concave expected fixed point of the function $\vx \mapsto \mK^{(t)} \vec{m}(\vx)$, it follows that the $\Phi_{\vec{m}}$-regret of \Cref{alg:main} can be upper bounded by $\Reg_{\Phi_{\vec{m}}}(T) + \epsilon T$; this is formalized below, following the argument of~\citet{Gordon08:No}.

\begin{theorem}
    \label{theorem:main-prec}
    Let $\cX \subset \R^d$ be a convex and compact set in isotropic position for which we have a membership oracle. \Cref{alg:main} runs in time $\poly(k, T)$ and guarantees that the average $\Phi_{\vec{m}}$-regret of the learner is at most $\poly(k) / \sqrt{T}$, where $k$ is the dimension of $\Phi_{\vec{m}}$.
\end{theorem}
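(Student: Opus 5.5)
The plan follows \citet{Gordon08:No}: run $\shellgd$ over the $k$-dimensional parameter space of $\enfuns$, and let each round's mixed strategy be an approximate concave expected fixed point of the current deviation. At round $t$, \Cref{alg:main} holds a matrix $\mK^{(t)}$ and the associated map $\phi^{(t)}(\vx) = \mK^{(t)}\vm(\vx)$. It plays the distribution $\mu^{(t)}$ returned by $\shellproj$, which by \Cref{lemma:shellproj} is an $\epsilon$-concave expected fixed point of $\phi^{(t)}$. After observing the concave utility $u^{(t)}$, the learner feeds $\shellgd$ the linear functional
\[
\vec U^{(t)}(\mK) := \E_{\vx\sim\mu^{(t)}} u^{(t)}(\mK\vm(\vx)),
\]
or, more precisely, a supergradient of the concave map $\mK \mapsto \E_{\vx\sim\mu^{(t)}} u^{(t)}(\mK\vm(\vx))$ evaluated at $\mK^{(t)}$. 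This map is concave in $\mK$ because $u^{(t)}$ is concave and $\mK\mapsto \mK\vm(\vx)$ is linear. After normalization, its supergradient has entries in $[-1,1]^k$ up to $\poly(k)$ factors.

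The regret decomposes as follows. Fix any $\phi^* = \mK^*\vm \in \enfuns$. Then
\begin{align}
\sum_t \E_{\mu^{(t)}}\ab[u^{(t)}(\phi^*(\vx)) - u^{(t)}(\vx)] = \sum_t \E_{\mu^{(t)}}\ab[u^{(t)}(\phi^*(\vx)) - u^{(t)}(\phi^{(t)}(\vx))] + \sum_t \E_{\mu^{(t)}}\ab[u^{(t)}(\phi^{(t)}(\vx)) - u^{(t)}(\vx)].
\end{align}
The second sum is at most $\epsilon T$, because $\mu^{(t)}$ is an $\epsilon$-concave expected fixed point of $\phi^{(t)}$ and each $u^{(t)}$ is concave; this is exactly the CEFP guarantee from \Cref{prop:fptas cefp}. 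For the first sum, concavity in $\mK$ bounds each term by $\langle \mK^*-\mK^{(t)}, \vec U^{(t)}\rangle$, the linearization at $\mK^{(t)}$. The total is therefore the external regret of $\shellgd$ against comparators in $\enfuns$.

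To control that external regret, I invoke \Cref{lemma:shellgd}. Two facts are needed. The first is the invariant $\enfuns\subseteq \tilPhi^{(t)}\subseteq \cB_D(\vec 0)$, which is item~\ref{item:invar} of \Cref{lemma:shellproj}. The second is that the iterate $\mK^{(t+1)}$ lies within $\epsilon$ of the exact projection required by $\shellgd$, which is item~\ref{item:proj}. The approximate projection adds an error of order $\epsilon D\cdot\poly(k)$ per round, and I fold this into the $\epsilon T$ term. Choosing $\eta \sim D/\sqrt{\poly(k)T}$ then gives external regret $O(D\sqrt{\poly(k)T})$. Since $\cX$ is in isotropic position, $R/r$ is $\poly(d)$, and $D$ is $\poly(k)$ because $\enfuns$ sits in a ball of that radius. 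Finally I set $\epsilon = 1/\sqrt T$, so that the runtime $\poly(k,1/\epsilon)$ becomes $\poly(k,T)$ and the average regret is $\poly(k)/\sqrt T$.

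The main obstacle is making the inexact projection of $\shellgd$ rigorous. The standard gradient-descent potential argument has to tolerate iterates that are only $\epsilon$-close to $\Pi_{\tilPhi^{(t)}}$, and the shell sets shrink over time. I would redo the proof of \Cref{lemma:shellgd} with the potential $\norm{\mK^*-\mK^{(t)}}^2$. Because $\mK^*\in\enfuns\subseteq\tilPhi^{(t+1)}$, projecting onto the smaller shell still does not increase the distance to $\mK^*$. The $\epsilon$-perturbation costs at most $2D\epsilon+\epsilon^2$ per step, which is absorbed by the choice $\epsilon=1/\sqrt T$.
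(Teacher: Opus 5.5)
Your argument follows the paper's proof. You split each round's deviation benefit into the $\epsilon$-CEFP term and the linearized term $\langle \mK^*-\mK^{(t)}, \mU^{(t)}\rangle$, whose supergradient is exactly the paper's $\mU^{(t)}$. You then bound the external regret of \textsf{ShellGD} over shell sets containing $\enfuns$. The one place you go beyond the paper is the explicit handling of the inexact projection, and there the accounting is wrong in a way that breaks the rate you claim.

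In the potential argument with $\norm{\mK^*-\mK^{(t)}}^2$, the $\epsilon$-inexactness of $\mK^{(t+1)}$ does add only $O(D\epsilon)$ per step to the potential. However, the telescoping potential is divided by $2\eta$ when you solve for the linearized regret, so summing gives
\begin{align}
\sum_{t=1}^T \langle \mK^* - \mK^{(t)}, \mU^{(t)} \rangle \le O\Big(\frac{D^2}{\eta}\Big) + \frac{\eta}{2} \sum_{t=1}^T \norm{\mU^{(t)}}^2 + \frac{T\,(4D\epsilon + \epsilon^2)}{2\eta}.
\end{align}
Let $G = \poly(k)$ bound $\norm{\mU^{(t)}}$ and take $\eta \asymp D/(G\sqrt T)$ as you propose. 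Then the last term is of order $G\epsilon T^{3/2}$. With $\epsilon = 1/\sqrt T$ that is $\Theta(GT)$, so the average regret is $\Theta(\poly(k))$, not $\poly(k)/\sqrt T$. For the same reason the projection error cannot be folded into the $\epsilon T$ term: it is of order $D\epsilon T/\eta$, not $\epsilon T$.

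The fix is to take $\epsilon \lesssim 1/T$, for example $\epsilon = 1/(\poly(k)\,T)$. This is what the paper's choice of a sufficiently small $\epsilon = 1/\poly(k,T)$ in \Cref{alg:main} allows. Since \textsf{ShellProject} and the CEFP routine run in $\poly(k,1/\epsilon)$ time, the runtime remains $\poly(k,T)$, and the CEFP term $\epsilon T$ becomes negligible.

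A minor point: the shell sets need not shrink over time, since \textsf{ShellProject} is always called with the same $\cM$. Your argument only uses $\mK^*\in\enfuns\subseteq\tilPhi^{(t+1)}$ and convexity of $\tilPhi^{(t+1)}$, and both do hold.
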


\begin{algorithm}[!ht]
\caption{$\Phi_{\vec{m}}$-regret minimizer}
\label{alg:main}
\SetKwInOut{Input}{Input}
\SetKwInOut{Output}{Output}
\SetKw{Input}{Input:}
\SetKw{Output}{Output:}
\Input{
    \begin{itemize}[noitemsep,topsep=0pt]
        \item Convex and compact set $\cX \subset \R^d$ in isotropic position
        \item $k$-dimensional set $\Phi_{\vec{m}}$ with respect to $\vec{m} : \cX \to \R^{k'}$, where $k = k' \cdot d$
        \item time horizon $T \in \N$
    \end{itemize}
}
\Output{An efficient $\Phi_{\vec{m}}$-regret minimizer for $\cX$}\\
Set the learning rate $\eta \propto \frac{1}{\sqrt{T}}$ and $\epsilon = \nicefrac{1}{\poly(k,T)}$ to be sufficiently small\\
Set $\mu^{(1)} \in \Delta(\cX)$ and $\mK^{(1)} \defeq \mI_{d \times k'}$ to be the identity\\
Initialize $\cM \defeq \cB_{D}(\vec{0})$ for a large enough $D \leq \poly(k)$\\
 \For{$ t=1, \dots, T$} {
    Output $\mu^{(t)} \in \Delta(\cX)$ and receive as feedback a concave utility function $u^{(t)}$\\
    Define $\R^{d \times k'} \ni \mU^{(t)} \defeq \E_{\vx \sim \mu^{(t)}} \nabla u^{(t)}( \mK^{(t)} \vec{m}(\vx))  \otimes \vec{m}(\vx)$ \\
    Set $\tilPhi^{(t+1)}, \mK^{(t+1)}, \mu^{(t+1)} \defeq \shellproj_{\Phi}(\mK^{(t)} + \eta \mU^{(t)})$ with input $\cM$ and precision $\epsilon$, where $\mu^{(t+1)} \in \Delta(\cX)$ is an $\epsilon$-concave expected fixed point of $\vx \mapsto \mK^{(t+1)} \vec{m}(\vx)$ 
 }
\end{algorithm}

\begin{proof}
    Fix some time $t \in [T]$. Let $u^{(t)}$ be the concave utility function, $\phi \in \Phi_{\vec{m}}$ be any comparator with $\phi = \mat{K} \vec{m} (\vx)$, $\mu^{(t)} \in \Delta(\cX)$ be the distribution played by the algorithm, and $\mK^{(t)}$ be the transformation output by $\shellproj$. We begin by bounding the term $\E_{\vx \sim \mu^{(t)}} \left[ u^{(t)}(\phi(\vx)) - u^{(t)}(\vx) \right]$, which can be written as
    \begin{equation}
        \E_{\vx \sim \mu^{(t)}} \left[ u^{(t)}(\phi(\vx)) - u^{(t)}(\mK^{(t)} \vec{m}(\vx)) \right] + \E_{\vx \sim \mu^{(t)}} \left[ u^{(t)}(\mK^{(t)} \vec{m}(\vx)) - u^{(t)}(\vx) \right]. \label{eq:regret_split}
    \end{equation}

    To bound the second term in \eqref{eq:regret_split}, we use the fact that $\mu^{(t)}$ is an $\epsilon$-concave expected fixed point of the function $ \vx \mapsto \mK^{(t)} \vec{m}(\vx)$. By definition, we have
    \begin{equation}
        \E_{\vx \sim \mu^{(t)}} \left[ u^{(t)}(\mK^{(t)} \vec{m}(\vx)) - u^{(t)}(\vx) \right] \leq \epsilon. \label{eq:fixed_point_bound}
    \end{equation}

    Next, to bound the first term in~\eqref{eq:regret_split}, we make use of the concavity of $u^{(t)}$. Specifically,
    \begin{align}
        \E_{\vx \sim \mu^{(t)}} [u^{(t)}(\phi(\vx)) - u^{(t)}(\mK^{(t)} \vec{m}(\vx)) ] &\leq \E_{\vx \sim \mu^{(t)}} \left\langle \nabla u^{(t)}(\mK^{(t)} \vec{m}(\vx)), \phi(\vx) - \mK^{(t)} \vec{m} (\vx) \right\rangle \notag \\
        &= \left\langle \E_{\vx \sim \mu^{(t)}} \nabla u^{(t)}(\mK^{(t)} \vec{m}(\vx)) \otimes \vec{m}(\vx), \mK - \mK^{(t)} \right\rangle. \label{eq:linearization}
    \end{align}
    The first term in the inner product corresponds exactly to the linear feedback vector $\vec{U}^{(t)}$ fed into $\shellgd$. As a result, combining~\eqref{eq:linearization} and \eqref{eq:fixed_point_bound} with~\eqref{eq:regret_split}, and summing over all $T$ rounds, it follows that the $\Phi_{\vec{m}}$-regret of \Cref{alg:main} can be upper bounded by $\Reg_{\Phi_{\vec{m}}}(T) + \epsilon T$, and the claim follows.
\end{proof}

\end{document}